\documentclass[11pt]{article}
\usepackage{graphicx,amssymb,amsmath,amsthm,tensor,empheq}

\usepackage{authblk}
\usepackage{mathtools}

\usepackage[all,cmtip]{xy}
\usepackage[usenames, dvipsnames]{color}
\usepackage[svgnames]{xcolor}
\usepackage[colorlinks,citecolor=RoyalBlue, urlcolor=RoyalBlue, linkcolor=RoyalBlue ]{hyperref}

\usepackage{comment}

\usepackage{url}

\usepackage{sseq}

\makeatletter

\usepackage{esint}

\newlength{\fighskip} \fighskip=2pt
\newlength{\figvskip} \figvskip=3pt

\newcommand{\be}{\begin{equation}}
\newcommand{\ee}{\end{equation}}
\newcommand{\bea}{\begin{eqnarray}}
\newcommand{\eea}{\end{eqnarray}}

\newcommand*{\wideboxed}[1]{\setlength{\fboxsep}{1ex}%
  \fbox{\m@th$\displaystyle#1$}}

\def\ubrace#1_#2{%
  \underbrace{#1}_{\hb@xt@\z@{\hss$\scriptstyle#2$\hss}}}

\newcommand\bdot{\mathbin{\mathpalette\bdot@{0.5}}}
\newcommand*\bdot@[2]{\vcenter{\hbox{\scalebox{#2}{$\m@th#1\bullet$}}}}

\newcommand{\hgf}{%
\,\tensor[_{2\kern-1.2pt}]{F}{_{\kern-0.8pt 1}}\kern-1.2pt}

\newcommand{\blangle}{\bigl\langle}
\newcommand{\brangle}{\bigr\rangle}
\newcommand{\dlangle}{\langle\kern-1.5pt\langle}
\newcommand{\drangle}{\rangle\kern-1.5pt\rangle}
\newcommand{\bdlangle}{\blangle\kern-3pt\blangle}
\newcommand{\bdrangle}{\brangle\kern-3pt\brangle}

\renewcommand{\ge}{\geqslant}

\DeclareMathOperator{\SO}{SO}

\DeclareMathOperator{\SU}{SU}

\def\ie{i.e.\ }

\def \H{\operatorname{H}}
\def\Z{{\mathbb{Z}}}
\def\TP{\mathrm{TP}}
\def\Sq{\mathrm{Sq}}
\def\B{\mathrm{B}}
\newcommand{\Spin}{{\rm Spin}}
\newcommand{\String}{{\rm String}}
\newcommand{\U}{{\rm U}}
\newcommand{\rF}{{\rm F}}
\newcommand{\C}{\mathbb{C}}
\newcommand{\R}{\mathbb{R}}
\newcommand{\cG}{ {\cal G} }

\newcommand{\eq}[1]{eq.~(\ref{#1})} 
 
\newcommand{\eqn}[1]{eqn.~(\ref{#1})} 
\newcommand{\Eqn}[1]{Eqn.~(\ref{#1})} 
\newcommand{\Sec}[1]{Sec.~\ref{#1}} 
\newcommand{\ii}{\hspace{1pt}\mathrm{i}\hspace{1pt}}
\newcommand{\dd}{\mathrm{d}}
\newcommand{\prt}{\partial} 
\newcommand{\cblue}[1]{\textcolor{black}{#1}}

\newtheorem{theorem}{Theorem}[section]

\newtheorem{remark}[theorem]{Remark}
\newtheorem{proposition}[theorem]{Proposition}

\usepackage[numbers,sort&compress]{natbib}

\newcommand{\Refe}[1]{Ref.~[\citenum{#1}]}

\newcommand{\Fig}[1]{Fig.~\ref{#1}}

\newcommand{\SM}{{\rm SM}}
\newcommand{\q}{{\rm q}}

\makeatother

\title{ \fontsize{24}{24}\selectfont{
Fermion Families and Pontryagin Class: 
} 
\\[2mm]
\fontsize{20}{20}\selectfont{Topological Field Theory
via Color Symmetry Extension
}}

\author[1]{\fontsize{14}{13}\selectfont
Zheyan Wan\footnote{wanzheyan@bimsa.cn}}

\author[2]{\fontsize{14}{13}\selectfont
Juven Wang\footnote{jw@lims.ac.uk \href{http://www.sns.ias.edu/~juven/}{http://idear.info}}}

\author[3,1]{\fontsize{14}{13}\selectfont
Shing-Tung Yau\footnote{styau@tsinghua.edu.cn}}

\affil[1]{\fontsize{10}{10}\selectfont
Beijing Institute of Mathematical Sciences and Applications, Beijing 101408, China}

\affil[2]{\fontsize{10}{10}\selectfont
London Institute for Mathematical Sciences, Royal Institution, W1S 4BS, UK}

\affil[3]{\fontsize{10}{10}\selectfont
Yau Mathematical Sciences Center, Tsinghua University, Beijing 100084, China}

\date{} 

\begin{document}

\setcounter{tocdepth}{2}

\maketitle                         
\begin{abstract}


We study 4-dimensional fermionic anomalies with discrete $\mathbb{Z}_n$ symmetry, classified by the 5d
 spin bordism group.
We show that only the anomaly from the group-cohomology subclass  $\H^5(\mathbb{Z}_n,\U(1))\cong \mathbb{Z}_n$ can be canceled by an anomalous $\mathbb{Z}_n$-symmetric 4d $\mathbb{Z}_n$-gauge topological quantum field theory (TQFT), while the beyond-group-cohomology anomaly involving 
generic $A_{\Z_n}p_1$ with Pontryagin class cannot be trivialized by any finite group extension (except $n=2,3$).
More generally, we prove that any cocycle $\alpha_d \in 
\H^d(\mathbb{Z}_n,\U(1))$ in odd spacetime dimension $d\ge3$ is trivialized by the symmetry extension
$
1 \to \mathbb{Z}_n \to \mathbb{Z}_{n^2} \to \mathbb{Z}_n \to 1,
$
and we construct explicitly the corresponding symmetric anomalous boundary TQFT.

As an application, to provide a nonperturbative global anomaly cancellation mechanism for an implication of the structure of the Standard Model (SM),
 we construct a 4d $\mathbb{Z}_N$-gauge TQFT that cancels the mixed discrete baryon-plus-lepton  $(\mathbf{B}+\mathbf{L})$-gauge-gravitational global anomaly of the generalized SM 
 with $N_f$ families  and $N_c$ colors, 
in the absence of $N_f$ families of ``sterile'' right-handed neutrinos $\nu_R$. 
In particular,
for $d=5$ and $n=3$, a 
$\mathrm{Spin}\times\mathbb{Z}_3$-symmetric 4d $\mathbb{Z}_3$-gauge TQFT can replace the 3 families of $\nu_R$ but still preserve the
3-family SM's $\Z_{6,{{\bf B} + {\bf L}}}^\rF$ symmetry. 
In general,
a 4d anomalous $\Spin \times_{\Z_2^\rF} \Z_{2 N_f,{{\bf B} + {\bf L}}}^\rF$ symmetric $\mathbb{Z}_N$-gauge 
 TQFT 
 can replace the $N_f$ families of $\nu_R$,
 via an appropriate
{\bf \emph{anomaly-trivialization symmetry extension}} construction
$$1 \to \mathbb{Z}_N\to \Spin\times \mathbb{Z}_{N N_f}\to 
\Spin \times_{\Z_2^\rF} \Z_{2 N_f}^\rF \to 1$$
of anomalous topological order
\cite{Wang2017locWWW1705.06728}. 
If $N$ and $N_f$ are minimal nonzero positive integers,
then we find minimal  extensions:
$$\left\{
\begin{array}{llll }
N=1,&N_f\ge 1,& 2 \nmid N_f,& 3 \nmid  N_f.\\ 
N=3,&N_f\ge 3,& 2 \nmid N_f,& 3 \mid  N_f.\\ 
N=4,&N_f\ge 2,& 2 \mid N_f,& 3 \nmid  N_f.\\ 
N=12,&N_f\ge 6, & 2 \mid N_f,& 3 \mid  N_f.
\end{array}\right.$$
The above symmetry
extension also constrains that if
$N_f$ is odd, then the minimal $N$ is odd.
Further, 
Witten anomaly constrains the SM 
such that 
if $N_f$ is odd, then $N_c$ must be odd.
If and only if $N_c$ and $N_f$ are odd,
the anomaly-trivialization symmetry extension construction can coincide with the baryon 
$\bf B$ to quark $\bf Q$ 
{\bf \emph{ color-center symmetry extension}}
$$
1 \to \Z_{N_c}\to
\Spin \times_{\Z_2^\rF} \Z_{2N_cN_f, {\bf Q} +N_c {\bf L}}\to 
\Spin \times_{\Z_2^\rF} \Z_{2 N_f,{\bf B +L}}^\rF\to 1.
$$
If and only if $N_c$ is odd,
then SM baryons are fermions.
So only 
with $N$, $N_c$, 
and $N_f$ all odd integers,
if further assuming $N=N_c$,
then 
$\Z_N$-gauge TQFT can coincide
with the $\SU(N_c)$ color gauge group center $Z(\SU(N_c))$. 
We prove that 3 families and 3 colors, $N=N_f=N_c=3$, is the unique minimal 
solution to have an anomalous
$\mathbb{Z}_N$-gauge TQFT 
matching the anomaly of $N_f$ of $\nu_R$.
We also prove  
the identity $A_{\mathbb Z_3}p_1=0 \mod 3$ on oriented manifolds, and $A_{\mathbb Z_2}p_1=0 \mod 2$ \emph{only} on oriented spin manifolds; for $n>3$, the analogous  $A_{\mathbb Z_n}p_1=0 \mod n$ is false in general.




\end{abstract}


\tableofcontents
\newpage

\section{Introduction}

\begin{figure}[!h]
\centering
\includegraphics[height=.5\columnwidth]{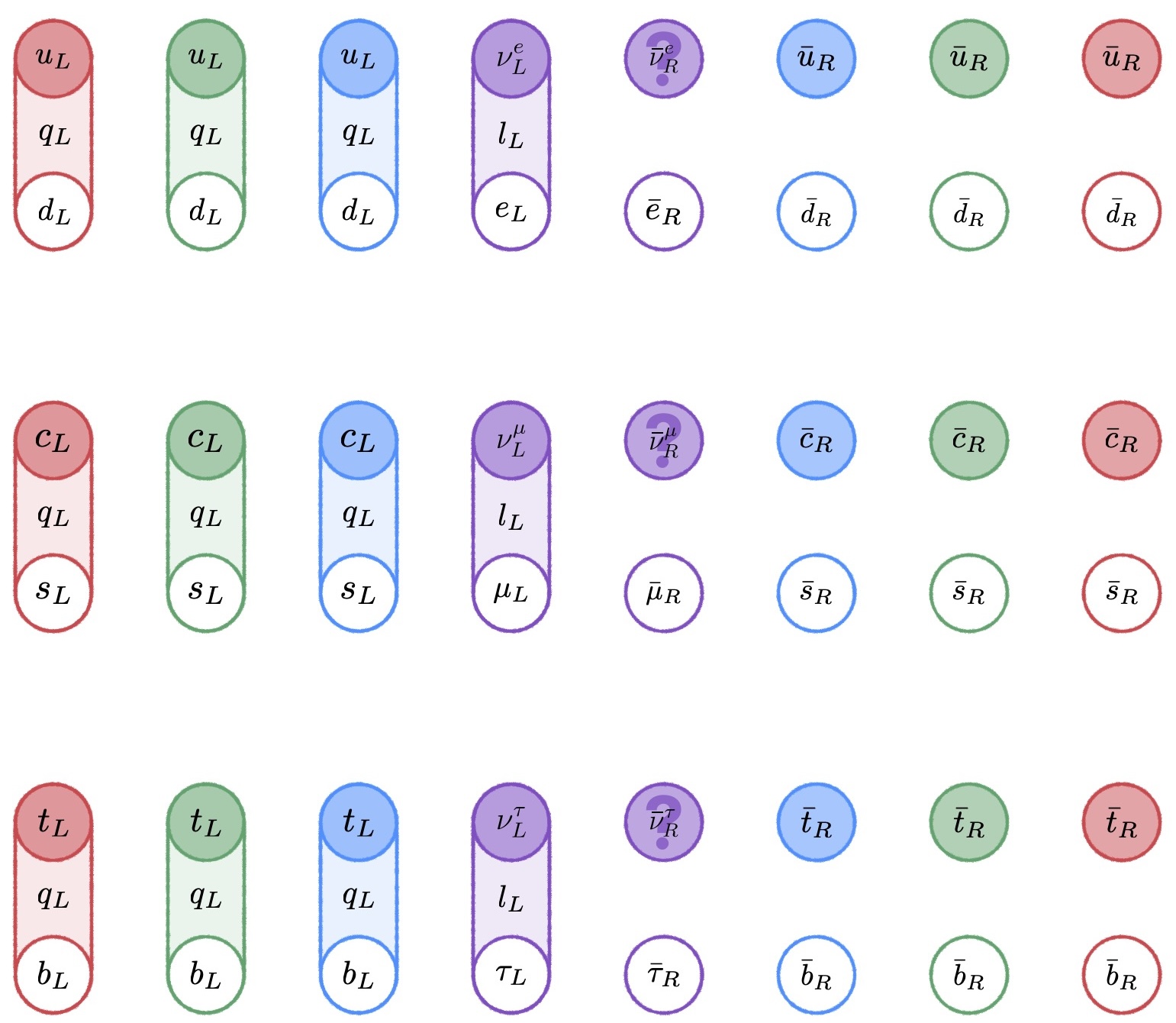}
\label{fig:3-family}
\caption{Three families ($N_f=3$) of quarks and leptons in the Standard Model (SM) of 3+1d spacetime dimensions, drawn in terms of left-handed Weyl fermions (namely each colored disk represents a  $\mathbf{2}^\C_L$ Weyl fermion representation in the 
Lorentz group Spin(1,3)).
There are 
15 Weyl fermions per family in the SM, possibly missing 1 Weyl fermion out of 16 per family,
due to the lack of ``sterile'' right-handed neutrinos $\bar{\nu}_R^e,
\bar{\nu}_R^\mu$, and $\bar{\nu}_R^\tau$ in question marks.
Following \Refe{Wang:2025oow2502.21319}'s  
proposal, we explore 
 a $N_c=3$-color symmetry extension of the discrete baryon-plus-lepton (${\bf B +L}$) symmetric gapped quantum topological order that can replace the $N_f=3$ families of massive sterile neutrinos.
More precisely, a $\Z_{N_c=3}$-gauge
fermionic 
$\Z_{2N_f=6,{\bf B +L}}^\rF$-symmetric anomalous gapped topological order
can replace the three
sterile neutrinos, with
$\Z_{2N_f=6,{\bf B +L}}^\rF = \Z_2^\rF \times \Z_{N_f=3,{\bf B +L}}$.
Namely, the nonperturbative global mixed gauge-gravitational anomaly of
$\Z_{2N_f=6,{\bf B +L}}^\rF$
symmetry of the 
45-Weyl-fermion SM (missing the 3 sterile neutrinos) can be canceled in a
full $\Z_{2N_f=6,{\bf B +L}}^\rF$-symmetry-preserving manner,  
without adding 
any massive sterile neutrinos, instead 
canceled by
including a massively gapped beyond-the-Standard-Model (BSM) \emph{topologically ordered dark sector} of the 3+1d symmetric anomalous gapped topological order with an underlying 
3+1d $\Z_{N_c=3}$-gauge
$\Z_{2N_f=6,{\bf B +L}}^\rF$-symmetric anomalous
topological quantum field theory at low energy.
The $\Z_{N_c}$ by itself is already anomaly-free and dynamically gauged.
The $\Z_{2N_f,{\bf B +L}}^\rF$ symmetry can be made anomaly-free for the combined system of the (3+1)d 45-Weyl-fermion SM 
and the (3+1)d  \emph{topologically ordered dark sector} together, thus 
$\Z_{2N_f,{\bf B +L}}^\rF$
also becomes 
dynamically gaugeable.
%
%
}
\end{figure}

One of the long-standing mysteries of theoretical physics is the origin of the three-family structure of fermion replication in the Standard Model (SM): quarks and leptons appear in exactly three families with identical gauge quantum numbers but differing masses and mixings,  observed in particle physics since the 1970s \cite{1977Harari}. While the SM itself places no restriction on the number of fermion families or generations, experimental observations indicate the family number $N_f=3$, a fact often referred to as the Family Puzzle or Generation Problem. Understanding whether this number $N_f=3$ is accidental or enforced by deeper consistency conditions remains an open theoretical puzzle 
in high-energy particle physics.

Experimentally, several independent complementary
lines of evidence establish that the number of fermion families in the SM is 
$N_f=3$ \cite{ParticleDataGroup:2024cfk}, which constrains different possible meanings of an additional generation: 

\begin{enumerate}

\item {\bf Quark-sector evidence and the completion of three observed quark families}:

The Kobayashi--Maskawa mechanism for CP violation requires at least three generations of quarks in the Standard Model \cite{Kobayashi:1973fv}. The subsequent discovery of the top quark by the CDF and D0 collaborations completed the third observed quark family \cite{CDF:1995wbb,D0:1995jca}. Thus the known quark spectrum realizes three observed quark families.

\item {\bf Invisible 
Z-width and the number of light active neutrinos}:

   Precision measurements of the invisible decay width of the $Z$ boson at LEP determine the number of light neutrino species, with masses below ($m_Z/2$), to be consistent with three \cite{OPAL:2000wza,ALEPH:2005ab}. Since each observed SM lepton family contains one left-handed neutrino, this provides direct evidence for three light active lepton families. This statement should be understood as a constraint on light neutrinos coupled to the $Z$, not by itself as a model-independent exclusion of a heavier fourth neutral lepton. A recent ATLAS measurement of the invisible $Z$ width at ($\sqrt{s}=13$) TeV is also consistent with the LEP determination and the SM prediction with three neutrino generations \cite{ATLAS:2023zrv}.

\item {\bf Electroweak precision and flavor constraints}:

   Additional sequential SM-like fermions are also constrained by electroweak precision observables, including $Z$-pole measurements, oblique corrections, and the consistency of flavor data such as CKM unitarity \cite{ALEPH:2005ab}. These measurements strongly restrict additional fermions with the gauge quantum numbers of a sequential SM family.

\item {\bf Higgs signal strengths and the exclusion of a minimal Higgs-coupled fourth chiral fermion generation}: 
   A crucial constraint on a heavy sequential fourth generation comes from Higgs physics. Here by a chiral fermion generation we mean a complete SM-like generation of quarks and leptons whose left- and right-handed Weyl fermions transform differently under the electroweak gauge group $(\SU(2)_L\times \U(1)_Y)$, so that their masses arise from Yukawa couplings to the SM Higgs rather than from gauge-invariant bare Dirac mass terms. If such additional fermions obtain their masses from the SM Higgs, their Yukawa couplings grow with their masses, and their loop contributions to Higgs production and decay are non-decoupling. In particular, additional heavy quarks would substantially enhance gluon-fusion Higgs production and modify loop-induced Higgs decay channels, such as ($h\to\gamma\gamma$), as well as the inferred Higgs branching ratios into electroweak gauge bosons. The observed Higgs signal strengths therefore strongly exclude the minimal perturbative one-Higgs-doublet Standard Model with a fourth sequential chiral fermion generation \cite{Eberhardt:2012sb,Lenz:2013iha}.

\item {\bf Cosmological support for three light relativistic neutrino species}: 

   Big-Bang Nucleosynthesis and Cosmic Microwave Background measurements constrain the effective number of relativistic species, 
   ($N_{\rm eff}$), 
   and are consistent with the Standard Model expectation from three light active neutrinos \cite{Steigman:2012ve,Fields:2014uja,Planck:2018vyg}. As with the invisible (Z)-width, these cosmological constraints primarily apply to light relativistic degrees of freedom.

\item {\bf Caveat on exotic non-Higgs mass-generation mechanisms}: 

   The preceding constraints establish the existence of three observed light SM fermion families and exclude a minimal Higgs-coupled fourth sequential chiral fermion generation. They do not, however, rule out every logically possible exotic scenario. For example, if a hypothetical additional full family did not obtain its mass from the SM Higgs and were instead gapped by a symmetry-preserving strong-dynamics mechanism such as Symmetric Mass Generation \cite{Wang:2022ucy2204.14271, Wang:2022ucy}, it would fall outside the assumptions of the minimal perturbative fourth-generation model constrained by Higgs non-decoupling.

\end{enumerate}

Theoretically, renewed attention to this problem has been prompted by proposals based on topological and nonperturbative global anomaly constraints, suggesting that the $N_f=3$ family structure may follow from fundamental mathematical consistency requirements rather than from model-dependent dynamics.
For example, 
Ref.~\cite{GarciaEtxebarriaMontero2018ajm1808.00009}
provides some interesting 
hints of the global anomalies from 
homotopy group or cobordism group constraints.
 There are two recent proposals based on topological constraints that have drawn our attention \cite{Wang:2023tbj2312.14928, 
Wang:2025oow2502.21319}.
\Refe{Wang:2023tbj2312.14928, 
Wang:2025oow2502.21319}
approach this problem in particle physics using tools from topology and topological quantum field theory (TQFT, which has \emph{no local point operators} but only \emph{extended operators}), which have been rapidly developed in recent years to describe topological quantum matter \cite{Wen2016ddy1610.03911}, thereby going beyond conventional model-building frameworks in particle physics
--- in other words, thinking outside the box of conventional particle approaches.

\begin{enumerate}
\item \Refe{Wang:2023tbj2312.14928} proposes that
when the family number $N_f$ is a multiple of 3,
\bea
N_f = 0 \mod 3, \text{ namely,} \quad N_f \in 3 \Z,
\eea
the multiple of 3 families of 16 Weyl fermions per family/generation in the SM, 
with total $(N_f=3) \times 16 =48$ Weyl fermions in the 3+1d spacetime dimensions,
are topologically constrained. 
This is due to 
\begin{enumerate}
\item 
Modular Invariance \cite{DiFrancesco:1997nkCFTConformalFieldTheory}:
The dimensional-reduced 1+1d theory from 3+1d has a chiral central charge
\bea
c_- = c_L - c_R = \frac{ N_f \times 16}{2}=  
\frac{N_f}{3} \times 48 = 0 \mod 24.
\eea
\item 
Hirzebruch signature \cite{milnor1974characteristic, Hirzebruch1966}: for a spacetime 4-manifold with a special orthogonal (SO) structure and purely bosonic gauge-invariant matter content, the signature 
$\sigma(M)$ of the manifold $M$ and its first Pontryagin class $p_1(TM)$ \cite{Pontrjagin1946, Pontrjagin1947, milnor1974characteristic}
of the tangent bundle $TM$
follow an integer-quantized relation
$
\sigma(M)=\frac{\langle p_1(TM),[M]\rangle}{3} \in \Z.
$. See more elaboration in 
\Sec{subsec:Pontryagin}.
\item Rokhlin's theorem \cite{rokhlin1952new}: for a 
spacetime 4-manifold with
a Spin structure (the fermion parity $\Z_2^\rF$ graded lift of the special orthogonal group SO, 
so $\Spin/\Z_2^\rF=\SO$), 
and fermionic gauge-invariant matter content, the signature 
$\sigma(M)$ of the manifold $M$ becomes
$\sigma(M)=\frac{\langle p_1(TM),[M]\rangle}{3} \in  16\Z.$ 
\item Cobordism mapping: 
The 48 Weyl fermions of the Standard Model, organized according to the bosonic SO and fermionic Spin structures, can be mapped to a trivial class in String cobordism (related to the framing anomaly-free \cite{Witten1988hfJonesQFT}), or more generally to a trivial class in the
$w_1$-$p_1$ cobordism (related to the 2-framing anomaly-free \cite{atiyah1990framings}). This mapping argument holds independently of any internal global symmetry or gauge structure of the SM.  
Thus, this argument \cite{Wang:2023tbj2312.14928}
may hold robustly, even if we destroy all the internal or gauge structure of the SM.

\end{enumerate}

Namely, the observation in \Refe{Wang:2023tbj2312.14928} is primarily a gravitational anomaly argument (in the dimensionally reduced 1+1d theory) or corresponds to a trivial cobordism class in a consistent quantum gravity theory \cite{McNamara1909.10355:2019rup}. 
It concerns a 
nonperturbative global gravitational anomaly argument instead of 
a perturbative local gravitational anomaly argument. 

\item \Refe{Wang:2025oow2502.21319}
proposes a unique interplay between the family and color numbers, with $N_f=N_c=3$.
This approach introduces an additional internal $\Z_3$ symmetry, naturally arising from the discrete Baryon-plus-Lepton ${\bf B +L}$ symmetry 
$\Z_{6,{\bf B +L}}^\rF$ in the
3+1d SM \cite{KorenProtonStability2204.01741, WangWanYou2204.08393}, including the fermion parity $\Z_2^\rF$ subgroup,
\bea
\Z_{6,{\bf B +L}}^\rF= \Z_2^\rF \times \Z_{3,{\bf B +L}}.
\eea

In the absence of the three right-handed sterile neutrinos $\nu_R$, the Standard Model exhibits a mixed 
${\bf (B +L)}$-gauge–gravitational nonperturbative global anomaly. The corresponding \emph{nonperturbative global anomaly index} for this 3+1d SM
as a 4d quantum field theory (QFT)
is $N_f = 3$ (up to a $\pm$ sign),
captured by a 5d bordism group
$\Omega_5^{\rm Spin\times\mathbb{Z}_3}$ class
that classifies the 
${\rm Spin\times\mathbb{Z}_3}$-structure 5d manifold up to cobordism relation:
$$
N_f = 3 \in \Z_9 =\Omega_5^{\rm Spin\times\mathbb{Z}_3}.
$$
The anomaly index $N_f = 3$
can be canceled via an appropriate
color center symmetry extension 
via
\bea 
\label{eq:Z3-extension}
1 \to \Z_{N_c=3}\to  \Z_{N_cN_f=9}\to  \Z_{N_f=3}\to 1.
\eea
Here the $\Z_{N_c=3}=Z(\SU(N_c=3))$ corresponds to the center of the color gauge group SU(3) of quantum chromodynamics (QCD).
Here the symmetry extension refers to
a particular \Refe{Wang2017locWWW1705.06728}'s
symmetry extension 
(e.g., group extension) 
construction of a {\bf \emph{symmetric gapped topological order  with a low-energy TQFT}}.\footnote{Note that the 
3+1d symmetric gapped topological order (with a low-energy 3+1d TQFT) 
of a 4+1d bulk invertible topological field theory
is a one-higher-dimensional version of the 2+1d symmetric surface topological order of the 3+1d bulk 
Symmetry-Protected Topological states (SPTs) studied in condensed matter (reviewed in \cite{Senthil1405.4015}).
Generating the symmetric topological order energy gap is sometimes known as {\bf Topological Mass Generation} that matches with a nontrivial anomaly index (\emph{without} employing any massive fermions), 
distinct from the {\bf Symmetric Mass Generation} that matches with no anomaly index in an anomaly-free system \cite{Wang:2022ucy2204.14271}.}
By \Refe{Wang2017locWWW1705.06728},
we trivialize the nontrivial anomaly index in $G = \Spin \times \Z_{N_f=3}$ by pulling it back to $G_{\rm Tot} = \Spin \times \Z_{N_cN_f=9}$ as a trivial anomaly class in 
$G_{\rm Tot}$. 

This means that the Standard Model without the $3\nu_R$
can still preserve the full $\Z_{6,{\bf B +L}}^\rF$ symmetry
by replacing $3\nu_R$ by a 
$\Z_{6,{\bf B +L}}^\rF$-symmetric
gapped topological order with a finite gauge $\Z_{N_c=3}$ TQFT at 
low-energy.
This demonstrates the uniqueness of 
\bea
N_f = N_c = 3.
\eea 
This corresponds to
 $N_f=3$ families with $N_c=3$ colors, which represent the number of quarks in a baryon {\bf B}. 
Note that
this 
$\Z_{6,{\bf B +L}}^\rF= \Z_2^\rF \times \Z_{3,{\bf B +L}}$ symmetry 
\cite{KorenProtonStability2204.01741, WangWanYou2204.08393}
is independent of the choice of the SM gauge group (see \cite{Tong2017oea1705.01853}
for an explanation of $G_{\SM_\q}$),
$$G_{\SM_\q} \equiv \frac{\SU(3) \times   \SU(2) \times \U(1)_{\tilde Y}}{\Z_\q},
\quad \quad 
\q=1,2,3,6.
$$ 
Consequently, the argument in \cite{Wang:2025oow2502.21319} that 
$N_f = N_c = 3$
is also independent of the choice of the SM gauge group
$G_{\SM_\q}$, which holds for any 
$\q=1,2,3,6$.

\end{enumerate}

In this work, on one hand, we follow the setup in \Refe{Wang:2025oow2502.21319}, to prove some of its observations differently and more mathematically; on the other hand,
we obtain some generalized theorems
and we derive some general topological constraints for the hidden topologically ordered sector
of the generic $N_f$-family $N_c$-color generalized SM.

We note that some previous works have also invoked potential nonperturbative global anomalies to constrain the 
$N_f=3$ families of the SM, but their methodology differs fundamentally from ours:
\begin{itemize}
    \item
    \Refe{Dobrescu:2001ae0102010}
uses the 6d homotopy group analysis, 
$\pi_6(\SU(2)) = \Z_{12}$, $\pi_6(\SU(3)) = \Z_6$ and $\pi_6(G_2) = \Z_3$,
to argue the nonperturbative global anomaly constraints from 6d to the 4d SM.
However, we find that the cobordism classification of nonperturbative global anomaly constraints shows $\Omega_7^{\Spin \times G_{{\rm SM}_{\rm q}}} = 0$ vanishes \cite{Davighi:2020kok2012.11693, HAHSIV},
thus this means
no 6d nonperturbative global anomalies
to constrain the 4d SM.
    \item
\Refe{GarciaEtxebarriaMontero2018ajm1808.00009}
introduces an additional $\Z_3$ symmetry with a $\Z_9$ nonperturbative global anomaly in 4d, motivated by baryon triality or proton hexality. 
However, this extra $\Z_3$ symmetry 
relies on the structure of the more sophisticated supersymmetric Standard Model.
Instead, 
\Refe{Wang:2025oow2502.21319}
and our present work
consider the simpler 
discrete 
$\Z_{3,{\bf B +L}}$ 
arising from the universal discrete 
${\bf B +L}$
symmetry of the non-supersymmetric 
SM \cite{KorenProtonStability2204.01741, WangWanYou2204.08393}.
\end{itemize}
In comparison, we believe that the $\Z_{3,{\bf B +L}}$ symmetry with a $\Z_9$ nonperturbative global anomaly \cite{Wang:2025oow2502.21319}
offers a more robust and trustworthy argument than the 3-family arguments in
the older literature \Refe{GarciaEtxebarriaMontero2018ajm1808.00009, Dobrescu:2001ae0102010}.

In practice, our present work studies fermionic anomalies associated with discrete $\Z_n$ symmetry in 3+1 dimensions \cite{1808.02881, 2506.19710}. Our focus is on distinguishing which anomaly classes can be canceled by anomalous $\Z_{n'}$-symmetric gauge TQFTs through symmetry extension constructions, and which cannot;
 for certain choices of $n$ and its corresponding $n'$. 
Similar questions about 3+1d symmetric anomalous fermionic TQFTs are explored in
 \cite{1808.02881, GuoJW1812.11959, Cordova1912.13069, Cheng:2024awi2411.05786,
Decoppet:2025eic2509.10603, 
Debray:2025kfg2510.24834, 
2512.25038, Debray:2026sqw-2602.12335}.\footnote{To clarify,
whenever we say ``fermionic'' in 
fermionic anomalies and 
 fermionic TQFTs, we mean that there are gauge-invariant fermions such that
 fermion parity $\Z_2^\rF$ is part of the internal global symmetry, namely the Spin group is part of the spacetime-internal symmetry group. 
 So fermionic TQFTs mean the TQFTs that has the Spin group symmetry such as spin TQFTs; in contrast with bosonic TQFTs that are non-spin TQFTs.}
 
Readers may also notice that on the physical mathematics side,  recent papers explore the relation between the $p_1$ structure (also the $w_1$-$p_1$ structure),
 gravitational Chern-Simons theory, and 
  topological field theories in
 \cite{2601.05518, Freed:2026fcn2603.11291},
 where  Ref.~\cite{2601.05518} further points out a relation to 
 dualizable tensor categories
\cite{Douglas:2013aea1312.7188}.

\subsection{Main results}

The main results of our present article are as follows:
\begin{enumerate}
\item
    We show that the group-cohomology subclass $\H^5(\Z_n,\U(1))\cong\Z_n$ anomaly can be canceled by 3+1d $\Z_n$-gauge TQFTs, while beyond-group-cohomology contributions involving the Pontryagin class $p_1$ generally cannot. Namely, 
    $$A_{\Z_n}(\beta_{(n,n)}A_{\Z_n})(\beta_{(n,n)}A_{\Z_n})
    $$ 
    can be trivialized via the symmetry extension  
    \bea \label{eq:Zn^2}
1\to\Z_n\to\Z_{n^2}\to\Z_n\to1,
\eea
    while
    $$A_{\Z_n}p_1$$ cannot be trivialized by any finite group extension except $n =2$ or $n=3$. Here, $A_{\Z_n}$ is the generator of $\H^1(\Z_n,\U(1))\cong \H^1(\Z_n,\Z_n)$ and $\beta_{(n,n)}:\H^*(-,\Z_n)\to\H^{*+1}(-,\Z_n)$ is the Bockstein homomorphism. See Appendices \ref{sec:trivialized} and \ref{sec:generic}.
\item
    More generally, for odd spacetime dimensions $d\ge3$, we prove that any cocycle $\alpha_d\in\H^d(\Z_n,\U(1))$ can be trivialized via the symmetry extension \eqref{eq:Zn^2},
and we explicitly construct the corresponding anomalous boundary 
symmetry-extended 
$(d-1)$d TQFTs via the explicit $(d-1)$-cochain
$\tilde{\beta}_{d-1}$ solution found in
Appendix \ref{sec:cochain}.
Following \Refe{Wang2017locWWW1705.06728}'s symmetry-extension approach,
we derive the explicit $(d-1)$-cochain $\tilde{\beta}_{d-1}$
that splits the $d$-cocycle
$\tilde{\alpha}_d=\delta\tilde{\beta}_{d-1}$
as a coboundary
in $\H^d(\Z_{n^2},\U(1))$
 by the above symmetry extension \eq{eq:Zn^2} for 
 any odd $d\ge3$ and any $n\ge2$.

\item
    As a physical application, for $d=5$ and $n=3$, we construct a
    $\Spin\times\Z_3$-symmetric 4d $\Z_3$-gauge TQFT that cancels the
    mixed discrete $(\mathbf{B}+\mathbf{L})$-gauge-gravitational anomaly
    of the Standard Model in the absence of 3 right-handed neutrinos,
    $\nu_{R,e}, \nu_{R,\mu}$,
    and $\nu_{R,\tau}$.
    More precisely, the statement $A_{\Z_3}p_1=0\mod3$ is used here as a
    mod-$3$ characteristic-class statement for the corresponding 5d
    anomaly invariant. It removes the possible Pontryagin-class
    obstruction in this $n=3$ branch, but it is not a claim that a generic
    $\U(1)$-descended mixed gravitational anomaly phase, or a generic
    beyond-group-cohomology anomaly $A_{\Z_n}p_1$, is trivial. After this
    mod-$3$ $p_1$ contribution vanishes, the remaining anomaly lies in the
    group-cohomology subclass generated by
    $A_{\Z_3}(\beta_{(3,3)}A_{\Z_3})(\beta_{(3,3)}A_{\Z_3})$. This
    remaining $\Z_3$ anomaly is trivialized by
    \bea
    1\to\Z_3\to\Z_9\to\Z_3\to1,
    \eea
    and is matched by the anomalous boundary $\Z_3$-gauge TQFT.

\item
    For a generic $N_c$-color and $N_f$-family SM, in the absence of
$N_f$ sterile right-handed neutrinos $\nu_R$, we separate the
{\bf anomaly-trivializing symmetry extension} from its possible {\bf color-center
symmetry extension}
interpretation. The anomaly is canceled by a 4d anomalous
$\Spin\times_{\Z_2^\rF}\Z_{2N_f,{\bf B}+{\bf L}}^\rF$-symmetric
$\Z_N$-gauge TQFT. 

\begin{enumerate}
    
\item 
{\bf Anomaly-trivializing symmetry extension}:
The relevant cyclic symmetry extension is
\bea 
\label{eq:NcNf-extension}
1 \to \Z_N\to\Z_{NN_f}\to\Z_{N_f}\to 1,
\eea
or, more precisely, including  spacetime symmetry and fermion parity,
\bea 
\label{eq:Spin-NcNf-extension}
\boxed{1 \to \Z_N\to
\Spin \times \Z_{NN_f}\to 
\Spin \times_{\Z_2^\rF} \Z_{2 N_f,{\bf B}+{\bf L}}^\rF\to 1}.
\eea
Here $N$ is the order of the finite gauge group $K=\Z_N$ in the anomalous
TQFT; it is determined by the anomaly and by $N_f$, and $N$ is \emph{not} assumed
a priori to be the color number $N_c$. In the physical interpretation of
the symmetry-extension construction \cite{Wang2017locWWW1705.06728}, this
normal subgroup $\Z_N$ is anomaly-free and is consistently dynamically
gauged. The minimal values of $N$ are determined by $N_f$ as stated in
Theorem \ref{main-theorem}.

\item
{\bf Color-center  baryon-to-quark symmetry extension}:
In the faithful baryon (${\bf B}$)
symmetry 
to the faithful
quark (${\bf Q}$)
symmetry
notation, 
this {\bf color-center baryon-to-quark symmetry
extension} is
\bea 
\label{eq:Spin-NcNf-extension-quark}
\boxed{1 \to \Z_{N_c}\to
\Spin \times_{\Z_2^\rF} \Z_{2N_cN_f, {\bf Q}+N_c{\bf L}}^\rF
\to 
\Spin \times_{\Z_2^\rF} \Z_{2 N_f,{\bf B}+{\bf L}}^\rF\to 1},
\eea
due to
their quantized numbers are related by
$${\bf Q}+N_c{\bf L}
=
N_c ({\bf B}+{\bf L}).$$
This color-center baryon-to-quark symmetry
extension \eq{eq:Spin-NcNf-extension-quark} exists as a fermionic symmetry extension
\emph{if and only if} $N_c$ is odd.

\item
The color-center symmetry-extension interpretation to match 
\eq{eq:Spin-NcNf-extension}
and
\eq{eq:Spin-NcNf-extension-quark},
is an \emph{additional} requirement: If the TQFT
gauge group is to coincide with the center of the $\SU(N_c)$ color gauge
group, then one must impose
\bea 
\label{eq:ZNc}
\Z_N=\Z_{N_c}=Z(\SU(N_c)).
\eea
Below we show that 
\eq{eq:Spin-NcNf-extension}
and
\eq{eq:Spin-NcNf-extension-quark}
are equivalent, 
\emph{if and only if} $N=N_c$ and
$N_f$ are odd.

The {\bf anomaly-trivializing symmetry extension} can be interpreted as
a {\bf 
color-center baryon-to-quark symmetry extension} precisely when the anomaly-determined minimal
order $N$ is set equal to the chosen color number $N_c$. At the stronger
spacetime-internal symmetry level relevant to the SM baryon-to-quark
extension, the clean faithful identification occurs in the odd
color-matched branch, namely when $N_f$ and $N_c$ are odd and $N=N_c$. In
that branch
\bea
\Spin \times_{\Z_2^\rF} \Z_{2N_cN_f,{\bf Q}+N_c{\bf L}}^\rF
=\Spin\times \Z_{N_cN_f,{\bf Q}+N_c{\bf L}},
\eea
so the extension also has the baryon-to-quark symmetry extension interpretation; the SM
baryons are fermions when $N_c$ is odd, and the $\Z_N$-gauge TQFT can coincide with the color center
$\Z_{N_c}$. 
Read the footnote for a proof on the
condition that $N=N_c$ and $N_f$ 
are odd.\footnote{Now we discuss the existence of the symmetry extensions \eqn{eq:Spin-NcNf-extension} and \eqn{eq:Spin-NcNf-extension-quark} and when they match.
\Eqn{eq:Spin-NcNf-extension} should be understood as an extension of fermionic symmetry groups, namely, the quotient map is required to preserve the distinguished fermion-parity subgroup $\mathbb Z_2^{\rF}$. Consider
\bea
1\to \mathbb Z_{N}\to \mathbb Z_2^{\rF}\times\mathbb Z_{NN_f}
\overset{\pi_a}{\longrightarrow}
\mathbb Z_{2N_f}^{\rF}
\to1.
\label{eq:general-product-extension}
\eea
A fermion-parity-preserving quotient map is necessarily of the form, where 
$\epsilon \in \{ 0, 1\} 
\cong \Z_2^{\rF}$
and $b \in \Z_{N N_f}$,
\bea
\pi_a(\epsilon,b)=\epsilon\,N_f+a\,b
\mod 2N_f,
\eea
for some integer $a \in \Z$. Such a map is well defined if and only if
\bea
\pi_a(0, N N_f) \equiv 0 
\mod 2 N_f,
\eea
which is equivalent to
\bea
aN \equiv 0 \mod 2,
\eea
and it is surjective if and only if
\bea
\gcd(a,N_f)=1.
\eea
Hence \eqn{eq:Spin-NcNf-extension} exists as a fermionic symmetry extension if and only if
\bea
N\ {\rm is\ even}\qquad\text{or}\qquad N_f\ {\rm is\ odd}.
\eea
Indeed, if $N$ is even, we can choose $a=1$; if $N_f$ is odd, we can choose $a=2$; if $N$ is odd and $N_f$ is even, such an $a$ does not exist.
On the other hand, \eqn{eq:Spin-NcNf-extension-quark},
\bea
1\to\mathbb Z_{N_c}\to\mathbb Z_{2N_cN_f}^{\rF}
\to
\mathbb Z_{2N_f}^{\rF}
\to1,
\eea
exists as a fermionic symmetry extension if and only if $N_c$ is odd, since the fermion parity in $\mathbb Z_{2N_cN_f}^{\rF}$ is represented by the element $N_cN_f$, whose 
fermion parity
image in $\mathbb Z_{2N_f}^{\rF}$ agrees with $N_f$ precisely when $N_cN_f\equiv N_f \mod 2N_f$, equivalently $N_c$ is odd.
Finally, taking $N=N_c$, when both $N=N_c$ and $N_f$ are odd,
\bea
\mathbb Z_2^{\rF}\times\mathbb Z_{N_cN_f}
\cong
\mathbb Z_{2N_cN_f}^{\rF},
\eea
by the Chinese remainder theorem, since $\gcd(2,N_cN_f)=1$. Therefore, in this case that $N=N_c$ and $N_f$ are odd, \eqn{eq:Spin-NcNf-extension} and \eqn{eq:Spin-NcNf-extension-quark} coincide as fermionic symmetry extensions.}

\item To have the Witten SU(2) anomaly-free \cite{Witten1982fp}
for the generalized SM, there is a constraint $$N_f(N_c+1) \in 2 \Z$$
(see \eq{eq:Witten}) such that
if $N_f$ is odd, then $N_c$ is odd.

Further, since $N=N_c$ and $N_f$ 
are already odd to match 
\eq{eq:Spin-NcNf-extension}
and
\eq{eq:Spin-NcNf-extension-quark},
the Witten SU(2) anomaly \cite{Witten1982fp}
indeed does not impose an additional constraint.

\item To have the 
baryon is a fermion, then $N_c$ must be odd.

\end{enumerate}

The combinations of all the five constraints from (a) to (e) above set that
$N=N_c$ and $N_f$ 
are odd, while the minimal nontrivial solution is unique as 
$$
N=N_c=N_f=3.
$$

\item We prove that $A_{\Z_n}p_1=0\mod n$ on oriented manifolds if and only if $n=3$, see Appendix \ref{sec:proof}.

\item  We prove a main theorem in \Sec{sec:NcNfSM} that we will outline here,
\begin{theorem}\label{main-theorem}
For a generalized Standard Model with $N_f$ families and $N_c$ colors,
let $N$ denote the order of a minimal finite cyclic extension $K=\Z_N$
that trivializes the anomaly of $N_f$ copies of a 4d charge-$1$ Weyl
fermion (equivalently, the anomaly of the generalized SM missing
$N_f$ sterile right-handed neutrinos $\nu_R$) with symmetry
$\Spin\times_{\Z_2^{\rF}}\Z_{2N_f,{{\bf B}+{\bf L}}}^{\rF}$ by the symmetry extension
\bea
1\to \Z_N\to \Spin\times \Z_{N N_f}\to
\Spin \times_{\Z_2^\rF} \Z_{2N_f,{{\bf B}+{\bf L}}}^\rF\to 1.
\eea
If $N$ and $N_f$ are minimal nonzero positive integers, then the minimal
extension orders are
\bea
\left\{
\begin{array}{llll}
N=1,&N_f\ge 1,& 2\nmid N_f,& 3\nmid  N_f,\\
N=3,&N_f\ge 3,& 2\nmid N_f,& 3\mid  N_f,\\
N=4,&N_f\ge 2,& 2\mid N_f,& 3\nmid  N_f,\\
N=12,&N_f\ge 6,& 2\mid N_f,& 3\mid  N_f.
\end{array}\right.
\eea
This is the anomaly-trivializing extension. Note that $N$ and $N_f$
are either both even or both odd.

At the level of the finite
kernel, it becomes a color-center extension exactly when one imposes the
additional physical identification $N=N_c$. In particular, if $N_f$ is
odd, the Witten anomaly of the generalized SM requires $N_c$ to be odd,
and the minimal anomaly-trivializing extension also has odd $N$, due 
to that
$N$ and $N_f$ have the same even-odd parity. 
In this
odd {\bf color-matched branch}, the symmetry extension coincides with the
SM-like baryon-to-quark color extension
\bea
1 \to \Z_{N_c}\to
\Spin \times_{\Z_2^\rF} \Z_{2N_cN_f, {\bf Q}+N_c{\bf L}}^\rF\to
\Spin \times_{\Z_2^\rF} \Z_{2 N_f,{\bf B}+{\bf L}}^\rF\to 1,
\eea
and the TQFT gauge group is the color center
$\Z_N=\Z_{N_c}=Z(\SU(N_c))$. With this extra color matching, baryons are
fermions and the unique minimal nontrivial solution is
\bea
N=N_c=N_f=3.
\eea
\end{theorem}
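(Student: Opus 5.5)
The plan is to reduce the statement to explicit anomaly-cancellation congruences for cyclic symmetry, and then minimize $N$ by a prime-by-prime case analysis.

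\textbf{Setting up the pulled-back anomaly.} By the footnote's analysis of \eq{eq:general-product-extension}, the fermion-parity-preserving quotient $\Spin\times\Z_{NN_f}\to\Spin\times_{\Z_2^\rF}\Z_{2N_f}^\rF$ is $\pi_a(\epsilon,b)=\epsilon N_f+ab$. It exists if and only if $aN$ is even and $\gcd(a,N_f)=1$. Pulling back along $\pi_a$, the $N_f$ charge-$1$ Weyl fermions become $N_f$ Weyl fermions of $\Z_{NN_f}$-charge $a$. Here we may take $a=1$ when $N$ is even and $a=2$ when $N_f$ is odd. The extension trivializes the anomaly exactly when this pulled-back class vanishes in $\Omega_5^{\Spin\times\Z_{M}}$, with $M=NN_f$. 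For the anomaly of Weyl fermions of charges $q_i$ under $\Spin\times\Z_M$, I would invoke the standard eta-invariant computation (as used in \cite{1808.02881}). Vanishing is equivalent to a cubic congruence of the shape
\[
(M^2+3M+2)\sum_i q_i^3\equiv 0 \mod 6M,
\]
together with a mixed gravitational congruence on $\sum_i q_i$. The gravitational condition is the one carrying the $A\,p_1$ piece, mod $3$ and mod a power of $2$. With $q_i=a$ these become congruences in $N_f a^3$ and $N_f a$ modulo $6NN_f$ and its divisors.

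\textbf{Parity and the four cases.} I would first dispose of parity. If $N_f$ is even, existence of $\pi_a$ forces $N$ even. If $N_f$ is odd, the odd choice $a=2$ is available, and I would check that no even $N$ is ever minimal: any even solution halves, or at least has its $2$-part removed, to an odd one. This gives the claim that $N$ and $N_f$ have the same parity.

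Next I would work one prime at a time, since the congruences split by the Chinese remainder theorem into $2$-primary and $3$-primary parts. Primes $p\ge5$ impose nothing, because $6M/\gcd$ contributes only factors $2$ and $3$ beyond $M$ itself. At $p=3$: if $3\nmid N_f$, the factor $3$ in $6M$ is absorbed by $M^2+3M+2\equiv 2\mod 3$ times a unit cube, so no $3$ is needed in $N$. If $3\mid N_f$, the residual $\Z_3$-valued piece — the $A(\beta A)(\beta A)$ part, since $A_{\Z_3}p_1=0\mod3$ by Appendix~\ref{sec:proof} — requires exactly one factor $3$ in $N$. This matches the $\Z_3\to\Z_9\to\Z_3$ trivialization of Appendix~\ref{sec:trivialized}. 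At $p=2$, in the even case: the $\eta$-invariant's mod-$2$-power obstruction from $\Spin$ (Rokhlin-type $p_1/48$ and the $\Z_2^\rF$ twist) should force exactly $4\mid N$, and show that $N=2$ fails. Combining the two primes gives $N\in\{1,3,4,12\}$ in the four cases. The minimal $N_f$ in each row is then just the least positive integer with the stated divisibility.

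\textbf{Color matching.} The second half of the theorem is bookkeeping from results already proved in the text. The footnote shows that \eq{eq:Spin-NcNf-extension} and \eq{eq:Spin-NcNf-extension-quark} coincide if and only if $N=N_c$ and $N_f$ are odd. Witten's constraint $N_f(N_c+1)\in2\Z$ forces $N_c$ odd when $N_f$ is odd. In the odd branch, the only nontrivial option $N=3$ requires $3\mid N_f$, so $N_f\ge3$. Setting $N_c=N=3$ gives the unique minimal nontrivial solution $N=N_c=N_f=3$.

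\textbf{Main obstacle.} The hardest step is the $2$-primary analysis. This means pinning down precisely which congruence on $\Omega_5^{\Spin\times\Z_{2^k}}$ (or the twisted $\Spin\times_{\Z_2}\Z_{2^{k+1}}$) detects the pulled-back class, and proving both that $N=4$ suffices and that $N=2$ does not. I would attack it by explicit eta invariants on lens-space generators $L^5(2^k;1,1,1)$ and their $p_1$-twisted partners, checking vanishing on each generator.
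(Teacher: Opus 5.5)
\textbf{The pulled-back charge is wrong, and the congruence analysis fails because of it.} Under $\pi_a(\epsilon,b)=\epsilon N_f+ab$, the generator $(0,1)$ of $\Z_{NN_f}$ maps to $a\in\Z_{2N_f}$. It therefore acts on a charge-$1$ fermion by $e^{2\pi\ii a/(2N_f)}=e^{2\pi\ii (aN/2)/(NN_f)}$, so the pulled-back $\Z_{NN_f}$-charge is $q=aN/2$, not $a$. For example, $q=3$ for $N_f=N=3$, $a=2$, which is just the reduction $\Z_9\to\Z_3$; and $q=2$ for $N=4$, $a=1$. With your $q_i=a$, the criterion gives the wrong answer in the central case. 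For $N_f=N=3$ the admissible $a$ are even and prime to $3$, and then $(M^2+3M+2)\sum_i q_i^3=330a^3\equiv\pm6\bmod 54$. You would conclude that $1\to\Z_3\to\Z_9\to\Z_3\to1$ does \emph{not} trivialize the anomaly. Likewise, a gravitational condition $2\sum_i q_i\equiv0\bmod M$ would read $N\mid 2a$ with $\gcd(a,N_f)=1$. That is impossible whenever $3$ divides both $N$ and $N_f$, which eliminates the $N=3$ and $N=12$ rows. With $q=aN/2$, the odd case becomes a one-line check. For odd $N$ take $a=2$, so $q=N$ and $M=NN_f$ is odd. The cubic condition $6NN_f\mid (M+1)(M+2)N_fN^3$ reduces to $6\mid (M+1)(M+2)N^2$, which holds iff $3\mid N$ or $3\nmid N_f$. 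The gravitational condition $2\sum_iq_i=2N_fN\equiv0\bmod NN_f$ is automatic. This yields $N\in\{1,3\}$ directly.

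Your $3$-primary reasoning is also inverted. When $3\nmid M$ one has $M^2+3M+2=(M+1)(M+2)\equiv0\bmod3$, and that is what absorbs the $3$ in $6M$. The residue $2\bmod3$ occurs exactly when $3\mid M$, which is where the obstruction sits.

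More seriously, the $2$-primary step carries the whole content of the $N=4$ and $N=12$ rows, and you only assert it (``should force exactly $4\mid N$''). The paper does not compute it either. Its proof splits $\mathcal A(N_f)$ into $2$- and $3$-primary parts and takes the minimal $\Z_4$-extension for the $2$-primary part from \cite{2512.25038}. It takes the minimal $\Z_3$-extension for the $3$-primary part from Appendix~\ref{sec:Z3r}, where $A_{\Z_3}p_1=0\bmod3$ places the class in the group-cohomology subclass killed by the extension of Appendix~\ref{sec:trivialized}. The two are then combined by coprimality.

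Once the charges are corrected, your congruence route is a legitimate alternative that trades this cohomological input for the eta-invariant conditions of \cite{1808.02881}. To close it, you must use the precise conditions for even $M$ with $q=aN/2$ and show that the $2$-primary condition holds iff $4\mid N$, or else cite the same $2$-primary result. The color-matching half of your argument agrees with the paper's.
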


Below we sketch the proof of Theorem \ref{main-theorem}, the detailed proof can be found in the proof of Theorem \ref{main-theorem-rewrite}.
Assume that the anomaly of $N_f$ copies of a 4d charge-$1$ Weyl fermion
with symmetry
$\Spin\times_{\Z_2^{\rF}}\Z_{2N_f,{\bf B}+{\bf L}}^\rF$ is trivialized by
a minimal cyclic $\Z_N$-extension. For the generalized SM without the
$N_f$ sterile right-handed neutrinos $\nu_R$, the same anomaly is read with
an overall sign: if $n_{\nu_R}$ denotes the number of sterile
right-handed-neutrino species, then the missing-neutrino contribution is
$-N_f+n_{\nu_R}=-N_f$ when $n_{\nu_R}=0$. This sign does not affect the
minimal extension order. Write
$$
N_f=2^p 3^r s,\qquad p,r\ge0,\quad 2\nmid s,\quad 3\nmid s.
$$
The relevant anomaly has nontrivial primary components only at the primes
$2$ and $3$:
$$
\mathcal A(N_f)=
\left\{
\begin{array}{ll}
0, & p=0,\ r=0,\\
3^r\!\cdot\!\mathbb Z_{3^{r+1}}, & p=0,\ r\ge1,\\
2^p\!\cdot\!\mathbb Z_{2^{p+3}}, & p\ge1,\ r=0,\\
2^p\!\cdot\!\mathbb Z_{2^{p+3}}\ \oplus\ 3^r\!\cdot\!\mathbb Z_{3^{r+1}}, & p\ge1,\ r\ge1.
\end{array}\right.
$$
By \cite{2512.25038}, the $2$-primary component is trivialized by a
minimal $\mathbb Z_4$-extension. By Appendix \ref{sec:Z3r}, the
$3$-primary component is trivialized by a minimal $\mathbb Z_3$-extension;
this uses the mod-$3$ fact $A_{\Z_3}p_1=0\mod 3$, so that the remaining
$3$-primary anomaly lies in the group-cohomology subclass. Since the $2$-
and $3$-primary extensions are relatively prime, the simultaneous minimal
extension has order $12$ when both components are present. Hence
$$\left\{
\begin{array}{llll}
N=1,&p=0,&r=0,\\
N=3,&p=0,&r\ge1,\\
N=4,&p\ge1,&r=0,\\
N=12,&p\ge1,&r\ge1.
\end{array}\right.$$
Equivalently, this is the four-fold list of minimal nonzero values of $N$
stated in Theorem \ref{main-theorem}.

It remains to ask when this anomaly-trivializing extension is also a
color symmetry extension. At the level of the finite kernel, this happens
when the anomaly-determined $N$ is identified with the chosen color number
$N_c$. If $N_f$ is odd, then $p=0$, so the minimal extension order $N$ is
odd. The Witten anomaly of the generalized SM further requires $N_c$ to be
odd when $N_f$ is odd. In this odd color-matched branch, imposing
$N=N_c$ lets the kernel $K=\Z_N$ be read as $Z(\SU(N_c))$, and
$\Spin\times_{\Z_2^\rF}\Z_{2N_cN_f,{\bf Q}+N_c{\bf L}}^\rF$ reduces to
$\Spin\times\Z_{N_cN_f,{\bf Q}+N_c{\bf L}}$. Hence the
anomaly-trivializing extension coincides with the SM-like baryon-to-quark
color extension. The only minimal nontrivial odd branch is $N=3$, and
taking $N_f$ minimal gives
$$
N=N_c=N_f=3.
$$

\end{enumerate}
In summary, the results separate three points. 

First, finite
symmetry extension by an anomalous gauge TQFT generally cancels the
group-cohomology subclass of the discrete $\Z_n$ anomaly, not an arbitrary
beyond-group-cohomology $A_{\Z_n}p_1$ anomaly. 

Second, for the SM case
$N_f=3$, the mod-$3$ identity $A_{\Z_3}p_1=0\mod3$ removes the relevant
Pontryagin obstruction, so the remaining $\Z_3$ group-cohomology anomaly
can be canceled by the $\Z_3$-gauge TQFT from
$1\to\Z_3\to\Z_9\to\Z_3\to1$. 

Third, for a generic $N_f$-family and
$N_c$-color SM, the anomaly determines a minimal TQFT gauge group
$K=\Z_N$ with $N=1,3,4,12$ according to the $2$- and $3$-primary parts of
$N_f$; it becomes a color-center extension at the finite-kernel level only
after the additional identification $N=N_c$, and it coincides with the
SM-like baryon-to-quark color symmetry extension in the odd color-matched
branch. 

With this extra color matching and minimality, the unique
nontrivial solution is $N=N_c=N_f=3$. This framework provides a systematic
topological perspective on anomaly cancellation and offers a viewpoint on
the distinguished role of the $N_f=3$-family structure of the Standard
Model with the color number $N_c=3$, firstly observed in \Refe{Wang:2025oow2502.21319}.

\subsection{Symmetry Extension}
\label{subsec:SymmetryExtension}

Let us explain the symmetry extension method in \cite{Wang2017locWWW1705.06728}.
For 't Hooft anomalies  
of some global symmetry $G$ to be nonperturbative global anomalies, 
we can potentially apply the 
appropriate 
symmetry-extension trivialization method \cite{Wang2017locWWW1705.06728},
 making 
a nonperturbative global anomaly 
 in $G$
 becomes anomaly-free in an appropriate
 $G_{\rm Tot}$ 
 via an appropriate group extension
\bea \label{eq:extension}
1 \to K \to G_{\rm Tot} \overset{r}{\longrightarrow} 
G 
\to
1.
\eea
Namely, the precise mathematical 
check is, 
given a $G$, we search for
what an appropriate finite $K$ and
an appropriate extended $G_{\rm Tot}$ are,
such that
a nonperturbative global anomaly index 
\bea
\nu_G \in \TP_d^G
\eea 
in the Freed-Hopkins version
\cite{1604.06527} of cobordism group TP
becomes the trivial anomaly class 
\bea
(r^*\nu)_{G_{\rm Tot}} =0 \in \TP_d^{G_{\rm Tot}}
\eea
for the cobordism group TP of the pulled back 
$G_{\rm Tot}$.
Here $r$ is the reduction map from $G_{\rm Tot} \overset{r}{\longrightarrow} 
G$,
then the $r^*$ with a $*$ denote the 
pullback.
According to  
\cite{Wang2017locWWW1705.06728}, this provides 
a 
(3+1)d anomalous $G$-symmetric  
 $K$-gauge topological order construction
 whose low-energy theory is a (3+1)d 
 finite $K$-gauge TQFT, which is designed to carry the original
 nontrivial 't Hooft anomaly index
  in $G$, namely $\nu_G \in \TP_d(G)$. By $K$-gauge, we mean that $K$ is dynamically gauged with corresponding finite $K$ gauge fields,
  such as in
  \eq{eq:Spin-NcNf-extension} and \eq{eq:ZNc}, the $K=\Z_{N_c}=Z(\SU(N_c))$ is dynamically gauged.

Note that TP cobordism group
is the direct sum of a torsion bordism group in $d$ dimensions
and a free bordism group in $d+1$ dimensions
\bea
\TP_d(G) = (\Omega_d^G)_{\text{torsion}} \oplus (\Omega_{d+1}^G)_{\text{free}}.
\eea

When we focus on the nonperturbative global anomaly of 3+1d quantum field theory (QFT) with $G$ symmetry, we could use the classification of anomalies 
at the $4+1 = 5$d cobordism group
as
$$\TP_5(G) = (\Omega_5^G)_{\text{torsion}}$$ in the case that 
$(\Omega_{6}^G)_{\text{free}}=0$.
Then we need to check the anomaly index
a nonperturbative global anomaly index 
\bea
\nu_G \in \TP_5(G) = (\Omega_5^G)_{\text{torsion}},
\eea 
which
becomes the trivial anomaly class 
\bea
(r^*\nu)_{G_{\rm Tot}} =0 \in (\Omega_5^{G_{\rm Tot}})_{\text{torsion}}
\eea
for the bordism group of the pulled back 
$G_{\rm Tot}$.
In summary, when we refer to the symmetry extension,
or the symmetry extension trivialization of the 't Hooft anomaly,
what we really mean is exactly the check done in this subsection, \Sec{subsec:SymmetryExtension}.

\subsection{First Pontryagin class} \label{subsec:Pontryagin}

We briefly introduce Pontryagin classes \cite{Pontrjagin1946, Pontrjagin1947, milnor1974characteristic}, which are fundamental topological invariants of real vector bundles, analogous to the Chern classes for complex vector bundles. For a real vector bundle \(E\to B\) where 
the total space $E$ maps surjectively to the base space $B$, the total Pontryagin class is defined via the complexification \(E\otimes_{\R}\mathbb{C}\) by
\bea
p(E)=1+p_1(E)+p_2(E)+\dots ,\qquad 
p_k(E)=(-1)^k\,c_{2k}(E\otimes_{\R}\mathbb{C})\;\in \H^{4k}(B,\mathbb{Z}).
\eea
Here, $c_{2k}(E\otimes_{\R}\mathbb{C})$ are the Chern classes of the complex vector bundle \(E\otimes_{\R}\mathbb{C}\) and $\H^{4k}(B,\Z)$ is the singular cohomology of $B$, defined as the quotient group of the group of singular cocycles quotient by the group of singular coboundaries. 

In the study of index theory, gravitational instantons, and gravitational anomalies 
\cite{Eguchi:1976db, AlvarezGaume1983igWitten1984,
Alvarez-Gaume:1984zlq}, the first Pontryagin class $p_1$ plays a central role.
Geometrically, by the Chern-Weil theory, the first Pontryagin class 
$p_1$ is the integer-valued topological invariant whose image in real cohomology is represented by
(see appendices of \cite{Putrov:2023jqi2302.14862} for detailed expressions written in differential forms and the relation to the gravitational Chern-Simons 3-form)
\bea
p_1
(TM)
=-\frac{1}{8\pi^2}
\mathrm{Tr}(R \wedge R)
\in 
\H^4_{\mathrm{dR}}(M,\R)
\cong \H^4(M,\R)\cong \H^4(M,\Z)\otimes_\Z \R,
\eea
where 
$R$ is the curvature 2-form of the tangent bundle $TM$,
\cblue{the trace Tr is over the real Lie algebra valued 4-form}, and $\H^4_{\mathrm{dR}}(M,\R)$ is the de Rham cohomology of the spacetime manifold $M$. The de Rham cohomology $\H^*_{\mathrm{dR}}(M,\R)$
is defined as the quotient group of the group of closed differential forms quotient by the group of exact differential forms.
By the Chern-Weil theory, if $M_1$ and $M_2$ are the same smooth manifold with different metrics and $\alpha_1$ and $\alpha_2$ are the closed differential forms representing the first Pontryagin class of $TM$ with respect to the two different metrics, then $\alpha_1$ and $\alpha_2$ differ by an exact differential form. Therefore, $\alpha_1$ and $\alpha_2$ represent the same de Rham cohomology class and the first Pontryagin class is independent of the choice of metric. In particular, the first Pontryagin class is \cblue{invariant} under Wick rotation, regardless whether we choose the metric to be Euclidean or 
Lorentz/Minkowski 
signature, \cblue{and independent of the choice of the
real Lie algebra, Euclidean $so(4,\R)$ or Lorentz
$so(3,1,\R)$}.\footnote{
The real Euclidean rotational 
Lie algebra 
$$so(4, \R)  \equiv  so(4) \cong so(3) \oplus so(3)
\cong su(2) \oplus su(2)$$  
and the real Lorentz Lie algebra $$so(1,3, \R) \equiv so(1,3)
\cong sl(2,\C) 
$$ are both the real forms of the same complexified Lie algebra 
$$so(4,\C) 
\cong (su(2)\oplus su(2))\otimes_{\R} \C  \cong 
sl(2,\C) \oplus sl(2,\C),$$ 
the Euclidean $so(4, \R)$
and the Lorentz $so(1,3, \R)$
are related by the Wick rotation of their time coordinates $t_{\rm E} \mapsto \ii t$. 
So the Euclidean Lie algebra and the Lorentz Lie algebra have the same complexification.
{Some comments:
\begin{enumerate}
\item
The complexification of a real vector space $V$
is denoted as $V \otimes_{\mathbb{R}} \mathbb{C}$.
\item
If $V$ happens to be a 
complex vector space,
then the complexification of the underlying real vector space of a complex vector space $V$, namely $V \otimes_{\mathbb{R}} \mathbb{C}$, 
is the direct sum of two copies of $V$, namely
$V \otimes_{\mathbb{R}}  \mathbb{C}=V \oplus V$.
\item 
Note that $su(2)$ is a real vector space (not closed under the scalar multiplication by $\ii$),
$sl(2,\C)$ is a complex 
vector space (closed under the scalar multiplication by $\ii$). 
So $sl(2,\C) 
 \cong su(2) \otimes_{\R} \C 
 $ is the complexification of $su(2)$,  
and 
the complexification of the underlying real vector space of a complex vector space
$sl(2,\C)$
is
$sl(2,\C)\otimes_{\R}\C\cong sl(2,\C) \oplus sl(2,\C) 
 \cong
 (su(2) \otimes_{\R} \C) 
  \oplus
 (su(2) \otimes_{\R} \C)
 =(su(2) \oplus
 su(2) ) \otimes_{\R} \C $, although 
 $sl(2,\C) \not\cong su(2) \oplus
 su(2)$.
\end{enumerate}
}}

 For a closed oriented 4-manifold $M$, the $p_1$ integral over a closed oriented 4-manifold 
gives the Pontryagin number
\bea 
\label{eq:Pontryagin-number}
\langle p_1(TM),[M]\rangle 
=-\int_{M} \frac{1}{8\pi^2}
\mathrm{Tr}(R \wedge R)
\in \mathbb{Z},
\eea
which is intimately related to the signature \(\sigma(M)\) via the Hirzebruch signature theorem
\cite{Hirzebruch1966}: 
\bea
\sigma(M)=\frac13\langle p_1(TM),[M]\rangle.
\eea 
Here, $TM$ is the tangent bundle of $M$, $[M]$ is the fundamental class of $M$, and $\langle p_1(TM),[M]\rangle$ is the pairing which evaluates the cohomology class $p_1(TM)$ on the homology class $[M]$.

{Although Lorentzian signature does not admit real self-dual or anti-self-dual (Euclidean instanton) solutions due to the properties of the Hodge star operator, the topological charge (defined by the Chern number
$\langle c_2,[M]\rangle$
or the Pontryagin number
$\langle p_1,[M]\rangle$) remains well-defined and quantized. This topological charge quantity depends only on the topology of the bundle and represents an integral characteristic class, and is therefore quantized independently of the metric or signature.}

A crucial feature of the first Pontryagin class is that it is an oriented cobordism invariant. Two closed oriented \(n\)-manifolds $M_0$ and $M_1$ are oriented cobordant if there exists a compact oriented \((n+1)\)-manifold \(W\) whose boundary is the disjoint union \(M_0\sqcup \overline{M_1}\) (where the overbar denotes reversed orientation). A characteristic number is an oriented cobordism invariant if it depends only on the cobordism class; for \(n=4\), the Pontryagin number \(\langle p_1(TM),[M]\rangle\) is precisely such an invariant. 

In Appendix \ref{sec:proof-3}, we prove another property of the first Pontryagin class \cite{tomonaga1964mod,tomonaga1965pontryagin}: 
\bea
A_{\Z_n}p_1=0\mod n 
\eea
on oriented manifolds if and only if $n=3$,
where $A_{\Z_n}$ is the generator of $\H^1(\Z_n,\U(1))\cong\H^1(\Z_n,\Z_n)$. The cup product $A_{\Z_n}p_1$ is a mod $n$ cohomology class on a manifold $M$ when $A_{\Z_n}$ is pulled back to $M$. Throughout this article, $p_1$ always means $p_1(TM)$ for a manifold $M$.
The proof for the if part is based on the following facts. 
Let $P_3^1:\H^*(-,\Z_3)\to\H^{*+4}(-,\Z_3)$ be the mod 3 Steenrod reduced power. 
\begin{enumerate}

\item 
Based on the defining property of the mod 3 Steenrod reduced power,
\bea
P_3^1(x)=0\text{ if }\deg(x)<2.
\eea
    \item 
    This fact is nontrivial and a more general fact is proven in Appendix \ref{sec:proof-general}.
    \bea
P_3^1(x)=p_1\smile x=x\smile p_1
\eea
for any $\Z_3$-valued cohomology class $x$, where $\smile$ is the cup product.

\end{enumerate}
Since $\deg(A_{\Z_3})=1$, the above two facts imply $A_{\Z_3}p_1=0\mod3$.

The only if part can be proved by evaluating $A_{\Z_n}p_1$ on $S^1\times\mathbb{CP}^2$.
                             
\subsection{The Plan}
The plan of this article goes as follows: 

In \Sec{sec:SpinxU(1)}, to warm up,
we derive 
the 3+1d nonperturbative global anomaly of a Weyl fermion in 
$\Spin \times \Z_3$ symmetry (with 
$\Z_{3, {\bf B + L}}$ in mind)
from the reduction of the
perturbative local anomaly in $\Spin \times \U(1)$ symmetry.

In \Sec{sec:Spin^c},
we derive 
the 3+1d nonperturbative global anomaly of a Weyl fermion in
$\Spin \times_{\Z_2^\rF} {\Z_{6}^\rF}$
symmetry (with ${\Z_{6, {\bf B + L}}^\rF}$ in mind)
from the reduction of the
perturbative local anomaly in
$\Spin^c \equiv \Spin \times_{\Z_2^\rF} \U(1)$.

In \Sec{Sec:Symmetry-Extension},
we show the global anomaly trivialization
via the
{symmetry extension $1 \to \mathbb{Z}_{N_c=3}\to \mathbb{Z}_{N_cN_f=9}\to \mathbb{Z}_{N_f=3}\to 1$, and 
we explicitly construct 3+1d anomalous $\Z_3$-gauge TQFT as the low-energy theory of
topological order. 
Thus this topological order may be a hypothetical quantum dark matter candidate --- ``quantum'' in the sense that topological order is well-defined at the  0K temperature quantum limit}.

In \Sec{sec:NcNfSM}, we generalize to a generic $N_c$-color and
$N_f$-family SM. In the absence of $N_f$ sterile right-handed neutrinos,
we construct a 4d anomalous
$\Spin\times_{\Z_2^\rF}\Z_{2N_f,{\bf B}+{\bf L}}^\rF$-symmetric
$\Z_N$-gauge TQFT that cancels the mixed discrete
$({\bf B}+{\bf L})$-gauge-gravitational global anomaly via the symmetry
extension
$$
1 \to \Z_N\to
\Spin \times \Z_{NN_f}\to 
\Spin \times_{\Z_2^\rF} \Z_{2 N_f,{\bf B}+{\bf L}}^\rF\to 1.
$$
We determine the minimal extension order $N$ as a function of $N_f$:
$N=1,3,4,12$ according as $N_f$ is divisible by neither $2$ nor $3$, by
$3$ but not $2$, by $2$ but not $3$, or by both $2$ and $3$. This section
also explains that $N$ is not assumed to equal the color number $N_c$;
only in the color-matched odd case with $N=N_c$ does the $\Z_N$-gauge TQFT
coincide with the center $Z(\SU(N_c))$, leading to the unique minimal
solution $N=N_c=N_f=3$.

In addition, we will prove various mathematical theorems in the Appendices
that will be implemented in the main text.

In Appendix \ref{sec:trivialized},
{we prove that any group cocycle 
$\alpha_d \in \H^d(\Z_n,\U(1))$ is trivialized by the symmetry extension $1\to\Z_n\to\Z_{n^2}\to\Z_n\to 1$ 
for odd $d\ge3$ and any $n\ge2$}.

In Appendix \ref{sec:generic}, we prove that $A_{\Z_n}p_1$ cannot be trivialized by any finite group extension except for $n=2$ and $n=3$ where $A_{\Z_n}$ is defined as the generator of $\H^1(\Z_n,\U(1))$.

In Appendix \ref{sec:cochain},
following \Refe{Wang2017locWWW1705.06728}'s symmetry-extension approach,
we derive the explicit $(d-1)$-cochain $\tilde{\beta}_{d-1}$
that splits the $d$-cocycle
$\tilde{\alpha}_d=\delta\tilde{\beta}_{d-1}$
as a coboundary
in $\H^d(\Z_{n^2},\U(1))$
 by the symmetry extension $1\to\Z_n\to\Z_{n^2}\to\Z_n\to1$ for 
 any odd $d\ge3$ and any $n\ge2$.

In Appendix \ref{sec:dd-d-1d},
we construct the path integral of
{$d$d-bulk/$(d-1)$d-boundary coupled invertible topological field theory/symmetric anomalous gapped TQFT
by the symmetry extension $1\to\Z_n\to\Z_{n^2}\to\Z_n\to 1$ for any odd $d\ge3$ and any $n\ge2$}.

In Appendix \ref{sec:anomaly-nmid23},
we derive a {3+1d nonperturbative global anomaly formula of Weyl fermion in 
$\Spin \times {\Z_{n}}$ symmetry for integer $n$ with $2 \nmid n$ and $3 \nmid n$}.

In Appendix \ref{sec:Z3r},
we derive a  {3+1d nonperturbative global anomaly formula of Weyl fermion
$\Spin \times \Z_{3^r} = \Spin \times_{\Z_2^\rF} {\Z_{2 \cdot 3^r}^\rF}$ symmetry}.

In Appendix \ref{sec:Z2p},
we derive a  {3+1d nonperturbative global anomaly formula of Weyl fermion
$\Spin \times_{\Z_2^\rF} {\Z_{2 \cdot 2^p}^\rF}$ symmetry}.

In Appendix \ref{sec:proof}, we prove that $A_{\Z_n}p_1=0\mod n$ on oriented manifolds if and only if $n=3$.

\section{Perturbative Local Anomaly  to
Nonperturbative Global Anomaly}

In the 3+1d Standard Model (SM), 
each family contains 
15 Weyl fermions in the absence 
of 
the 16th Weyl fermion sterile right-handed neutrino $\nu_R$.
This SM 
suffers from the perturbative local
mixed-gauge-gravitational anomalies 
\cite{Eguchi:1976db, AlvarezGaume1983igWitten1984,
Alvarez-Gaume:1984zlq}
between the lepton number ${\bf L}$ symmetry and gravitational background fields, in 3+1d (or simply 4d) spacetime. 
Namely these anomalies are computable
via perturbative triangle Feynman diagrams
$\U(1)_{\bf L}^3$ and $\U(1)_{\bf L}$-gravity-gravity, 
$$
\includegraphics[height=.15\columnwidth]{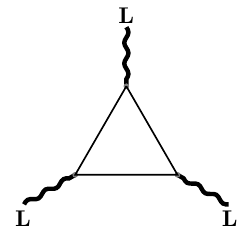}
\quad \quad \quad 
\includegraphics[height=.15\columnwidth]{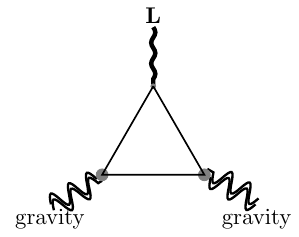}
$$
with the anomaly index coefficient 
$$-N_f + n_{\nu_R},$$
counting the difference between the family or generation number $N_f$ (typically $N_f=3$) and the total right-hand neutrino number $n_{\nu_R}$. See recent related expositions about this anomaly index $-N_f + n_{\nu_R}$ for examples
in  \cite{JW2006.16996, JW2008.06499, JW2012.15860, WangWanYou2112.14765, WangWanYou2204.08393, Putrov:2023jqi2302.14862, Wang:2024auy2501.00607}.
However, because of
the analogous Adler-Bell-Jackiw anomalies
\cite{Adler1969gkABJ, Bell1969tsABJ}
via the SM electroweak gauge instanton
\cite{BelavinBPST1975, tHooft1976ripPRL, JackiwRebbi:1976pf, 
CallanDashenGross:1976je},
instead of thinking of the classical lepton number ${\bf L}$ symmetry,
only the baryon number plus or minus lepton number ${\bf B} \pm {\bf L}$ symmetries are physically meaningful quantum mechanical symmetries of the SM \cite{KorenProtonStability2204.01741, WangWanYou2204.08393}:

\begin{itemize}
\item 
For the gauge-invariant baryons, 
a full 
faithful 
combined symmetry of ${\bf B} - {\bf L}$ and ${\bf B} + {\bf L}$ with the
Lorentz spacetime Spin group symmetry is
\cite{WangWanYou2204.08393}
\bea \label{eq:SpinU1Z2NF}
\Spin \times_{\Z_2^\rF} \U(1)_{{\bf B} - {\bf L}} \times_{\Z_2^\rF}\Z^\rF_{2 N_f, {\bf B} + {\bf L}}.
\eea
\item 
For the free quarks, a full 
faithful 
combined symmetry of ${\bf Q} - N_c {\bf L}$ and ${\bf Q} + N_c {\bf L}$
with the
Lorentz spacetime Spin group symmetry is 
\bea
\Spin \times_{\Z_2^\rF} \U(1)_{{\bf Q} - N_c {\bf L}} \times_{\Z_2^\rF}\Z^\rF_{2 N_c N_f, {\bf Q} + N_c {\bf L}},
\eea
but it is unfaithful for the gauge-invariant baryons.
Here in the conventional SM, the family number is $N_f=3$, 
and the color number is $N_c=3$.
\end{itemize}

For the conventional SM with $N_f=3$,
below we determine the $\Spin \times_{\Z_2^\rF} \Z_{6,{{\bf B} + {\bf L}}}^\rF =
\Spin \times \Z_{3,{{\bf B} + {\bf L}}}$ mixed gauge-gravitational anomaly for the
right-handed ``sterile'' neutrino $\nu_R$ (sterile to 
the SM gauge force but not sterile to ${\bf B} \pm {\bf L}$ gauge field).
In fact, we shall treat the anti-particle of right-handed neutrino $\bar{\nu}_R$ as the left-handed particle, whose quantum numbers (here discrete charges of abelian global symmetries) are given by:
\bea
\begin{tabular}{| c  | c |  c |  c |  c | c | c |  c |  c |  c | }
\hline
  &   $\U(1)_{{\bf B}-{\bf L}}$ & 
  $\Z_{6,{{\bf B} + {\bf L}}}^\rF$ & $\Z_{3,{{\bf B} + {\bf L}}}$
  & $\Z_2^\rF$ & 
  $\Z_{18, {\bf Q} + 3 {\bf L}}^\rF$ & $\Z_{9, {\bf Q} + 3 {\bf L}}$
  \\
\hline
\hline
$\bar{\nu}_R$ &   1 & $-1$ & $-1$ & $1$ & $-3$  & $-3$\\
\hline
\end{tabular}.
\eea
So the charge 
$Q_{\Z_{6,{{\bf B} + {\bf L}}}^\rF} = Q_{\Z_{3,{{\bf B} + {\bf L}}}} \mod 3$,
and 
$Q_{\Z_{18,{{\bf B} + {\bf L}}}^\rF} = Q_{\Z_{9,{{\bf B} + {\bf L}}}} \mod 9$.\footnote{{We
label $n_6 \in \Z_6 = \Z_6^\rF \supset  \Z_2^\rF$ in terms of a doublet $(n_2^\rF, n_3) \in  
\Z_2^\rF \times \Z_3$,
such that the bosons have $n_2^\rF=0$
and the fermions have $n_2^\rF=1$. 
In addition, without loss of generality, 
we assign the charge $q=1 \in \Z_6^\rF$ fermion
to the $(n_2^\rF, n_3)=(1,1) \in  \Z_2^\rF \times \Z_3 $. This constrains
the map as $n_6 = 3 n_2^\rF - 2 n_3$,
so $n_6 = n_3 \mod 3$.
\\
{We label $n_{18} \in \Z_{18} = \Z_{18}^\rF \supset  \Z_2^\rF$ in terms of a doublet 
$(n_2^\rF, n_9) \in \Z_2^\rF \times  \Z_9$,
such that the bosons have $n_2^\rF=0$
and the fermions have $n_2^\rF=1$. 
In addition, without loss of generality, 
we assign the charge $q=1 \in \Z_{18}^\rF$ fermion
to the $( n_2^\rF, n_9)=(1,1) \in \Z_2^\rF \times  \Z_9$. This constrains
the map as $n_{18} = 9 n_2^\rF - 8 n_9$, 
so $n_{18} =  n_9 \mod 9$. 
}
}}

In Subsection \ref{sec:SpinxU(1)}, 
we start with a perturbative local anomaly of U(1) charge $q=1$ left-handed Weyl fermion 
in $\Spin \times \U(1)$ to
derive 
the nonperturbative global anomaly 
of $\Z_{3, {\bf B + L}}$ charge $q=1$ Weyl fermion 
in 
$\Spin \times \Z_{3, {\bf B + L}}$.

In Subsection \ref{sec:Spin^c}, 
we will make a comparison to a {perturbative local anomaly in $\Spin^c \equiv \Spin \times_{\Z_2^\rF} \U(1)$ and
a nonperturbative global anomaly in 
$\Spin \times_{\Z_2^\rF} {\Z_{6,{\bf B + L}}^\rF}$}.

\subsection{$\Spin \times \U(1)$ to
$\Spin \times \Z_3$}

\label{sec:SpinxU(1)}

The perturbative local anomaly of U(1) charge $q=1$ left-handed Weyl fermion 
of $\Spin \times \U(1)$ symmetry
in 3+1d or 4d
is captured by a 5d invertible field theory (iTFT) with the anomaly index $k=1$ \cite{AlvarezGaume1983igWitten1984,
Putrov:2023jqi2302.14862},
as an invertible U(1)-valued partition function:
\begin{equation}
   \exp( \ii k  \int_{M^5} A \frac{c_1^2}{6}-A \frac{p_1}{24}),
\end{equation} 
with the first Chern class $c_1$
and the first Pontryagin class $p_1$.
Now we redefine the U(1) gauge field $A$ as a $\Z_3$ cohomology class gauge field $ A_{\Z_3} \in\H^1(\B\Z_3,\Z_3)= \Z_3$
with the following replacement:
\bea
A &\mapsto& \frac{2 \pi}{3}   A_{\Z_3}.\cr 
c_1 = \frac{\dd A}{2 \pi} &\mapsto& 
\frac{\dd   A_{\Z_3}}{3} \equiv \beta_{(3,3) } A_{\Z_3}. 
\eea
The $\beta_{(n,m)}: \H^*(-,\Z_m) \mapsto 
\H^{*+1}(-,\Z_n) $ 
is the Bockstein 
 homomorphism associated with the extension
 $\Z_n \stackrel{\cdot m}{\to} \Z_{nm} \to \Z_m$. Thus we get the 5d topological invariant of the $\Spin \times \Z_3$ that captures the 4d anomaly as:
\bea\label{eq:Spin-Z3}
   &&\exp \big( \ii 2 \pi k  \int_{M^5} 
   ( \frac{1}{18} 
   {A_{\Z_3}} (\beta_{(3,3)} {A_{\Z_3}}) (\beta_{(3,3)} {A_{\Z_3}})
   -\frac{1}{3 \cdot 24} {A_{\Z_3}} p_1 ) \big) \cr
   &=&\exp \big( \ii \frac{2 \pi}{9} k  \int_{M^5} ( \frac{1}{2} 
   {A_{\Z_3}} (\beta_{(3,3)} {A_{\Z_3}}) (\beta_{(3,3)} {A_{\Z_3}})
   -\frac{1}{8} {A_{\Z_3}} p_1 ) \big)\cr
   &=&\exp \big( \ii \frac{2 \pi}{9} k  \int_{M^5} ( -4
   {A_{\Z_3}} (\beta_{(3,3)} {A_{\Z_3}}) (\beta_{(3,3)} {A_{\Z_3}})
   +3\cdot {} \frac{A_{\Z_3}p_1}{3} ) \big)\cr
   &=&\exp \big( \ii \frac{2 \pi}{9} (-4k)  \int_{M^5} ( 
   {A_{\Z_3}} (\beta_{(3,3)} {A_{\Z_3}}) (\beta_{(3,3)} {A_{\Z_3}})
   -3\cdot {} \frac{A_{\Z_3}p_1}{3} ) \big).
\eea 
Here $\beta_{(3,3)}:\H^1(-,\Z_3)\to \H^2(-,\Z_3)$ is the Bockstein homomorphism.
Here ${} \frac{A_{\Z_3}p_1}{3}$ is a mod 3 class that involves Pontryagin class because $A_{\Z_3}p_1=0\mod3$ \cite{tomonaga1964mod,tomonaga1965pontryagin} (see Appendix \ref{sec:proof} for the proof),
while ${A_{\Z_3}} (\beta_{(3,3)} {A_{\Z_3}}) (\beta_{(3,3)} {A_{\Z_3}})$
is a mod 3 class.

In \eqref{eq:Spin-Z3}, the first equality rewrites the coefficients $\frac{1}{18}=\frac{1}{9}\cdot\frac{1}{2}$ and $-\frac{1}{3\cdot 24}=\frac{1}{9}\cdot(-\frac{1}{8})$, 
since the anomaly of 4d Weyl fermion with symmetry $\Spin\times\Z_3$ contains only 3-torsion \cite{Kapustin1406.7329, 2016arXiv160406527F, GarciaEtxebarriaMontero2018ajm1808.00009, 1808.02881, GuoJW1812.11959, 2506.19710}, we can regard $2$ and $8$ as invertible in $\Z_9$.\\
The second equality uses the fact that $1=-8\mod9$ to obtain $\frac{1}{2}=-4\mod9$ and $-\frac{1}{8}=1\mod9$ and uses the fact that $A_{\Z_3}p_1=0\mod3$ \cite{tomonaga1964mod,tomonaga1965pontryagin} to rewrite $A_{\Z_3}p_1=3\cdot \frac{A_{\Z_3}p_1}{3}$.\\
The third equality uses the fact that $3 = 4 \cdot 3 \mod 9$ to rewrite $3\cdot \frac{A_{\Z_3}p_1}{3}=4\cdot 3\cdot \frac{A_{\Z_3}p_1}{3}\mod9$ and factors out the common factor $-4k$ of the two terms.

Thus the 4d fermionic anomaly 
has the anomaly index $k \in \Z_{9}$, agreeing with the bordism group classification by
$\Omega_5^{{\rm Spin} \times \Z_{3}}=\Z_9$
\cite{Kapustin1406.7329, 2016arXiv160406527F, GarciaEtxebarriaMontero2018ajm1808.00009, 1808.02881, GuoJW1812.11959, 2506.19710}.

When we have three right-handed neutrinos (3$\nu_R$),
we need to consider $k=3$ instead of $k=1$, so \eq{eq:Spin-Z3},
with $4k=-12 = -3 \mod 9$,
becomes
\bea \label{eq:alpha5}
&&
 \exp \big( \ii \frac{2 \pi}{3} (-1)  \int_{M^5} ( 
   {A_{\Z_3}} (\beta_{(3,3)} {A_{\Z_3}}) (\beta_{(3,3)} {A_{\Z_3}})
 ) \big),
\eea  
which is the generator of the bosonic group cohomology 5d iTFT 
from ${\rm H}^5(\mathbb{Z}_n,\U(1))\cong\mathbb{Z}_n$
 \cite{Wang1405.7689} with $n=3$.
Later \Sec{Sec:Symmetry-Extension} shows that
for this specific case with $N_f=3$,
the symmetry-extension \eq{eq:Z3-extension} can be used to construct the 3+1d $\Z_{N_c=3}$-gauge topological order with a low-energy 3+1d fermionic $\Z_{N_c=3}$-gauge TQFT 
(as a hypothetical sector of 3+1d dark matter).

\subsection{$\Spin^c \equiv \Spin \times_{\Z_2^\rF} \U(1)$ to
$\Spin \times_{\Z_2^\rF} {\Z_{6}^\rF}$}

\label{sec:Spin^c}

In this section, 
we start with a perturbative local anomaly of U(1) charge $q=1$ left-handed Weyl fermion 
in $\Spin^c \equiv \Spin \times_{\Z_2^\rF} \U(1)$ to
derive the
the nonperturbative global anomaly 
of ${\Z_{6,{\bf B + L}}^\rF}$ charge $q=1$ Weyl fermion 
in 
$\Spin \times_{\Z_2^\rF} {\Z_{6,{\bf B + L}}^\rF}$.

First, we compare the $\Spin^c$ gauge field and the U(1) gauge field.
{For $\Spin^c$, 
 the $\U(1) \supset {\Z_2^\rF}$ contains the fermion parity as a normal subgroup.

For the original $\U(1)$ with $c_1(\U(1))$, 
the gauge bundle constraint is $w_2(TM)= 2 c_1 \mod 2$.
In the original $\U(1)$, fermions have odd charges under $\U(1)$,
while bosons have even charges under $\U(1)$.
Call the original U(1) gauge field $A$,
then $c_1=\frac{\dd A}{2 \pi} \in \frac{1}{2}\Z$.\\

For the new $\U(1)'=\frac{\U(1)}{\Z_2^\rF}$ with $c_1(\U(1)')$,
the gauge bundle constraint is $w_2(TM)= c_1' = 2 c_1 \mod 2$.
Call the new $\U(1)'$ gauge field $A'$,
then $c_1'=\frac{\dd A'}{2 \pi}=\frac{\dd (2A)}{2 \pi} = 2 c_1\in 2\frac{1}{2}\Z = \Z$.\\

To explain why $A' = 2 A$ or $ c_1' = 2 c_1$, we look at the Wilson line operator
$\text{$\exp(\ii q' \oint A')$ and $\exp(\ii q \oint A)$.}$
The original $\U(1)$ has charge transformation $\exp(\ii q \theta)$ with $\theta \in [0, 2 \pi)$,
while the new $\U(1)'$ has charge transformation $\exp(\ii q' \theta')$ with $\theta' \in [0, 2 \pi)$.
But the $\U(1)'=\frac{\U(1)}{\Z_2^\rF}$, so the $\theta=\pi$ in the old $\U(1)$ 
is identified as $\theta'=2\pi$ as a trivial zero
in the new $\U(1)'$.
In the original $\U(1)$, the $q \in \Z$ to be compatible with $\theta \in [0, 2 \pi)$.
In the new $\U(1)'$, the original $q$ is still allowed to have $2\Z$ to be compatible with $\theta \in [0, \pi)$;
but the new $q'=\frac{1}{2} q \in \Z$
and the new $\theta'= 2 \theta  \in [0, 2 \pi)$ are scaled accordingly.
Since the new $q'=\frac{1}{2} q \in \Z$, we show the new $A'=2 A$.}

The perturbative local anomaly of charge $q=1$ left-handed Weyl fermion 
of $\Spin^c$ symmetry
in 3+1d or 4d
is captured by a 5d invertible field theory (iTFT) with the anomaly index $k'=1$
\cite{Putrov:2023jqi2302.14862}:
\begin{equation}
   \exp( \ii k'  \int_{M^5} A' \frac{(2c_1)^2}{48}-A' \frac{p_1}{48}).
\end{equation} 
Now we redefine the U(1) gauge field $A'$ as a $\Z_3$ gauge field $ A'_{\Z_3} \in\H^1(\B\Z_3,\Z_3)= \Z_3$
with the following replacement:
\bea
A' &\mapsto& \frac{2 \pi}{3}   A'_{\Z_3}.\cr 
2c_1=c_1' = \frac{\dd A'}{2 \pi} &\mapsto& 
\frac{\dd   A'_{\Z_3}}{3} \equiv \beta_{(3,3) } A'_{\Z_3}. 
\eea
The $\beta_{(n,m)}: \H^*(-,\Z_m) \mapsto 
\H^{*+1}(-,\Z_n) $ 
is the Bockstein 
 homomorphism associated with the extension
 $\Z_n \stackrel{\cdot m}{\to} \Z_{nm} \to \Z_m$. Thus we get the 5d topological invariant of the $\Spin \times \Z_3$ as:
\bea\label{eq:Spin-Z6-Z2}
   &&\exp \big( \ii 2 \pi k'  \int_{M^5} 
   ( \frac{1}{144} 
   {A'_{\Z_3}} (\beta_{(3,3)} {A'_{\Z_3}}) (\beta_{(3,3)} {A'_{\Z_3}})
   -\frac{1}{3 \cdot 48} {A'_{\Z_3}} p_1 ) \big) \cr
   &=&\exp \big( \ii \frac{2 \pi}{9} k'  \int_{M^5} ( \frac{1}{16} 
   {A'_{\Z_3}} (\beta_{(3,3)} {A'_{\Z_3}}) (\beta_{(3,3)} {A'_{\Z_3}})
   -\frac{1}{16} {A'_{\Z_3}} p_1 ) \big)\cr
   &=&\exp \big( \ii \frac{2 \pi}{9} k'  \int_{M^5} ( 
   4{A'_{\Z_3}} (\beta_{(3,3)} {A'_{\Z_3}}) (\beta_{(3,3)} {A'_{\Z_3}})
   -4\cdot 3\cdot {} \frac{A'_{\Z_3}p_1}{3} ) \big)\cr
   &=&\exp \big( \ii \frac{2 \pi}{9} (4k')  \int_{M^5} ( 
   {A'_{\Z_3}} (\beta_{(3,3)} {A'_{\Z_3}}) (\beta_{(3,3)} {A'_{\Z_3}})
   - 3\cdot {} \frac{A'_{\Z_3}p_1}{3} ) \big).
\eea  
Here $\beta_{(3,3)}:\H^1(-,\Z_3)\to \H^2(-,\Z_3)$ is the Bockstein homomorphism.
Here ${} \frac{A'_{\Z_3}p_1}{3}$ is a mod 3 class that involves Pontryagin class because $A'_{\Z_3}p_1=0\mod3$ \cite{tomonaga1964mod,tomonaga1965pontryagin} (see Appendix \ref{sec:proof} for the proof),
while ${A'_{\Z_3}} (\beta_{(3,3)} {A'_{\Z_3}}) (\beta_{(3,3)} {A'_{\Z_3}})$
is a mod 3 class.

In \eqref{eq:Spin-Z6-Z2}, the first equality rewrites the coefficients $\frac{1}{144}=\frac{1}{9}\cdot\frac{1}{16}$ and $-\frac{1}{3\cdot48}=\frac{1}{9}\cdot(-\frac{1}{16})$ since the anomaly of 4d Weyl fermion with symmetry $\Spin\times_{\Z_2^{\rF}}\Z_6$ contains only 3-torsion \cite{Kapustin1406.7329, 2016arXiv160406527F, GarciaEtxebarriaMontero2018ajm1808.00009, 1808.02881, GuoJW1812.11959, 2506.19710}, we can regard $16$ as invertible in $\Z_9$.\\
The second equality uses the fact that $1=64\mod9$ to obtain $\frac{1}{16}=4\mod9$ and uses the fact that $A'_{\Z_3}p_1=0\mod3$ \cite{tomonaga1964mod,tomonaga1965pontryagin} to rewrite $A'_{\Z_3}p_1=3\cdot \frac{A'_{\Z_3}p_1}{3}$.\\
The third equality factors out the common factor $4k'$ of the two terms.

Eq.\eqref{eq:Spin-Z3} and \eqref{eq:Spin-Z6-Z2} give the same 5d topological invariant since 
\bea 
A'_{\Z_3}=-A_{\Z_3} \mod 3, \text{ and }
k' = k.
\eea
Here, $A'_{\Z_3}=2A_{\Z_3}=-A_{\Z_3}\mod 3$, because $c_1'=2c_1$ and $A'=2A$.\\
{Here, $k' = k$, because we derive from the same perturbative anomaly from the same charge $q=1$ Weyl fermion for both $\Spin \times \U(1)$ and
$\Spin^c \equiv \Spin \times_{\Z_2^\rF} \U(1)$ symmetries.}

Thus, the conclusion here in \Sec{sec:Spin^c} follows the same as \Sec{sec:SpinxU(1)}.

\section{Symmetry Extension $1 \to \mathbb{Z}_{N_c=3}\to \mathbb{Z}_{N_cN_f=9}\to \mathbb{Z}_{N_f=3}\to 1$, Anomaly Trivialization, and 3+1d Anomalous $\Z_3$-Gauge Topologically Ordered Dark Sector}

\label{Sec:Symmetry-Extension}

The fermionic anomaly 
in 3+1d or 4d spacetime
with
$\Z_n$ symmetry is classified by
fermionic Spin bordism group $\Omega_5^{{\rm Spin} \times \Z_{n}}$
\cite{Kapustin1406.7329, 2016arXiv160406527F, GarciaEtxebarriaMontero2018ajm1808.00009, 1808.02881, GuoJW1812.11959,
2506.19710}
which is isomorphic to 
bosonic SO bordism $\Omega_5^{\rm SO}(\B\Z_{n})$ 
up to 2-torsion term when $2 \nmid n$,
namely
\bea
&&\Omega_5^{\rm Spin}(\B\Z_n) \cong \tilde{\Omega}_5^{\rm SO}(\B\Z_n), \quad 2 \nmid n \\
&& \Omega_5^{\rm Spin}(\B\Z_{3^r \cdot s}  )
\cong \tilde{\Omega}_5^{\rm SO}(\B\Z_{3^r \cdot s}  )
\cong \Z_{3^{r+1}}\oplus \Z_{3^{r-1}}\oplus 
\Z_s \oplus 
\Z_s, \quad 2\nmid s, \quad 3\nmid s.
\eea
Here $\tilde{\Omega}_5^{\rm SO}(\B G):=\Omega_5^{\rm SO}(\B G)/\Omega_5^{\rm SO}$ is the reduced bordism group, modding out the $\Omega_5^{\rm SO}=\Omega_5^{\rm SO}(pt)$.

In Appendix \ref{sec:trivialized}, we prove that only the group cohomology subclass 
($\H^5(\Z_n,\U(1))\cong \Z_n$) 
anomaly\footnote{For $n=3$ as $\H^5(\Z_3,\U(1))\cong \Z_3$,
that generator is the anomaly of three right-handed neutrinos $3\nu_R$, which has the anomaly index $3 \in 
\Omega_5^{\rm Spin \times \Z_3} \cong \Omega_5^{\rm Spin}(B\Z_3) \cong \Z_9
$.
That is also the anomaly of the SM missing the $3\nu_R$, up to a negative sign $-1$ for the anomaly index.} 
can be canceled by 
anomalous $G$-symmetric $K=\Z_n$-gauge 4d TQFTs, via the 
appropriate symmetry-extension construction \cite{Wang2017locWWW1705.06728} of
\bea
1\to K \to G_{\rm Tot}\to G \to 1.
\eea
as 
\bea \label{eq:Zn-extension}
1\to\Z_n\to\Z_{n^2}\to\Z_n\to 1.
\eea
On the other hand, the beyond-group-cohomology subclass anomaly 
that involves 
$A p_1$ (the first cohomology class $A$
and the first Pontryagin class $p_1$)
allows no such symmetric TQFTs.

More generally and mathematically, 
in this work, for odd $d\ge3$ and any $n\ge2$,
we prove that any group cohomology cocycle 
\bea
\alpha_d \in \H^d(\Z_n,\U(1)) \cong \Z_n
\eea 
is trivialized by the group extension 
as \eq{eq:Zn-extension}'s $1\to\Z_n\to\Z_{n^2}\to\Z_n\to 1$ \cite{Wang2017locWWW1705.06728}.

In Appendix \ref{sec:cochain}, we find an explicit $(d-1)$-cochain $\beta_{d-1}$ 
that splits the $d$-cocycle $\alpha_d$ by that extension for odd $d\ge3$ and any $n\ge2$.
Namely, $\alpha_d=\delta \beta_{d-1}$ holds when pulling back
the quotient $\Z_n$ to the total $\Z_{n^2}$ group,
from the cocycle $\alpha_d$ in $\H^d(\Z_n,\U(1))$ to the coboundary 
\bea
\tilde{\alpha}_d=\delta \tilde{\beta}_{d-1}\in
\H^d(\Z_{n^2},\U(1)).
\eea

\subsection{Explicit construction of 3+1d anomalous 
TQFT by the symmetry extension $1\to\Z_3\to\Z_{9}\to\Z_3\to 1$}

As an application, for $d=5$ and $n=3$,
we prove \Refe{Wang:2025oow2502.21319}'s statement that 
the symmetry-extension via \eq{eq:Z3-extension}'s
$$
1 \to \Z_{N_c=3}\to\Z_{N_cN_f=9}\to\Z_{N_f=3}\to 1
$$
can construct a $G=\Spin \times \Z_{N_f=3}$-symmetric $K=\Z_{N_c=3}$-gauge 4d low-energy TQFT of gapped anomalous topologically ordered dark matter
via canceling the missing $N_f=3$ right-handed neutrinos $\nu_R$'s
$\Z_{6,\bf B + L}^{\rm F}$- or $\Z_{3,\bf B + L}$-gauge-gravitational anomaly in the 4d SM.
This proves a claim in \Refe{Wang:2025oow2502.21319}.\footnote{Ref.~\cite{1801.05416}'s Section 7 constructs low dimensional coupled bulk–boundary TQFTs via symmetry extension; examples include a $(2+1)$d bulk with a $(1+1)$d boundary and a $(3+1)$d bulk with a $(2+1)$d boundary. Our approach is similar to
Ref.~\cite{1801.05416}'s Sec.~7.}

More explicitly, for the 
$\alpha_5$ given in \eq{eq:alpha5},
\bea
\alpha_5=
 \exp \big( \ii \frac{2 \pi}{3}  \int_{M^5} ( 
   {A_{\Z_3}} (\beta_{(3,3)} {A_{\Z_3}}) (\beta_{(3,3)} {A_{\Z_3}})
 ) \big),
\eea  
we have $\beta_4$ with 
$\alpha_5= \delta \beta_4$,
obtained in Appendix \ref{sec:cochain}, which suggests a construction of the 5d iTFT on the bulk 5-manifold $M^5$
and the 4d noninvertible TQFT 
on the 4d boundary $M^4= \prt M^5$ 
with dynamical 
1-cochain gauge field $a_{\Z_3} \in C^1(\B\Z_3,\Z_3)$
and 2-cochain
(dual) gauge field $b_{\Z_3} \in C^2(\B\Z_3,\Z_3)$, such that the full 5d/4d coupled path integral is given by
\bea \label{eq:5d-4d}
&&\exp \big( \ii \frac{2 \pi}{3}  \int_{M^5} ( 
   {A_{\Z_3}} (\beta_{(3,3)} {A_{\Z_3}}) (\beta_{(3,3)} {A_{\Z_3}})
 ) \big) \cdot 
 \cr
&& \cdot \sum_{
 \substack{
 a_{\Z_3} \in C^1(\B\Z_3,\Z_3) \\
 b_{\Z_3} \in C^2(\B\Z_3,\Z_3)
 }}\exp \big( \ii \frac{2 \pi}{3}  \int_{M^4=\prt M^5} ( 
 b_{\Z_3} \dd a_{\Z_3}
 - b_{\Z_3} \beta_{(3,3)} {A_{\Z_3}}
 - a_{\Z_3} {A_{\Z_3}}
 \beta_{(3,3)} {A_{\Z_3}}
  ) \big) \cr
  &=&\exp \big( \ii \frac{2 \pi}{3}  \int_{M^5} ( 
   {A_{\Z_3}} (\beta_{(3,3)} {A_{\Z_3}}) (\beta_{(3,3)} {A_{\Z_3}})
 ) \big) \cdot
 \cr
&& \cdot\sum_{
 \substack{
 a_{\Z_3} \in C^1(\B\Z_3,\Z_3) \\
 b_{\Z_3} \in C^2(\B\Z_3,\Z_3)
 }}\exp \big( \ii \frac{2 \pi}{3}  \int_{M^4=\prt M^5}(  
  a_{\Z_3}(\dd  b_{\Z_3}- {A_{\Z_3}} \beta_{(3,3)} {A_{\Z_3}})-b_{\Z_3}\beta_{(3,3)} {A_{\Z_3}}
  ) \big).
\eea
This 5d/4d coupled path integral analogously matches the discrete cocycle forms or cochain forms (e.g., \cite{Wang1405.7689})
derived in Appendix \ref{sec:cochain}, as the 5-cocycle
$$\alpha_5(g_1,g_2,g_3,g_4,g_5)=\zeta_n^{g_1 [\frac{g_2+g_3}{n}] [\frac{g_4+g_5}{n}]}$$
and the 4-cochain
$$
\tilde{\beta}_4(h_1,h_2,h_3,h_4)=\zeta_n^{g_1k_2[\frac{g_3+g_4}{n}]}
$$
at $n=3$, where
$\zeta_n$ is an $n$-th root of unity
such as $\zeta_n = \exp(\frac{2 \pi \ii}{n})$,
with variables 
$g \in \Z_n$ and $k \in \Z_n$.

{The 5d bulk partition function on a 5d manifold with a 4d boundary is not gauge-invariant,
but the full 5d/4d coupled path integral \eq{eq:5d-4d} is gauge-invariant under the following gauge transformation:}
\bea\label{eq:gauge-transformation}
&& A_{\Z_3} \mapsto A_{\Z_3} + \dd \lambda_{0,\Z_3}, \cr
&& a_{\Z_3} \mapsto a_{\Z_3}  + \dd \mu_{0,\Z_3}, \cr
&& b_{\Z_3}  \mapsto b_{\Z_3} +\lambda_{0,\Z_3}\beta_{(3,3)}A_{\Z_3}  + \dd \mu_{1,\Z_3}, \cr
&& A_{\Z_3} \in\H^1(\B\Z_3,\Z_3)= \Z_3,\cr
&& a_{\Z_3} \in C^1(\B\Z_3,\Z_3),\cr
&&  b_{\Z_3} \in C^2(\B\Z_3,\Z_3),\cr
&&\lambda_{0,\Z_3} \in C^0(\B\Z_3,\Z_3),\cr
&&\mu_{0,\Z_3} \in C^0(\B\Z_3,\Z_3),\cr
&&\mu_{1,\Z_3} \in C^1(\B\Z_3,\Z_3).
\eea
Here, $C^k(\B\Z_3,\Z_3)$ is the group of $\Z_3$-valued $k$-cochains of the classifying space $\B\Z_3$. The cohomology group $\H^k(\B\Z_3,\Z_3)$ is defined as the quotient group $Z^k(\B\Z_3,\Z_3)/B^k(\B\Z_3,\Z_3)$ where $Z^k(\B\Z_3,\Z_3)$ is the group of $\Z_3$-valued $k$-cocycles of the classifying space $\B\Z_3$ and $B^k(\B\Z_3,\Z_3)$ is the group of $\Z_3$-valued $k$-coboundaries of the classifying space $\B\Z_3$.

Below we check that \eqref{eq:5d-4d} is gauge-invariant under \eqref{eq:gauge-transformation}.
Because $ \beta_{(3,3)} \dd \lambda_{0,\Z_3}=0$, $ \beta_{(3,3)} {A_{\Z_3}}$ and $\dd  b_{\Z_3}- {A_{\Z_3}} \beta_{(3,3)} {A_{\Z_3}}$ are gauge-invariant under the gauge transformation \eqref{eq:gauge-transformation}, hence \eqref{eq:5d-4d} transforms under \eqref{eq:gauge-transformation} as
 \bea
&& \exp \big( \ii \frac{2 \pi}{3}  \int_{M^5} ( 
   {A_{\Z_3}} (\beta_{(3,3)} {A_{\Z_3}}) (\beta_{(3,3)} {A_{\Z_3}})
 ) \big) \cdot 
 \cr
&& \cdot\sum_{
 \substack{
 a_{\Z_3} \in C^1(\B\Z_3,\Z_3) \\
 b_{\Z_3} \in C^2(\B\Z_3,\Z_3)
 }}\exp \big( \ii \frac{2 \pi}{3}  \int_{M^4=\prt M^5}(  
 a_{\Z_3}(\dd  b_{\Z_3}- {A_{\Z_3}} \beta_{(3,3)} {A_{\Z_3}})-b_{\Z_3}\beta_{(3,3)} {A_{\Z_3}}
  ) \big)\cr
  &\mapsto&
  \exp \big( \ii \frac{2 \pi}{3}  \int_{M^5} ( 
   ({A_{\Z_3}}+\dd \lambda_{0,\Z_3}) (\beta_{(3,3)} {A_{\Z_3}}) (\beta_{(3,3)} {A_{\Z_3}})
 ) \big) \cdot 
 \cr
&& \cdot\sum_{
 \substack{
 a_{\Z_3} \in C^1(\B\Z_3,\Z_3) \\
 b_{\Z_3} \in C^2(\B\Z_3,\Z_3)
 }}\exp \big( \ii \frac{2 \pi}{3}  \int_{M^4=\prt M^5}(  
 (a_{\Z_3}+\dd\mu_{0,\Z_3})(\dd  b_{\Z_3}- {A_{\Z_3}} \beta_{(3,3)} {A_{\Z_3}})\cr
 &&-(b_{\Z_3}+\lambda_{0,\Z_3}\beta_{(3,3)}A_{\Z_3}  + \dd \mu_{1,\Z_3})\beta_{(3,3)} {A_{\Z_3}}
  ) \big).
 \eea
Since by the Stokes theorem, we have
\bea
\int_{M^4=\prt M^5}(\dd\mu_{0,\Z_3})(\dd  b_{\Z_3}- {A_{\Z_3}} \beta_{(3,3)} {A_{\Z_3}})=0,
\eea
\bea
\int_{M^4=\prt M^5}(\dd \mu_{1,\Z_3})\beta_{(3,3)} {A_{\Z_3}}=0,
\eea
and 
\bea
\int_{M^4=\prt M^5}\lambda_{0,\Z_3}\beta_{(3,3)}A_{\Z_3}\beta_{(3,3)}A_{\Z_3}=\int_{M^5}(\dd\lambda_{0,\Z_3})\beta_{(3,3)}A_{\Z_3}\beta_{(3,3)}A_{\Z_3},
\eea
\eqref{eq:5d-4d} is gauge-invariant under \eqref{eq:gauge-transformation}.
 
 In Appendix \ref{sec:dd-d-1d}, we construct the $d$d-bulk/$(d-1)$d-boundary coupled invertible topological field theory/symmetric anomalous gapped TQFT
by the symmetry extension $1\to\Z_n\to\Z_{n^2}\to\Z_n\to 1$ for any odd $d\ge3$ and any $n\ge2$ explicitly.

Many more 3+1d anomalous fermionic
TQFTs, 
which carry a mixed gauge-gravitational
nonperturbative global anomaly
of the $G =\mathrm{Spin} \times_{\mathbb{Z}_2^{\rm F}} \Z_{2m}$ 
or $G= \mathrm{Spin} \times \mathbb{Z}_n$
symmetry,
can be found in:
Cheng-Wang-Yang's 3+1d anomalous 
fermionic $\Z_4$-gauge theory \cite{Cheng:2024awi2411.05786} (see also the bosonic analogous discussion in \cite{Yang:2023gvi2303.00719}),
the recent work of
D{\'e}coppet-Yu \cite{Decoppet:2025eic2509.10603} and Debray-Ye-Yu 
\cite{Debray:2025kfg2510.24834},
and Wan-Wang \cite{2512.25038}.
General obstructions and 
constraints on the existence of these anomalous symmetric (3+1)d TQFTs are
discussed in Cordova-Ohmori \cite{Cordova1912.13069}.
General properties of the anomalies
of
$G =\mathrm{Spin} \times_{\mathbb{Z}_2^{\rm F}} \mathbb{Z}_{2m}^{\rm F}$ or 
$\mathrm{Spin} \times \mathbb{Z}_n$
are discussed in Hsieh \cite{1808.02881}
and Wan \cite{2506.19710}, and other related nonperturbative global anomalies
are discussed in Brennan-Intriligator
\cite{2312.04756Brennan:2023vsa}. 

These TQFTs can have 
beyond-the-Standard-Model 
(BSM) applications 
\cite{Wang:2025oow2502.21319, JW2006.16996, JW2012.15860}
for canceling 
SM's nonperturbative global anomalies  
\cite{GarciaEtxebarriaMontero2018ajm1808.00009, 1808.02881, DavighiGripaiosLohitsiri2019rcd1910.11277, WW2019fxh1910.14668}.
For future directions, it 
will be interesting to explore how other nonperturbative global anomalies
can constrain other QFT-coupling-to-TQFT
systems, with other potential BSM applications in mind.

\section{Conclusion and Discussions: General $N_f$ family and General $N_c$ color Standard Model:
Topologically Ordered Dark Sector
via symmetry extension $1 \to \Z_{N_c}\to\Z_{N_cN_f}\to\Z_{N_f}\to 1$}
\label{sec:NcNfSM}

Consider the following $N_f$ family and 
$N_c$ color version of the Standard Model (SM), which
is a 4d chiral gauge theory with Yang-Mills spin-1 gauge fields of
the  Lie algebra 
\bea \label{eq:SMLieAlgebra}
\cG_{\rm SM} \equiv su(N_c) \times  su(2) \times 
u(1)_{Y}
\eea
coupling to $N_f$ families of 15 or 16 Weyl fermions (spin-$\frac{1}{2}$ Weyl spinor 
is in the ${\bf 2}_L^\C$ representation {of} the spacetime symmetry Spin(1,3),
written as a left-handed 15- or 16-plet $\psi_L$)
in the following $\cG_{\rm SM}$ representation
\cite{Bar:2001qk0105258, Shrock:1995bp9512430}
\begin{multline}
    \label{eq:SMrep}
({\psi_L})_{\rm I} =
( \bar{d}_R \oplus {l}_L  \oplus q_L  \oplus \bar{u}_R \oplus   \bar{e}_R  
)_{\rm I}
\oplus
n_{\nu_{{\rm I},R}} {\bar{\nu}_{{\rm I},R}}
\\
\sim  \big((\overline{\bf N}_c,{\bf 1})_{- (1-r) h} \oplus ({\bf 1},{\bf 2})_{-N_c h}   \oplus ({\bf N}_c,{\bf 2})_{h} \oplus (\overline{\bf N}_c,{\bf 1})_{- (1+r) h} \oplus ({\bf 1},{\bf 1})_{2 N_c h} \big)_{\rm I} \oplus n_{\nu_{{\rm I},R}} {({\bf 1},{\bf 1})_{0}}
\\
\sim 
\big((\overline{\bf N}_c,{\bf 1})_{N_c -1} \oplus ({\bf 1},{\bf 2})_{-N_c }  
\oplus
({\bf N}_c,{\bf 2})_{1} \oplus (\overline{\bf N}_c,{\bf 1})_{- (N_c + 1) } \oplus ({\bf 1},{\bf 1})_{2 N_c } \big)_{\rm I}
\oplus n_{\nu_{{\rm I},R}} {({\bf 1},{\bf 1})_{0}}
\end{multline} 
for each family,
while $h$ is an overall normalization
and $r$ is the splitting parameter that can be solved by $u(1)_Y^3$ cubic anomaly cancellation to find $r=\pm N_c$ where
$r=N_c$ gives the correct choice of $u$ and $d$ quark charges.
Here our generic 
$u(1)_{Y}$ hypercharges are solved by the following anomaly-cancellation conditions 
\bea
u(1)_{Y}\text{-}su(N_c)^2 &:&
2 Y_{q_L} + Y_{\bar{u}_R}
+ Y_{\bar{d}_R} =0, \cr
u(1)_{Y}\text{-}su(2)^2
&:& N_c Y_{q_L} +  Y_{l_L}  =0,  \cr    
u(1)_{Y}\text{-}(\text{gravity})^2
&:& 2 N_c Y_{q_L} + N_c Y_{\bar{u}_R}
+ N_c Y_{\bar{d}_R} +  2 Y_{l_L}+
Y_{\bar{e}_R} + Y_{\bar{\nu}_R}=0,\cr
u(1)_{Y}^3 &:& 2 N_c Y_{q_L}^3 + N_c Y_{\bar{u}_R}^3
+ N_c Y_{\bar{d}_R}^3 +  2 Y_{l_L}^3+
Y_{\bar{e}_R}^3 + Y_{\bar{\nu}_R}^3=0,\cr
({\bf B-L})\text{-}u(1)_{Y}^2 &:& 
(2 Y_{q_L}^2 -Y_{\bar{u}_R}^2
-Y_{\bar{d}_R}^2)
-(2 Y_{l_L}^2-Y_{\bar{e}_R}^2)=0,
\eea
and the solution when $Y_{\nu_{{\rm I},R}}=0$ is 
given by
\bea \label{eq:hypercharge}
(Y_{\bar{d}_R}, Y_{{l}_L},  Y_{q_L},  Y_{\bar{u}_R} ,    Y_{\bar{e}_R}, Y_{\nu_{{\rm I},R}} ) 
= 
h \times ( N_c-1 , -N_c, 1, - (N_c+1) , 2 N_c , 0)
\eea
At $h=1$, $N_c=3$, we get
\bea
 (Y_{\bar{d}_R}, Y_{{l}_L},  Y_{q_L},  Y_{\bar{u}_R} ,    Y_{\bar{e}_R}, Y_{\nu_{{\rm I},R}} ) 
= 
( 2 , -3, 1, - 4 , 6 , 0).
\eea
So $N_c=3$ typically goes as
\begin{multline}
({\psi_L})_{\rm I} =
( \bar{d}_R \oplus {l}_L  \oplus q_L  \oplus \bar{u}_R \oplus   \bar{e}_R  
)_{\rm I}
\oplus
n_{\nu_{{\rm I},R}} {\bar{\nu}_{{\rm I},R}}
\sim 
\big((\overline{\bf N}_c,{\bf 1})_{2} \oplus ({\bf 1},{\bf 2})_{-3}  
\oplus
({\bf N}_c,{\bf 2})_{1} \oplus (\overline{\bf N}_c,{\bf 1})_{-4} \oplus ({\bf 1},{\bf 1})_{6} \big)_{\rm I}
\oplus n_{\nu_{{\rm I},R}} {({\bf 1},{\bf 1})_{0}}
\end{multline}

The total number of Weyl fermions in $N_f$ family for the whole multiplet of 
\eq{eq:SMrep} is
\bea
N_f (4 N_c + 3) + \sum_{{\rm I}}  n_{\nu_{{\rm I},R}}.
\eea
For $N_f = N_c = 3$, this total number becomes 
$3 \cdot 15 + \sum_{{\rm I}} n_{\nu_{{\rm I},R}}$.

Now the Witten SU(2) anomaly-free \cite{Witten1982fp} demands that the total number of 
{\bf 2} dimensional representations of SU(2) 
Weyl fermions need to be an even integer:
\bea \label{eq:Witten}
&&N_f(N_c+1) \in 2 \Z
\text{ for Witten SU(2) anomaly-free,}
\cr
&& \text{ so either } 
  \left\{
\begin{array}{l}
N_f \in \Z_{\text{odd}}, N_c \in \Z_{\text{odd}},  {\text{ thus baryon is a fermion}}.\\
N_f \in \Z_{\text{even}}, N_c \in \Z,
 {\text{ thus baryon can be a fermion ($N_c \in \Z_{\text{odd}}$) or a boson ($N_c \in \Z_{\text{even}}$)}}.
\end{array}
    \right.
\eea

Again, we treat the anti-particle of right-handed neutrino $\bar{\nu}_R$ as the left-handed particle, whose quantum numbers (here discrete charges of abelian global symmetries) are given by:
\bea
\begin{tabular}{| c  | c | c |  c |    c | c | c |  c |  c |  }
\hline
  &  $\U(1)_{\bf L}$ & $\U(1)_{{\bf B}-{\bf L}}$ & 
  $\Z_{2 N_f ,{{\bf B} + {\bf L}}}^\rF$ 
  & $\Z_2^\rF$ & 
  $\Z_{2 N_c N_f, {\bf Q} + 3 {\bf L}}^\rF$ 
  \\
\hline
\hline
$\bar{\nu}_R$ &  $-1$ & 1 & $-1$  & $1$ & $-N_c$ \\
\hline
\end{tabular}
\eea
Moreover, only when $N_f$ and 2 are coprime, namely their greatest common divisor is $\gcd(N_f,2)=1$,
such as $N_f=3,5,7,\dots$, then we further have  $\Z_{2 N_f}^\rF=
\Z_{2}^\rF \times \Z_{N_f}$, such that
$\bar{\nu}_R$ has a well-defined $\Z_{N_f ,{{\bf B} + {\bf L}}}$ charge $-1$:
\bea
\begin{tabular}{| c  | c |  }
\hline
  &   $
  \Z_{2N_f ,{{\bf B} + {\bf L}}}^\rF 
=\Z_{2}^\rF \times 
\Z_{N_f ,{{\bf B} + {\bf L}}}$
  \\
\hline
\hline
$\bar{\nu}_R$ &  $-1 \sim 1 \cdot -1$ \\
\hline
\end{tabular}, \quad 
\gcd(N_f,2)=1.
\eea
Furthermore, only when $N_c N_f$ and 2 are coprime, 
namely their greatest common divisor is $\gcd(N_c N_f,2)=1$,
then we further have  $\Z_{2 N_cN_f}^\rF=
\Z_{2}^\rF \times \Z_{N_c N_f}$, such that
$\bar{\nu}_R$ has a well-defined $\Z_{N_cN_f,
{{\bf Q} + N_c{\bf L}}=
N_c({{\bf B} + {\bf L}})
}$ charge $-N_c$:
\bea
\begin{tabular}{| c  | c |  }
\hline
  &   $
  \Z_{2 N_c N_f ,{{\bf Q} + N_c{\bf L}}}^\rF 
=\Z_{2}^\rF \times 
\Z_{N_c N_f ,{{\bf Q} + N_c{\bf L}}}$
  \\
\hline
\hline
$\bar{\nu}_R$ &  
$-N_c \sim 1 \cdot 
-N_c$ \\
\hline
\end{tabular}
, \quad  
\gcd(N_c N_f,2)=1.
\eea
For a generic $N_f$-family SM missing some
$\bar{\nu}_R$, we have to determine its anomaly index in
$\Omega_5^{{\rm Spin} \times_{\Z_2^\rF} \Z_{2N_f ,{{\bf B} + {\bf L}}}^\rF}$, which we require the following bordism group classification
(let $N_f={2^{p}  \cdot  3^r \cdot s}$)
\bea\label{eq:bordism-group}
&&\Omega_5^{{\rm Spin} \times_{\Z_2^\rF} {\Z_{2^{p+1}  \cdot  3^r \cdot s
}}} \cong
\Omega_5^{{\rm Spin} \times_{\Z_2^\rF} {\Z_{2^{p+1} }}} 
\oplus \tilde{\Omega}_5^{\rm SO}(\B\Z_{3^r \cdot s}  )
 \cr
&&=
\Z_{2^{p+3}}\oplus \Z_{2^{p-1}}\oplus
\Z_{3^{r+1}}\oplus \Z_{3^{r-1}}\oplus 
\Z_s \oplus 
\Z_s,\quad p\ge1,\quad r\ge1,\quad 2\nmid s,\quad 3\nmid s.  
\eea
Here $\tilde{\Omega}_5^{\rm SO}(\B G):=\Omega_5^{\rm SO}(\B G)/\Omega_5^{\rm SO}$ is the reduced bordism group, modding out the $\Omega_5^{\rm SO}=\Omega_5^{\rm SO}(pt)$.

Now we ask two general questions relevant for high-energy phenomenology for
$G=\Spin \times_{\Z_2^\rF} \Z_{2 N_f,{{\bf B} + {\bf L}}}$ symmetry:\\
\begin{enumerate}
\item
For 
a single charge $q=-1 \in \Z_{2 N_f,{{\bf B} + {\bf L}}} \subset \Spin \times_{\Z_2^\rF} \Z_{2 N_f,{{\bf B} + {\bf L}}}$-symmetry
$\bar{\nu}_R$, can there exist a symmetric-gapped 4d TQFT matching the
$\bar{\nu}_R$'s symmetry
and anomaly in the full
$\Spin \times_{\Z_2^\rF} \Z_{2 N_f,{{\bf B} + {\bf L}}}$?

The answer to this question is the same as asking in the case of a single charge $q=1 \in \Z_{2 N_f} \subset \Spin \times_{\Z_2^\rF} \Z_{2 N_f}$-symmetry  Weyl fermion, up to a $-1$ sign of the chosen basis. The answer is no, there exists no such symmetric-gapped 4d TQFT
matching $q=1$ or $-1$ Weyl fermion's anomaly in general.

\item
For $N_f$ copies of
charge $q=-1 \in \Z_{2 N_f,{{\bf B} + {\bf L}}} \subset \Spin \times_{\Z_2^\rF} \Z_{2 N_f,{{\bf B} + {\bf L}}}$-symmetry
$\bar{\nu}_R$, can there exist a symmetric-gapped 4d TQFT matching the
$\bar{\nu}_R$'s symmetry
and anomaly in the full
$\Spin \times_{\Z_2^\rF} \Z_{2 N_f,{{\bf B} + {\bf L}}}$?

The answer to this question is the same as asking in the case of $N_f$ copies of charge $q=1 \in \Z_{2 N_f} \subset \Spin \times_{\Z_2^\rF} \Z_{2 N_f}$-symmetry  Weyl fermions, up to a $-1$ sign of the chosen basis.
The answer is in general yes, there exists such symmetric-gapped 4d TQFT
matching $N_f$ copies of $q=1$ or $-1$ Weyl fermion's anomaly in general. 

But there is a refined question:
Is this $N_f$ copies of Weyl fermion
anomaly within a group cohomology (GC) class or 
beyond a group cohomology (BGC) class?

\begin{enumerate}

\item  For group cohomology (GC) class,
\Refe{Wang2017locWWW1705.06728} shows that there always exists a symmetric anomalous gapped boundary with a finite abelian gauge group 
as the low-energy TQFT (here in 4d) to cancel the GC class

\item For beyond a group cohomology (BGC) class,
\Refe{Wang2017locWWW1705.06728} cannot show that a symmetric anomalous gapped boundary with a finite abelian gauge group exists or not. But we are able to determine what are the 
minimal finite group $K$ symmetry extension
that can trivialize the anomaly in $G$ via 
pulling back through
$1\to K \to G_{\rm Tot}\to G \to 1$
to the anomaly-free in $G_{\rm Tot}$.

\end{enumerate}

\end{enumerate}

\subsection{Proof of a Theorem}

Due to Witten's SU(2) anomaly-free constraint in
\eq{eq:Witten},
we summarize the results in two cases,
\eq{eq:Nfodd} and \eq{eq:Nfeven}.
Here we determine the {\bf minimal finite abelian $K$-gauge group extension} for 
the generalized SM with $N_f$ family number (in the column)
and $N_c$ color number (in the row). The symmetry-extension trivialization via $K$ means that we can replace $N_f$ copies of $\bar{\nu}_R$ by $K$-gauge
symmetric-gapped 4d TQFT (namely with a gauge group $K$).
The color index $N_c$ does \emph{not} directly affect the minimal 
$K$-gauge TQFT. But for a certain appropriate $N_c$,
when $K=\Z_{N_c}$, there is an interesting interplay between 
$N_c$ color and $N_f$ family.
\begin{enumerate}
\item The $N_f \in \Z_{\text{odd}}$ and $N_c \in \Z_{\text{odd}}$ case gives rise to the following relation in a table:
{\fontsize{9}{10}\selectfont
\bea \label{eq:Nfodd}
\begin{tabular}{| c  | c | c | c | c | c |  c |  c |  c |  c |}
\hline
$K$-group extension 
 &   $N_f=1$ &  
  \textcolor{white}{$N_f=2$}
 &   $N_f=3$ &
\textcolor{white}{$N_f=4$} 
 &   $N_f=5$ &   
\textcolor{white}{$N_f=6$} 
 &   $N_f=7$
 &   
\textcolor{white}{$N_f=8$} 
 &   $N_f=9$
  \\
\hline
\hline
  &   &    &    & &    &    & & &\\
$N_c=3$ &  No   &    &   \cblue{$\Z_{3=N_c}$} GC & & Trivial & & Trivial  & & {$\Z_{3}$} GC (9 $\bar{\nu}_{R}$) \\ 
  &    &    &  &  & & & &&\\
$N_c=5$ & No & & $\Z_3$ GC  & & Trivial & & Trivial  &&  \\
  &   & & & &&& &&  \\
$N_c=7$ & No & & $\Z_3$ GC &  &
Trivial & & Trivial && \\
  &   & & & &&& &&  \\
$N_c=9$ & No & & $\Z_3$ GC &  &
Trivial & & Trivial && 
{$\Z_{9}$} GC (3 $\bar{\nu}_{R}$)  \\
\hline
\end{tabular}.
\eea
}

\begin{enumerate}

\item
For $N_f=1$, there is no anomaly. We consider the SM missing $N_f=1$ $\bar{\nu}_{R}$, thus ``No'' means ``No anomaly'' (even for a single Weyl fermion) and ``No anomalous TQFT is required.''

\item
For $N_f=3$, there is a $\Omega_5^{{\rm Spin} \times_{\Z_2^\rF} {\Z_{6}^\rF}}=
\Omega_5^{{\rm Spin} \times {\Z_{3}}}=\Z_9$ class anomaly. We consider the SM missing $N_f=3$ $\bar{\nu}_{R}$, where ``$\Z_3$ GC'' means $K=\Z_3$-gauge TQFT
can match the group cohomology $\Z_3$ subclass anomaly. When $N_c=3$, we have
a $K=\Z_{N_c=3}$-gauge TQFT that the $K$-extension matches an  intriguing $\Z_{N_c=3}$-color extension,
because surprisingly ${N_c} = {N_f=3}$ in this case.

\item
For $N_f=5$, $N_f=7$, or other
$N_f$ such that $2 \nmid N_f$
and $3 \nmid N_f$, there is a $\Omega_5^{{\rm Spin} \times_{\Z_2^\rF} {\Z_{2 N_f}}}=
\Omega_5^{{\rm Spin} \times {\Z_{N_f}}}=\Z_{N_f} 
\oplus \Z_{N_f}$ class anomaly.
Although a generic number of
$\bar{\nu}_{R}$ can contribute
an anomaly, 
when we consider the SM missing
$N_f$ $\bar{\nu}_{R}$,
the total anomaly class is trivial, thus we write ``Trivial'' in this case and there is also no need to have any 4d TQFT to cancel the anomaly.

\item
For $N_f=9$, there is a $\Omega_5^{{\rm Spin} \times_{\Z_2^\rF} {\Z_{18}^\rF}}=
\Omega_5^{{\rm Spin} \times {\Z_{9}}}=
\Z_{27} \oplus \Z_3$ class anomaly. 

We consider the SM missing $N_f=9$ $\bar{\nu}_{R}$, where ``$\Z_3$ GC'' means $K=\Z_3$-gauge TQFT
can match the group cohomology $\Z_3$ subclass anomaly. When $N_c=3$, we have
a $K=\Z_{N_c=3}$-gauge TQFT that the $K$-extension matches a  $\Z_{N_c=3}$-color extension,
but ${N_c=3} 
\neq {N_f=9}$ in this case.

Instead if we consider the ($N_f=9$)-SM missing $3$ $\bar{\nu}_{R}$, there ``$\Z_9$ GC'' means $K=\Z_9$-gauge TQFT
can match the group cohomology $\Z_9$ subclass anomaly. When $N_c=9$, we have
a $K=\Z_{N_c=9}$-gauge TQFT that the $K$-extension matches a  $\Z_{N_c=9}$-color extension,
although ${N_c} 
={N_f=9}$ in this case, we need to have 6 extra $\bar{\nu}_{R}$ added into the SM.

Thus we show that 
${N_c} = {N_f=3}$ case is more natural in terms of the $\Z_{N_c}$-color extension. 

For $N_f=9$, or more generally higher 3-power cases 
$$
N_f=3^r, r \geq 2,
$$
for the ($N_f=3^r$)-SM missing $N_f=3^r$ copies of sterile neutrinos $\bar{\nu}_{R}$, 
we again have ``$\Z_3$ GC'' means $K=\Z_3$-gauge TQFT
can match the group cohomology $\Z_3$ subclass anomaly. 
When $N_c=3$, we have
a $K=\Z_{N_c=3}$-gauge TQFT that the $K$-extension matches a $\Z_{N_c=3}$-color extension,
but ${N_c =3} \neq {N_f=3^r}$, with $r \geq 2$ in this case.

Here we need to quote our results derived in Appendices \ref{sec:trivialized}
and \ref{sec:Z3r}, the $k=3^r$ anomaly
of the 4d Weyl fermion for $k=3 \in 
\Z_{3^{r+1}} 
\subset 
 \Omega_5^{\rm Spin}(\B\Z_{3^r }  )
\cong
\Z_{3^{r+1}}\oplus \Z_{3^{r-1}}$
\eq{eq:E7}  with $\Spin\times\Z_{3^r}$ symmetry can be trivialized by a $\Z_3$ extension. 

\end{enumerate}

\item
The $N_f \in \Z_{\text{even}}$ and $N_c \in \Z$
case gives rise to the following $K$-extension:
{\fontsize{10}{12}\selectfont
\bea \label{eq:Nfeven}
\begin{tabular}{| c  | c | c | c | c | c |  c |  c |}
\hline
$K$-group extension  &  
\textcolor{white}{$N_f=1$}
&   $N_f=2$ &   
\textcolor{white}{$N_f=3$}
&   $N_f=4$  & 
\textcolor{white}{$N_f=5$}
&   $N_f=6$ &  
\textcolor{white}{$N_f=7$}
  \\
\hline
\hline
$N_c=2$ &  & 
{$\Z_{4}$} BGC &  & {$\Z_{4}$} BGC & & {$\Z_{12}$} BGC &\\
$N_c=3$ &   & 
{$\Z_{4}$} BGC &    & {$\Z_{4}$} BGC   &   & {$\Z_{12}$} BGC &  \\ 
$N_c=4$ &    &  \cblue{$\Z_{4=N_c}$} BGC   &  &  \cblue{$\Z_{4=N_c}$} BGC & & {$\Z_{12}$} BGC & \\
$N_c=5$ &   & {$\Z_{4}$} BGC & & {$\Z_{4}$} BGC & & {$\Z_{12}$} BGC &  \\
$N_c=6$ &  & {$\Z_{4}$} BGC & & {$\Z_{4}$} BGC & & {$\Z_{12}$} BGC & \\
$N_c=7$ & & {$\Z_{4}$} BGC & & {$\Z_{4}$} BGC & & {$\Z_{12}$} BGC &   \\
$\dots$ &  & $\dots$ & & $\dots$ & & $\dots$ &  \\
$N_c=12$ &   & {$\Z_{4}$} BGC  &    &  {$\Z_{4}$} BGC   &    &  \cblue{$\Z_{12=N_c}$} BGC &\\
\hline
\end{tabular}.
\eea
}

\begin{enumerate}

\item
For $N_f=2$, there is a $\Omega_5^{{\rm Spin} \times_{\Z_2^\rF} {\Z_{4}^\rF}}=
 \Z_{16}$ class anomaly. We consider the SM missing $N_f=2$ $\bar{\nu}_{R}$, there ``$\Z_4$ BGC'' means $K=\Z_4$-gauge TQFT
can match the beyond-the-group-cohomology (BGC) $\Z_8$ subclass anomaly. When $N_c=4$, we have
a $K=\Z_{N_c=4}$-gauge TQFT that the $K$-extension matches a $\Z_{N_c=4}$-color extension, but ${N_c=4} \neq {N_f=2}$ in this case.

\item
For $N_f=4$, there is a $\Omega_5^{{\rm Spin} \times_{\Z_2^\rF} {\Z_{8}^\rF}}=
 \Z_{32} \oplus \Z_2$ class anomaly.
 We consider the SM missing $N_f=4$ $\bar{\nu}_{R}$, there ``$\Z_4$ BGC'' means $K=\Z_4$-gauge TQFT
can match the beyond-the-group-cohomology (BGC) $\Z_8$ subclass anomaly.
When $N_c=4$, we have
a $K=\Z_{N_c=4}$-gauge TQFT that the $K$-extension matches a $\Z_{N_c=4}$-color extension, so ${N_c} = {N_f=4}$ in this case.

\item
For $N_f=6$, there is a $\Omega_5^{{\rm Spin} \times_{\Z_2^\rF} {\Z_{12}^\rF}}=
 \Z_{16} \oplus \Z_9$ class anomaly.
We consider the SM missing $N_f=6$ $\bar{\nu}_{R}$, there ``$\Z_{12}$ BGC'' means $\Z_4$-gauge TQFT
can match the beyond-the-group-cohomology (BGC) $\Z_8 \in \Z_{16} $ subclass anomaly
and an additional $\Z_3$-gauge TQFT
can match the group-cohomology (GC) $\Z_3 \in \Z_{9} $ subclass anomaly. 
When $N_c={12}$, we have
a $K=\Z_{N_c=12}$-gauge TQFT that the $K$-extension matches a $\Z_{N_c=12}$-color extension, but ${N_c=12} \neq {N_f=6}$ in this case.

\item
For $N_f \in \Z_{\text{even}}$, we can go through similar discussions like the above.

\end{enumerate}

\end{enumerate}

To conclude, for the general $N_f$ family and general $N_c$ color Standard Model (SM),
we first determine the minimal cyclic gauge group $K=\Z_N$ that can replace the
$N_f$ missing sterile neutrinos $\bar{\nu}_{R}$ by a 4d anomalous
$\Spin\times_{\Z_2^\rF}\Z_{2 N_f,{{\bf B}+{\bf L}}}^\rF$-symmetric
$\Z_N$-gauge TQFT. The corresponding symmetry extension is
\bea
1 \to \Z_N\to \Spin\times \Z_{NN_f}\to 
\Spin \times_{\Z_2^\rF} \Z_{2 N_f,{{\bf B}+{\bf L}}}^\rF
\to 1.
\eea
The minimal extension order is controlled by the $2$- and $3$-primary parts
of $N_f$:
\bea
  \left\{
\begin{array}{l}
N=1,\; 2\nmid N_f,\; 3\nmid N_f, {\text{ no nontrivial extension is required}}.\\
N=3,\; 2\nmid N_f,\; 3\mid N_f, {\text{ the group-cohomology branch}}.\\
N=4,\; 2\mid N_f,\; 3\nmid N_f, {\text{ the $2$-primary beyond-group-cohomology branch}}.\\
N=12,\; 2\mid N_f,\; 3\mid N_f, {\text{ both the $2$- and $3$-primary branches}}.
\end{array}
    \right.
\eea
The color number $N_c$ enters only after this minimal TQFT gauge group is
compared with the color center. If $N_f$ is odd, then Witten anomaly
cancellation requires $N_c$ to be odd, and the minimal extension above also
has odd $N$. When one further imposes the color-center identification
$N=N_c$, the $\Z_N$-gauge TQFT can coincide with $Z(\SU(N_c))$ and the
baryon-to-quark symmetry extension. We shall now prove the precise theorem,
which is a restatement of Theorem \ref{main-theorem}.

\begin{theorem}\label{main-theorem-rewrite}
Let $N_f=2^p3^r s$ with $p,r\ge0$, $2\nmid s$, and $3\nmid s$.
Assume that the anomaly of $N_f$ copies of the 4d charge $q=1$ Weyl
fermion with symmetry $\Spin\times_{\Z_2^{\rF}}\Z_{2N_f,{{\bf B}+{\bf L}}}^{\rF}$ is
trivialized by a minimal cyclic extension $K=\Z_N$. Then
\bea
N=\left\{
\begin{array}{ll}
1, & p=0,\ r=0,\\
3, & p=0,\ r\ge1,\\
4, & p\ge1,\ r=0,\\
12, & p\ge1,\ r\ge1.
\end{array}\right.
\eea
Equivalently, the minimal nonzero possibilities are
\bea
\left\{
\begin{array}{llll}
N=1,&N_f\ge 1,& 2\nmid N_f,& 3\nmid  N_f,\\
N=3,&N_f\ge 3,& 2\nmid N_f,& 3\mid  N_f,\\
N=4,&N_f\ge 2,& 2\mid N_f,& 3\nmid  N_f,\\
N=12,&N_f\ge 6,& 2\mid N_f,& 3\mid  N_f.
\end{array}\right.
\eea
If $N_f$ is odd, then the symmetry extension requires the minimal $N$ to
be odd, hence the only anomalous extension branch is $N=3$. If, in
addition, the Witten-anomaly/baryon-fermion condition requires $N_c$ to
be odd and the TQFT gauge group is identified with the color center by
$N=N_c$, then the unique minimal solution is
\bea
N=N_c=N_f=3.
\eea
\end{theorem}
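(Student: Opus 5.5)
The proof decomposes the anomaly into primary components and treats each prime separately; then it combines them with the Chinese remainder theorem.

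\emph{Step 1: primary decomposition.} I will start from the bordism decomposition \eqref{eq:bordism-group}. Writing $N_f=2^p3^rs$, the group $\Omega_5^{\Spin\times_{\Z_2^\rF}\Z_{2N_f}^\rF}$ splits into a $2$-primary part, a $3$-primary part and an $s$-part. I will identify the anomaly class of $N_f$ charge-$1$ Weyl fermions as $N_f$ times the single-fermion class, componentwise:
\begin{itemize}
\item In the $s$-part $\Z_s\oplus\Z_s$ the factor $N_f$ is divisible by $s$, so this component vanishes. This is where Appendix \ref{sec:anomaly-nmid23} enters.
\item In the $3$-part the single fermion generates $\Z_{3^{r+1}}$ (Appendix \ref{sec:Z3r}), so $N_f$ copies give $3^r\cdot\Z_{3^{r+1}}$, which is a $\Z_3$.
\item In the $2$-part the single fermion generates $\Z_{2^{p+3}}$ (Appendix \ref{sec:Z2p}), so $N_f$ copies give $2^p\cdot\Z_{2^{p+3}}$, which is a $\Z_8$.
\end{itemize}
The last two components are nonzero exactly when $r\ge1$, respectively $p\ge1$. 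This already yields $N=1$ when $p=r=0$.

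\emph{Step 2: each prime separately.}

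For the prime $3$, I will use $A_{\Z_3}p_1=0 \bmod 3$ (Appendix \ref{sec:proof}), rewritten as in \eqref{eq:Spin-Z3}. This shows that the surviving order-$3$ class is the group-cohomology generator of $\H^5(\Z_3,\U(1))$, pulled back along $\Z_{3^r}\to\Z_3$ (up to a unit). Appendix \ref{sec:trivialized} and the $n=3$ case of the extension $1\to\Z_3\to\Z_9\to\Z_3\to1$ then trivialize it. Here I will phrase the extension as $\Z_{3^{r+1}}\to\Z_{3^r}$. Minimality is immediate: the class is nonzero, so $K$ cannot be trivial, and a kernel of order prime to $3$ induces an injection on $3$-primary parts by a transfer argument.

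For the prime $2$, I will quote \cite{2512.25038}: the $\Z_8$ class $2^p\cdot\Z_{2^{p+3}}$ is trivialized by a $\Z_4$ extension but not by $\Z_2$. The $\Z_2$ case fails because the class is beyond group cohomology, namely its restriction to $\Spin\times_{\Z_2^\rF}\Z_4^\rF$ gives $\nu=2\in\Z_{16}$, which survives pullback along a $\Z_2$ extension.

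\emph{Step 3: combining.} For $K=\Z_N$ with $N=N_2N_3$ coprime factors, the pullback along $\Z_{NN_f}\to\Z_{2N_f}$ acts on each primary component through the corresponding primary factor only. The extension restricted to a $p$-primary component is the pullback by the $p$-part of the kernel, since the prime-to-$p$ part again acts injectively by transfer. Hence minimal $N$ is the product of the minimal primary orders, which gives $1,3,4,12$.

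\emph{Step 4: parity and color matching.} The parity statement follows from the footnote analysis of \eqref{eq:general-product-extension}: the extension exists iff $N$ is even or $N_f$ is odd. If $N_f$ is odd then $p=0$, so $N\in\{1,3\}$, and the anomalous branch is $N=3$, requiring $3\mid N_f$. Witten's constraint \eqref{eq:Witten} forces $N_c$ odd. Imposing $N=N_c$ gives $N_c=3$, and minimal $N_f$ with $3\mid N_f$ is $N_f=3$.

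\emph{Main obstacle.} The hardest step is the rigorous minimality and the primary-wise independence in Step 3. This requires a transfer/restriction argument on $\Omega_5$ of the extended groups, together with the correct identification of the single-fermion generators in each primary summand. I will lean on the cited appendices and \cite{2512.25038} for both.
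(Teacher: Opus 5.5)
Your route is the paper's own. You split $\mathcal A(N_f)$ into $2$-, $3$- and $s$-primary pieces, kill the $3$-piece with $A_{\Z_3}p_1=0\bmod 3$ plus a group-cohomology trivialization, and cite \cite{2512.25038} for the $2$-piece. You then combine by coprimality and impose parity and $N=N_c$. Your Step~3 localization/transfer remark is a real improvement: when $|K|$ is prime to $q$, the fiber $\B K$ is $\Z_{(q)}$-acyclic, so $q$-primary bordism is unchanged. The paper's ``coprime orders $4$ and $3$, hence $12$'' tacitly needs exactly this to force $3\mid N$ and $4\mid N$.

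There is, however, a false step at the prime $3$ when $r\ge2$. It is the same error as in the paper's footnote~\ref{footnote-mod3}; for $r=1$ your argument is fine.

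\begin{itemize}
\item \emph{The identification fails.} The surviving order-$3$ class is \emph{not} the pullback of the generator of $\H^5(\Z_3,\U(1))$ along $\pi:\Z_{3^r}\to\Z_3$. Reduction does not commute with the Bockstein: on $\H^1(-,\Z_{3^r})$ one has $\beta_{(3,3)}\circ\rho_3=3^{r-1}\,\rho_3\circ\beta_{(3^r,3^r)}$. So $\beta_{(3,3)}(\pi^*A_{\Z_3})=0$ for $r\ge2$, because the mod-$3$ reduction lifts to $\Z_9$.
\item \emph{Integral check.} We have $\H^5(\Z_m,\U(1))\cong\H^6(\Z_m,\Z)\cong\Z_m$, generated by $c_m^3$, where $c_m$ is the first Chern class of the faithful character. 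Then $\pi^*c_3=3^{r-1}c_{3^r}$, hence $\pi^*(c_3^3)=3^{3r-3}c_{3^r}^3=0$ for $r\ge2$.
\item \emph{Consequence.} The class you propose to kill via $\Z_{3^{r+1}}\to\Z_9\to\Z_3$ is already zero. The actual $k=3^r$ class, $u\,3^{r-1}c_{3^r}^3$ with $u=y_r/2$ a unit, is never addressed. Appendix~\ref{sec:trivialized} with $n=3^r$ would only give a $\Z_{3^r}$ kernel.
\item \emph{Fix.} The conclusion is still true by a direct computation, which should replace the ``pulled back from $\Z_3$'' identification. Along $\Z_{3^{r+1}}\to\Z_{3^r}$ one has $c_{3^r}\mapsto 3c_{3^{r+1}}$, so $u\,3^{r-1}c_{3^r}^3\mapsto u\,3^{r+2}c_{3^{r+1}}^3=0$. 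The $p_1$ term has already dropped out, as you say.
\end{itemize}

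Separately, your justification that a $\Z_2$ kernel fails at the prime $2$ works only for $p=1$. Restricted to $\Spin\times_{\Z_2^\rF}\Z_4^\rF$, the $N_f$ charge-$1$ fermions give $\nu\equiv N_f\equiv 2^p\cdot(\mathrm{odd})\bmod 16$. This equals $2$ up to a unit only when $p=1$, and it vanishes for $p\ge4$. For general $p$ you must rest on \cite{2512.25038} itself, as the paper does.
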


\begin{proof}
Under our assumption, the anomaly of $N_f$ copies of the 4d charge
$q=1$ Weyl fermion with symmetry
$\Spin\times_{\Z_2^{\rF}}\Z_{2N_f,{{\bf B}+{\bf L}}}^\rF$ is trivialized by
a minimal cyclic extension $K=\Z_N$. Let
$N_f=2^p\cdot 3^r\cdot s$, where $p\ge0$, $r\ge0$, $2\nmid s$, and
$3\nmid s$. The relevant anomaly index has nontrivial primary components
only at the primes $2$ and $3$:
\bea
\mathcal A(N_f)=
\left\{
\begin{array}{ll}
0, & p=0,\ r=0,\\
3^r\cdot \Z_{3^{r+1}}, & p=0,\ r\ge1,\\
2^p\cdot \Z_{2^{p+3}}, & p\ge1,\ r=0,\\
2^p\cdot \Z_{2^{p+3}}\oplus 3^r\cdot \Z_{3^{r+1}}, & p\ge1,\ r\ge1.
\end{array}\right.
\eea
Here the $2$-primary component is the anomaly inherited from the charge
$q=1$ Weyl fermion with symmetry
$\Spin\times_{\Z_2^{\rF}}\Z_{2^{p+1}}^\rF$, while the $3$-primary component
is the corresponding anomaly with $\Spin\times\Z_{3^r}$ symmetry.

By the results in \cite{2512.25038}, the nontrivial $2$-primary component
is trivialized by a minimal $\Z_4$-extension. In Appendix \ref{sec:Z3r},
we prove that the nontrivial $3$-primary component is trivialized by a
minimal $\Z_3$-extension; the input $A_{\Z_3}p_1=0\mod 3$ removes the
possible beyond-group-cohomology obstruction in this branch. If neither
component is present, no nontrivial extension is required and $N=1$. If
both components are present, the two minimal extensions have coprime orders
$4$ and $3$, so the minimal simultaneous cyclic extension has order
$12$. This gives
\bea
N=\left\{
\begin{array}{ll}
1, & p=0,\ r=0,\\
3, & p=0,\ r\ge1,\\
4, & p\ge1,\ r=0,\\
12, & p\ge1,\ r\ge1.
\end{array}\right.
\eea
Equivalently, this is the four-fold list of minimal nonzero extension
orders stated in the theorem.

It remains to impose the additional color-center interpretation. If
$N_f$ is odd, then $p=0$, so the minimal extension order $N$ is odd. The
Witten anomaly of the generalized SM also requires $N_c$ to be odd when
$N_f$ is odd. If one further identifies the TQFT gauge group with the
color center by setting $N=N_c$, then among the anomalous odd branches the
minimal possibility is $N=N_c=3$, and the minimal compatible family number
is $N_f=3$. Thus the unique minimal color-matched solution is
\bea
N=N_c=N_f=3.
\eea
\end{proof}

In summary, a 4d anomalous
$\Spin \times_{\Z_2^\rF} \Z_{2 N_f,{{\bf B}+{\bf L}}}^\rF$-symmetric
$\Z_N$-gauge TQFT can replace the missing $N_f$ copies of sterile
neutrinos $\bar{\nu}_{R}$, with minimal $N=1,3,4,12$ according to the
prime factors of $N_f$ described above. If we further restrict to the
color-matched generalized SM in which baryons are fermions and the TQFT
gauge group is identified with the color center, $N=N_c$, then the Witten
anomaly and the minimal odd extension branch select the unique minimal
case
\bea
N=N_c=N_f=3.
\eea
In this case the 4d $\Z_N$-gauge fermionic TQFT can be interpreted as the
anomalous topological order replacing the $N_f$ sterile right-handed
neutrinos.

\section{Future Directions}

Here are some potential future research directions:

\begin{enumerate}

\item 
    We remark that our scenario shall be \emph{different} from the Dark Dimension \cite{Montero:2025hye2512.09052} scenario involving:
\begin{itemize}
\item
Higgs mechanism in the ${{\bf B} - {\bf L}}$ gauge field sector --- note that the Higgs mechanism involves the
symmetry-breaking mechanism.
\item 3 right-handed neutrinos propagate in the 5th Dark Dimension.
\end{itemize}
In our case and in parallel work \cite{Wang:2025oow2502.21319, 2512.25038}, we emphasize:
\begin{itemize}
\item
Symmetry-extension construction of anomalous topological order is beyond the Anderson-Higgs mechanism,
different from the symmetry-breaking mechanism.

\item Massive 4+1d Dirac fermion with a relative $\pm 1$ sign flips of the mass  can give rise to 4+1d invertible topological field theory (iTFT)/ Symmetry-Protected Topological states (SPTs)
in the 4+1d bulk. However, the anomalous topological order lives on 3+1d, which
 can be attached to the boundary of 4+1d bulk; or simply attached to the 3+1d SM without the need of the 4+1d bulk at all.

\end{itemize}

\item  Membranes can carry $\Z_3$ statistics associated with the Pontryagin class, which has been demonstrated in 
higher 
spacetime dimensions  \cite{Feng:2025mdg-2509.14314}.
The $\Z_3$ statistics is related to the 
Pontryagin class mod 3 that we explored here. This structure generalizes the bosonic statistics
and fermionic statistics of particle excitations, and also 
generalizes 
the anyon particle/string/membrane statistics, including Aharonov-Bohm or multi-loop braiding statistics
\cite{WangLevin1403.7437, Jiang1404.1062, Wang1404.7854, 1602.05951, Putrov2016qdo1612.09298PWY, Wang2019diz1901.11537}. It will be illuminating to know how 
Pontryagin $\Z_3$ statistics apply to our BSM topological order/TQFT context.

\item The SM contains both
$\U(1)_{{\bf B} - {\bf L}}$
and 
$\Z^\rF_{2 N_f, {\bf B} + {\bf L}}$ within
$$
\Spin \times_{\Z_2^\rF} \U(1)_{{\bf B} - {\bf L}} \times_{\Z_2^\rF}\Z^\rF_{2 N_f, {\bf B} + {\bf L}}$$
according to
\eq{eq:SpinU1Z2NF} 
\cite{KorenProtonStability2204.01741, WangWanYou2204.08393}.
In this article, 
we have mainly focused
on the
$\Z^\rF_{2 N_f, {\bf B} + {\bf L}}$ part of the anomaly cancellation between SM and BSM TQFT/topologically ordered sectors.
What is the situation of 
$\U(1)_{{\bf B} - {\bf L}}$'s
anomaly cancellation scenario between SM and BSM 
sectors? (See the relevant discussions in \cite{Wang:2024auy2501.00607}.)

\begin{itemize}

\item First, although
$\U(1)_{{\bf B} - {\bf L}}$ is preserved in the SM, the $\U(1)_{{\bf B} - {\bf L}}$ shall not be fully dynamically gauged 
(namely, it is not observed as a dynamical continuous gauge symmetry),
because our vacuum lacks that gauged $\U(1)_{{\bf B} - {\bf L}}$ photon propagating modes.

\item Second, the anomaly associated with
a genuine $\U(1)$ (including $\U(1)_{{\bf B} - {\bf L}}$) symmetry for a 4d
QFT is the
{\bf\emph{perturbative 
local anomaly}}, which cannot be canceled by any symmetry-preserving TQFT/ topological order.
This pessimism 
is 
in contrast with the case of  {\bf\emph{nonperturbative 
global anomaly}} cancellation, for which such a symmetry-preserving TQFT/topological order 
is optimistically possible, subject to certain criteria \cite{Cordova1912.13069, Cheng:2024awi2411.05786}.

\item However, the discrete subgroup of $\U(1)_{{\bf B} - {\bf L}}$, 
say 
$\Z^\rF_{2m,{\bf B} - {\bf L}}$, 
can be consistently dynamically gauged
if it is anomaly-free.
The  
$\Z^\rF_{2m,{\bf B} - {\bf L}}$ subgroup of
$\U(1)_{{\bf B} - {\bf L}}$ does not necessarily require additional Higgs for the symmetry-breaking,
but the $\Z^\rF_{2m,{\bf B} - {\bf L}}$
can be (1) realized by the remaining symmetry of some gauge-invariant 
multi-fermion interactions,
$$
\psi \psi \dots \psi \psi
$$
or
(2) itself the genuine UV completion
--- namely, 
there is no 
$\U(1)_{{\bf B} - {\bf L}}$ at UV but it only emerges as this continuous symmetry at IR of the SM energy scale. Hence it is natural to consider
the SM + BSM anomaly cancellation within the symmetry
$$\Spin \times_{\Z_2^\rF} \Z^\rF_{2m,{\bf B} - {\bf L}}\times_{\Z_2^\rF}
\Z^\rF_{2 N_f, {\bf B} + {\bf L}}
$$
such that both
$\Z^\rF_{2m,{\bf B} - {\bf L}}$
and 
$\Z^\rF_{2 N_f, {\bf B} + {\bf L}}$
can be dynamically gauged in the full 
SM + BSM anomaly-free system. This discrete
${{\bf B} \pm {\bf L}}$ anomaly cancellation
topic is previously explored, for example, in 
\cite{JW2006.16996, JW2008.06499, JW2012.15860, WangWanYou2112.14765, WangWanYou2204.08393, Putrov:2023jqi2302.14862, 
Cheng:2024awi2411.05786, Wang:2024auy2501.00607,
Wang:2025oow2502.21319,
2512.25038}. It will be informative to systematically explore how the discrete
${{\bf B} \pm {\bf L}}$ anomaly cancellation combined constraints 
can predict any new 
BSM physics consistent
with phenomenology and experiments.

\end{itemize}

\end{enumerate}

\section{Acknowledgment}

JW thanks Dan Freed, Pavel Putrov, Constantin Teleman, Weicheng Ye, and Matthew Yu for the discussions
on the related topics, and thanks Seth Koren for helpful comments on the manuscript.
JW thanks Roman Rybiansky
for his suggestion on improving 
Fig.~\ref{fig:3-family}.
ZW is supported by the NSFC Grant No. 12405001.
JW is supported by LIMS and Ben Delo Fellowship.
JW would like to thank the Isaac Newton Institute for Mathematical Sciences, Cambridge, for support and hospitality during the programme Diving Deeper into Defects: On the Intersection of Field Theory, Quantum Matter, and Mathematics, where work on this paper was undertaken. This work was supported by EPSRC grant EP/Z000580/1.
JW also thanks Simons Foundation Collaboration on Global Categorical Symmetries Annual Meetings in 2024 and 2025, where this work is discussed and performed during the 
meetings.

\appendix

\section{Any cocycle 
$\alpha_d \in \H^d(\Z_n,\U(1))$ is trivialized by the symmetry extension $1\to\Z_n\to\Z_{n^2}\to\Z_n\to 1$ 
for odd $d\ge3$ and any $n\ge2$}

\label{sec:trivialized}

In this appendix, we show that any cocycle $\alpha\in \H^d(\Z_n,\U(1))$ is trivialized by the symmetry extension 
$1\to\Z_n\to\Z_{n^2}\to\Z_n\to1$ for odd $d\ge3$ and any $n\ge2$.

We consider the Lyndon-Hochschild-Serre (LHS) spectral
sequence
\bea\label{eq:SSS}
E_2^{p,q}=\H^p(\Z_n,\H^q(\Z_n,\U(1)))\Rightarrow \H^{p+q}(\Z_{n^2},\U(1))
\eea
associated with the extension $1\to\Z_n\to\Z_{n^2}\to\Z_n\to1$.

Since 
\bea
\H^d(\Z_n,\U(1))=\left\{\begin{array}{lll}\Z_n&d\text{ odd}\\0&d\text{ even}>0\\\U(1)&d=0\end{array}\right.
\eea
and
\bea
\H^d(\Z_n,\Z_n)=\Z_n\;\;\;\forall d\ge0,
\eea
the $E_2$ page of the LHS spectral
sequence \eqref{eq:SSS} is shown in \Fig{fig:SSS}.

\begin{figure}[!h]
\centering
\begin{sseq}[grid=none,labelstep=1,entrysize=1.5cm]{0...7}{0...6}
\ssdrop{\U(1)}
\ssmoveto 1 0 
\ssdrop{\Z_n}
\ssmoveto 2 0
\ssdrop{0}
\ssmoveto 3 0 
\ssdrop{\Z_n}
\ssmoveto 4 0
\ssdrop{0}
\ssmoveto 5 0 
\ssdrop{\Z_n}
\ssmoveto 6 0
\ssdrop{0}
\ssmoveto 7 0 
\ssdrop{\Z_n}

\ssmoveto 0 1
\ssdrop{\Z_n}
\ssmoveto 1 1
\ssdrop{\Z_n}
\ssmoveto 2 1
\ssdrop{\Z_n}
\ssmoveto 3 1
\ssdrop{\Z_n}
\ssmoveto 4 1
\ssdrop{\Z_n}
\ssmoveto 5 1
\ssdrop{\Z_n}
\ssmoveto 6 1
\ssdrop{\Z_n}
\ssmoveto 7 1
\ssdrop{\Z_n}

\ssmoveto 0 2
\ssdrop{0}
\ssmoveto 1 2
\ssdrop{0}
\ssmoveto 2 2
\ssdrop{0}
\ssmoveto 3 2
\ssdrop{0}
\ssmoveto 4 2
\ssdrop{0}
\ssmoveto 5 2
\ssdrop{0}
\ssmoveto 6 2
\ssdrop{0}
\ssmoveto 7 2
\ssdrop{0}

\ssmoveto 0 3
\ssdrop{\Z_n}
\ssmoveto 1 3
\ssdrop{\Z_n}
\ssmoveto 2 3
\ssdrop{\Z_n}
\ssmoveto 3 3
\ssdrop{\Z_n}
\ssmoveto 4 3
\ssdrop{\Z_n}
\ssmoveto 5 3
\ssdrop{\Z_n}
\ssmoveto 6 3
\ssdrop{\Z_n}
\ssmoveto 7 3
\ssdrop{\Z_n}

\ssmoveto 0 4
\ssdrop{0}
\ssmoveto 1 4
\ssdrop{0}
\ssmoveto 2 4
\ssdrop{0}
\ssmoveto 3 4
\ssdrop{0}
\ssmoveto 4 4
\ssdrop{0}
\ssmoveto 5 4
\ssdrop{0}
\ssmoveto 6 4
\ssdrop{0}
\ssmoveto 7 4
\ssdrop{0}

\ssmoveto 0 5
\ssdrop{\Z_n}
\ssmoveto 1 5
\ssdrop{\Z_n}
\ssmoveto 2 5
\ssdrop{\Z_n}
\ssmoveto 3 5
\ssdrop{\Z_n}
\ssmoveto 4 5
\ssdrop{\Z_n}
\ssmoveto 5 5
\ssdrop{\Z_n}
\ssmoveto 6 5
\ssdrop{\Z_n}
\ssmoveto 7 5
\ssdrop{\Z_n}

\ssmoveto 0 6
\ssdrop{0}
\ssmoveto 1 6
\ssdrop{0}
\ssmoveto 2 6
\ssdrop{0}
\ssmoveto 3 6
\ssdrop{0}
\ssmoveto 4 6
\ssdrop{0}
\ssmoveto 5 6
\ssdrop{0}
\ssmoveto 6 6
\ssdrop{0}
\ssmoveto 7 6
\ssdrop{0}

\ssmoveto 1 1
\ssarrow[color=blue] 2 {-1}

\ssmoveto 3 1
\ssarrow[color=blue] 2 {-1}

\ssmoveto 5 1
\ssarrow[color=blue] 2 {-1}

\ssmoveto 1 3
\ssarrow[color=blue] 3 {-2}

\end{sseq}

\caption{The $E_2$ page of the LHS spectral sequence \eqref{eq:SSS}. The differentials will be explained later.}
\label{fig:SSS}
\end{figure}

The differentials in the LHS spectral sequence \eqref{eq:SSS} are 
\bea
d_r^{p,q}:E_r^{p,q}\to E_r^{p+r,q-r+1}\text{ for }r\ge2,
\eea
and the pages $E_r$ are defined inductively from $E_2$ by
\bea
E_{r+1}^{p,q}=\frac{\text{Ker }d_r^{p,q}}{\text{Im }d_r^{p-r,q+r-1}}.
\eea
The differentials $d_r$ vanish and the pages $E_r$ stabilize for sufficiently large $r\ge N$. The page $E_N$ is denoted $E_{\infty}$.

The homomorphism $\H^d(\Z_n,\U(1))\to \H^d(\Z_{n^2},\U(1))$ induced from the extension $1\to\Z_n\to\Z_{n^2}\to\Z_n\to1$ is the composition
\bea\label{eq:composition}
E_2^{d,0}=\H^d(\Z_n,\U(1))\twoheadrightarrow E_{\infty}^{d,0}\hookrightarrow \H^d(\Z_{n^2},\U(1)).
\eea

For $p+q=d$, there is a filtration 
\bea
F^{-1}=0\subset F^0\subset F^1\subset\cdots\subset F^d=\H^d(\Z_{n^2},\U(1))
\eea
of $\H^d(\Z_{n^2},\U(1))$
with 
\bea
F^q/F^{q-1}=E_{\infty}^{d-q,q}.
\eea

We will show that $E_{\infty}^{d,0}=0$ for odd $d\ge3$ and any $n\ge2$, hence by \eqref{eq:composition}, any cocycle $\alpha\in \H^d(\Z_n,\U(1))$ is trivialized by the extension $1\to\Z_n\to\Z_{n^2}\to\Z_n\to1$ for odd $d\ge3$ and any $n\ge2$.

Since $\H^2(\Z_{n^2},\U(1))=0$, the group $E_2^{1,1}=\Z_n$ is eliminated by the differential
\[
d_2^{1,1}:E_2^{1,1}\longrightarrow E_2^{3,0},
\]
so it does not survive to the $E_3$--page. Hence $E_{\infty}^{3,0}=E_3^{3,0}=0$.

Since $\H^4(\Z_{n^2},\U(1))=0$, the groups $E_2^{1,3}=\Z_n$ and $E_2^{3,1}=\Z_n$ are removed on some page by differentials. On the other hand, because $\H^3(\Z_{n^2},\U(1))=\Z_{n^2}$ and $E_2^{0,3}=E_2^{2,1}=\Z_n$, and since $E_2^{3,0}=\Z_n$ is eliminated by the differential $d_2^{1,1}:E_2^{1,1}\to E_2^{3,0}$, the groups $E_2^{0,3}=\Z_n$ and $E_2^{2,1}=\Z_n$ survive to the $E_{\infty}$--page.

Therefore $E_2^{3,1}=\Z_n$ cannot be the target of any nonzero differential; instead it supports the differential
\[
d_2^{3,1}:E_2^{3,1}\longrightarrow E_2^{5,0},
\]
and so does not persist to $E_{\infty}$. Hence $E_{\infty}^{5,0}=E_3^{5,0}=0$.

Similarly, $E_2^{1,3}=E_3^{1,3}=\Z_n$ cannot be the target of any nonzero differential; it is the source of the differential
\[
d_3:E_3^{1,3}\longrightarrow E_3^{4,1},
\]
and therefore $E_{\infty}^{4,1}=E_4^{4,1}=0$.

Since $\H^6(\Z_{n^2},\U(1))=0$, the group $E_2^{5,1}=\Z_n$ is eliminated by some differential. On the other hand, because $\H^5(\Z_{n^2},\U(1))=\Z_{n^2}$ and $E_2^{0,5}=E_2^{2,3}=\Z_n$, while $E_2^{4,1}=\Z_n$ and $E_2^{5,0}$ are removed by the differentials $d_3$ and $d_2$ respectively, the groups $E_2^{0,5}=\Z_n$ and $E_2^{2,3}=\Z_n$ survive to $E_{\infty}$. Hence $E_2^{5,1}=\Z_n$ cannot be the target of a nonzero differential; instead it supports
\[
d_2^{5,1}:E_2^{5,1}\longrightarrow E_2^{7,0},
\]
and therefore $E_{\infty}^{7,0}=E_3^{7,0}=0$.

In fact, one can show in general that $E_2^{d,0}$ is eliminated by the differential
\[
d_2^{d-2,1}:E_2^{d-2,1}\longrightarrow E_2^{d,0},
\]
so $E_{\infty}^{d,0}=E_3^{d,0}=0$ for every odd $d\ge3$ and any $n\ge2$.

The cohomology ring $\H^*(\Z_n,\Z_n)$ is generated by $x\in \H^1(\Z_n,\Z_n)$ and $y\in \H^2(\Z_n,\Z_n)$ and they satisfy the relation $x^2=0$ for odd $n$ and $x^2=\frac{n}{2}y$ for even $n$ \cite[Example 3.41]{Hatcher2002}.
In particular, for $n=2$, $x^2=y$ and $\H^*(\Z_2,\Z_2)$ is generated by $x\in \H^1(\Z_2,\Z_2)$.
The generator of $E_2^{1,1}=\Z_n$ is $x$ and the generator of $E_2^{2,1}=\Z_n$ is $y$.
Since we have shown that $d_2(x)$ is non-trivial and $d_2(y)=0$, the differentials are derivations, and $\smile y: \H^m(\Z_n,\U(1))\to \H^{m+2}(\Z_n,\U(1))$ is an isomorphism for odd $m$\footnote{This is because $\H^m(\Z_n,\U(1))=\H^{m+1}(\Z_n,\Z)$, 
while $\H^*(\Z_n,\Z)$ is periodic of period 2, and $\H^2(\Z_n,\Z)=\H^2(\Z_n,\Z_n)$ by the universal coefficient theorem.}, for $n=2$, 
$d_2(x^k)=kd_2(x)x^{k-1}$ is non-trivial for odd $k$,
and for $n>2$,
$d_2(xy^k)=d_2(x)y^k$ 
is non-trivial for all $k\ge0$. Hence $E_{\infty}^{d,0}=E_3^{d,0}=0$ for odd $d\ge3$ and any $n\ge2$.

\section{$A_{\Z_n}p_1$ cannot be trivialized by any finite group extension except for $n=2$ and $n=3$}\label{sec:generic}
In this appendix, we show that 
$A_{\Z_n}p_1$ cannot be trivialized by any finite group extension except for $n=2$ and $n=3$ where $A_{\Z_n}$ is the generator of $\H^1(\Z_n,\U(1))$, here
$A_{\Z_n}p_1$ is defined on any dimensional manifold $M$
with $\Spin\times_{\Z_2^{\rF}}\Z_{2n}$ structure and $p_1=p_1(TM)$.

For any group extension
\bea
1\to K\to H\to G=\Z_n\to 1,
\eea
we have a similar LHS spectral sequence
\bea
E_2^{p,q}=\H^p(\Z_n,\H^q(K,\U(1)))\Rightarrow\H^{p+q}(H,\U(1)).
\eea
For degree reasons, there are no differentials from or to $E_2^{1,0}=\H^1(\Z_n,\U(1))$, so $E_2^{1,0}=\H^1(\Z_n,\U(1))$ survives to the $E_{\infty}$ page. Therefore, $A_{\Z_n}\in\H^1(\Z_n,\U(1))$ cannot be trivialized by any group extension. 

\subsection{$\Spin\times_{\Z_2^{\rF}}\Z_{2n}$ structure and symmetry extension trivialization}

Now we consider the availability of the symmetry extension trivialization
of $A_{\Z_n}p_1$
on the manifolds with $\Spin\times_{\Z_2^{\rF}}\Z_{2n}$ structure. Let us explain the exceptional cases $n=2$ and $n=3$:
\begin{itemize}
    \item 

For $n=3$, we have $A_{\Z_3}p_1=0\mod 3$ \cite{tomonaga1964mod,tomonaga1965pontryagin} (see Appendix \ref{sec:proof} for the proof) where $p_1$ is the first Pontryagin class. For other $n$, $A_{\Z_n}p_1\not\equiv0\mod n$.

\item
For $n=2$ and $\Spin\times_{\Z_2^{\rF}}\Z_4$ structure, we consider the symmetry extension
\bea\label{eq:extension-Spin-Z4}
1\to\Z_2^{\rF}\to\Spin\times\Z_4\xrightarrow{f}\Spin\times_{\Z_2^{\rF}}\Z_4\to1.
\eea
Since $\H^4(\B\Spin,\Z)$ is generated by $\lambda=\frac{p_1}{2}$, $f^*(p_1)=2\lambda$. Also, $f^*(A_{\Z_2})=2A_{\Z_4}$. Hence $f^*(A_{\Z_2}p_1)=2A_{\Z_4}\cdot2\lambda=0\mod4$.
Hence, $A_{\Z_2}p_1$ can be trivialized by the finite group extension \eqref{eq:extension-Spin-Z4}.

\end{itemize}

Next, for $\Spin\times_{\Z_2^{\rF}}\Z_{2n}$ structure, let us explain the remaining cases.
\begin{itemize}
    \item 

For odd $n>3$, we note that $\Spin\times_{\Z_2^{\rF}}\Z_{2n}=\Spin\times\Z_n$.
Because $\pi_1(\Spin)=0$, there are no finite covers of $\Spin$.
Since $\H^4(\B\Spin,\Z)$ is generated by $\lambda=\frac{p_1}{2}$, $A_{\Z_n}p_1=A_{\Z_n}\cdot2\lambda$. For odd $n>3$,
because $2 \neq 0 \mod n$,
so
$A_{\Z_n}p_1=A_{\Z_n}\cdot2\lambda$ cannot be trivialized by any finite group extension.

\item 
For even $n>2$ and $\Spin\times_{\Z_2^{\rF}}\Z_{2n}$ structure, 
we consider the symmetry extension
\bea\label{eq:extension-Spin-Z2n}
1\to\Z_2^{\rF}\to\Spin\times\Z_{2n}\xrightarrow{g}\Spin\times_{\Z_2^{\rF}}\Z_{2n}\to1.
\eea
Since $\H^4(\B\Spin,\Z)$ is generated by $\lambda=\frac{p_1}{2}$, $g^*(p_1)=2\lambda$. Also, $g^*(A_{\Z_n})=2A_{\Z_{2n}}$. Hence $g^*(A_{\Z_n}p_1)=2A_{\Z_{2n}}\cdot 2\lambda$. For even $n>2$, because $2\cdot2 \neq 0 \mod 2n$, so $A_{\Z_n}p_1$ cannot be trivialized by the finite group extension \eqref{eq:extension-Spin-Z2n}. Therefore, $A_{\Z_n}p_1$ cannot be trivialized by any finite group extension.

\end{itemize}


Therefore, except for $n=2$ and $n=3$, $A_{\Z_n}p_1$ cannot be trivialized by any finite group extension.\footnote{However, $p_1$ or $\frac{p_1}{2}$ 
is trivialized by a higher group extension
\bea
1\to \B^2\Z\to\String\to\Spin\to1.
\eea
Here $p_1$ is integer
$\Z$ valued for SO manifolds,
while
$\frac{p_1}{2}$ is integer $\Z$ valued 
for Spin manifolds because
$p_1=w_2^2 =0 \mod 2$.
} 

\subsection{$\SO\times \Z_{n}$ structure
 and symmetry extension trivialization}

Now we consider the availability of the symmetry extension trivialization
of $A_{\Z_n}p_1$
on the manifolds with $\SO\times \Z_{n}$ structure, instead of the $\Spin\times_{\Z_2^{\rF}}\Z_{2n}$ structure.

Let us explain the two exceptional cases, $n=3$ and $n=2$:
\begin{itemize}
\item
For $n=3$, on manifolds with the $\SO \times \Z_3$ structure,
we have $A_{\Z_3}p_1=0\mod 3$ \cite{tomonaga1964mod,tomonaga1965pontryagin} (see Appendix \ref{sec:proof} for the proof) where $p_1$ is the first Pontryagin class. For other $n$, $A_{\Z_n}p_1\not\equiv0\mod n$.
Hence, except for $n=3$, $A_{\Z_n}p_1$ can only be trivialized by trivializing $p_1\mod n$.
\item
For $n=2$, on manifolds with the $\SO \times \Z_2$ structure, since $p_1=w_2^2\mod2$ where $w_2$ is the second Stiefel-Whitney class, $p_1\mod2$ can be trivialized by trivializing $w_2$ via the symmetry extension, because
$w_2 =0$ on the pullback spin manifold,
\bea
1\to\Z_2^{\rF}\to\Spin\xrightarrow{f}\SO\to1.
\eea
So $A_{\Z_2}p_1= 
A_{\Z_2} w_2^2\mod2$ can be trivialized by a fermion parity $\Z_2^{\rF}$ group extension on the pullback manifold with $\Spin \times \Z_2$ structure.

\item For 
odd $n$ and even $n>2$:
Because $\pi_1(\SO)=\Z_2$, the only connected finite covers of $\SO$ are $\Spin$ and $\SO$.
Since $\H^4(\B\Spin,\Z)$ is generated by $\lambda=\frac{p_1}{2}$, $f^*(p_1\mod n)=2\lambda\mod n$. For odd $n$ and even $n>2$,
because $2 \neq 0 \mod n$,
so
$p_1\mod n$ cannot be trivialized by any finite group extension. 
\end{itemize}

Therefore, except for $n=2$ and $n=3$, $A_{\Z_n}p_1$ cannot be trivialized by any finite group extension.

\section{Explicit $(d-1)$-cochain $\tilde{\beta}_{d-1}$
that splits the $d$-cocycle
$\tilde{\alpha}_d=\delta\tilde{\beta}_{d-1}$
as a coboundary
in $\H^d(\Z_{n^2},\U(1))$
 by the symmetry extension $1\to\Z_n\to\Z_{n^2}\to\Z_n\to1$ for 
 any odd $d\ge3$ and any $n\ge2$}

\label{sec:cochain}
 
In this appendix, we find an explicit $(d-1)$-cochain $\tilde{\beta}_{d-1}$ that splits the $d$-cocycle 
$\tilde{\alpha}_d=\delta\tilde{\beta}_{d-1}$ as a coboundary
in $\H^d(\Z_{n^2},\U(1))$
by the symmetry extension $1\to\Z_n\to\Z_{n^2}\to\Z_n\to1$ for any odd $d\ge3$ and any $n\ge2$.
Our strategy follows \Refe{Wang2017locWWW1705.06728}'s symmetry-extension approach to trivialize a cocycle in terms of coboundary, especially in \Refe{Wang2017locWWW1705.06728}'s Appendices.

\subsection{$d= 3$ and any $n\ge2$: 
Find $\tilde{\beta}_2$
such that $\tilde{\alpha}_3=\delta\tilde{\beta}_2$}

Explicitly, any 3-cocycle $\alpha_3\in \H^3(\Z_n,\U(1))$ has the form \cite{Wang1405.7689, 1703.03266}
\bea
\alpha_3(g_1,g_2,g_3)=\zeta_n^{g_1 [\frac{g_2+g_3}{n}]}
\eea
where $g_i\in\Z_n$ for $i=1,2,3$, $\zeta_n$ is an $n$-th root of unity
(for example, $\zeta_n = \exp(\frac{2 \pi \ii}{n})$), and $[\frac{p}{q}]$ denotes the integer part of $\frac{p}{q}$.

We can find an explicit 2-cochain $\tilde{\beta}_2\in C^2(\Z_{n^2},\U(1))$ such that $\tilde{\alpha}_3=\delta\tilde{\beta}_2$ where $\tilde{\alpha}_3=f^*\alpha_3\in \H^3(\Z_{n^2},\U(1))$ and $f$ is the map in the extension $1\to\Z_n\to\Z_{n^2}\xrightarrow{f}\Z_n\to1$.
Namely,
\bea
\tilde{\alpha}_3(h_1,h_2,h_3)=\delta\tilde{\beta}_2(h_1,h_2,h_3)=\frac{\tilde{\beta}_2(h_2,h_3)\tilde{\beta}_2(h_1,h_2h_3)}{\tilde{\beta}_2(h_1h_2,h_3)\tilde{\beta}_2(h_1,h_2)}
\eea
where $h_i=(g_i,k_i)\in\Z_{n^2}$ for $i=1,2,3$ and 
\bea\label{eq:composition-law}
(g_1,k_1)\cdot(g_2,k_2)=(g_1+g_2,k_1+k_2+[\frac{g_1+g_2}{n}]).
\eea
Here, $\phi(g_1,g_2)=[\frac{g_1+g_2}{n}]$ is the 2-cocycle in $\H^2(\Z_n,\Z_n)$ classifying the extension $1\to\Z_n\to\Z_{n^2}\to\Z_n\to1$.
Note that for $n=2$, $[\frac{g_1+g_2}{2}]=g_1g_2\mod 2$, so the composition law \eqref{eq:composition-law} agrees with that in \cite{Wang2017locWWW1705.06728} for $n=2$.

Explicitly, the 2-cochain $\tilde{\beta}_2\in C^2(\Z_{n^2},\U(1))$ is given by 
\cite{Wang2017locWWW1705.06728}
\bea\label{eq:2-cochain}
\tilde{\beta}_2(h_1,h_2)=\zeta_n^{g_1k_2}.
\eea
Then
\bea
\delta\tilde{\beta}_2(h_1,h_2,h_3)=\zeta_n^{g_2k_3+g_1(k_2+k_3+[\frac{g_2+g_3}{n}])-(g_1+g_2)k_3-g_1k_2}=\zeta_n^{g_1[\frac{g_2+g_3}{n}]}=\alpha_3(g_1,g_2,g_3)=\tilde{\alpha}_3(h_1,h_2,h_3).
\eea
Therefore, $\tilde{\alpha}_3=\delta\tilde{\beta}_2$.

{This 2-cochain $\tilde{\beta}_2$ can also be constructed from the LHS spectral
sequence method, see Appendix \ref{sec:trivialized} and \cite{Wang2017locWWW1705.06728}. In Appendix \ref{sec:trivialized}, we prove that $\H^3(\Z_n,\U(1))=\Z_n$ is eliminated by the differential $d_2^{1,1}:E_2^{1,1}\to E_2^{3,0}$ where 
\bea
E_2^{1,1}=\H^1s\Z_n,\H^1(\Z_n,\U(1)))=\H^1(\Z_n,\Z_n)=\Z_n
\eea
and 
\bea
E_2^{3,0}=\H^3(\Z_n,\H^0(\Z_n,\U(1)))=\H^3(\Z_n,\U(1))=\Z_n.
\eea
Since any 1-cocycle $\alpha_1\in\H^1(\Z_n,\U(1))$ has the form \cite{Wang1405.7689, 1703.03266} 
\bea
\alpha_1(g)=\zeta_n^{g}
\eea
where $g\in\Z_n$, the
2-cochain \eqref{eq:2-cochain} can be expressed as 
\bea
\tilde{\beta}_2(h_1,h_2)=(\alpha_1(k_2))^{g_1}.
\eea
}

\subsection{$d= 5$ and any $n\ge2$: 
Find $\tilde{\beta}_4$
such that $\tilde{\alpha}_5=\delta\tilde{\beta}_4$}

Explicitly, any 5-cocycle $\alpha_5\in \H^5(\Z_n,\U(1))$ has the form \cite{Wang1405.7689, 1703.03266}
\bea
\alpha_5(g_1,g_2,g_3,g_4,g_5)=\zeta_n^{g_1 [\frac{g_2+g_3}{n}] [\frac{g_4+g_5}{n}]}
\eea
where $g_i\in\Z_n$ for $i=1,2,3,4,5$, $\zeta_n$ is an $n$-th root of unity, and $[\frac{p}{q}]$ denotes the integer part of $\frac{p}{q}$.

We can find an explicit 4-cochain $\tilde{\beta}_4\in C^4(\Z_{n^2},\U(1))$ such that $\tilde{\alpha}_5=\delta\tilde{\beta}_4$ where $\tilde{\alpha}_5=f^*\alpha_5\in \H^5(\Z_{n^2},\U(1))$ and $f$ is the map in the extension $1\to\Z_n\to\Z_{n^2}\xrightarrow{f}\Z_n\to1$.
Namely,
\bea
\tilde{\alpha}_5(h_1,h_2,h_3,h_4,h_5)=\delta\tilde{\beta}_4(h_1,h_2,h_3,h_4,h_5)=\frac{\tilde{\beta}_4(h_2,h_3,h_4,h_5)\tilde{\beta}_4(h_1,h_2h_3,h_4,h_5)\tilde{\beta}_4(h_1,h_2,h_3,h_4h_5)}{\tilde{\beta}_4(h_1h_2,h_3,h_4,h_5)\tilde{\beta}_4(h_1,h_2,h_3h_4,h_5)\tilde{\beta}_4(h_1,h_2,h_3,h_4)}
\eea
where $h_i=(g_i,k_i)\in\Z_{n^2}$ for $i=1,2,3,4,5$ and 
\bea
(g_1,k_1)\cdot(g_2,k_2)=(g_1+g_2,k_1+k_2+[\frac{g_1+g_2}{n}]).
\eea
Here, 
\bea
\phi(g_1,g_2)=[\frac{g_1+g_2}{n}]
\eea 
is the 2-cocycle in $\H^2(\Z_n,\Z_n)$ classifying the extension $1\to\Z_n\to\Z_{n^2}\to\Z_n\to1$.
Note that for $n=2$, $[\frac{g_1+g_2}{2}]=g_1g_2\mod 2$.

Explicitly, the 4-cochain $\tilde{\beta}_4\in C^4(\Z_{n^2},\U(1))$ is given by
\bea
\tilde{\beta}_4(h_1,h_2,h_3,h_4)=\zeta_n^{g_1k_2[\frac{g_3+g_4}{n}]}.
\eea
In fact, if we write $\tilde{\beta}_4=\zeta_n^{\tilde{\gamma}_4}$ and $\tilde{\beta}_2=\zeta_n^{\tilde{\gamma}_2}$, then $\tilde{\gamma}_4=\tilde{\gamma}_2\smile \phi$.
Then
\bea
\delta\tilde{\beta}_4(h_1,h_2,h_3,h_4,h_5)=\zeta_n^X.
\eea
We check that 
\bea
X&=&g_2k_3[\frac{g_4+g_5}{n}]+g_1(k_2+k_3+[\frac{g_2+g_3}{n}])[\frac{g_4+g_5}{n}]+g_1k_2[\frac{g_3+(g_4+g_5)\mod n}{n}]\cr
&&-(g_1+g_2)k_3[\frac{g_4+g_5}{n}]-g_1k_2[\frac{(g_3+g_4)\mod n+g_5}{n}]-g_1k_2[\frac{g_3+g_4}{n}]\cr
&=&g_1[\frac{g_2+g_3}{n}][\frac{g_4+g_5}{n}].
\eea
Here, we have used the fact that
\bea
[\frac{g_4+g_5}{n}]+[\frac{g_3+(g_4+g_5)\mod n}{n}]=[\frac{g_3+g_4+g_5}{n}]=[\frac{(g_3+g_4)\mod n+g_5}{n}]+[\frac{g_3+g_4}{n}].
\eea
This is in fact the cocycle condition for $\phi$.
Hence 
\bea
\delta\tilde{\beta}_4(h_1,h_2,h_3,h_4,h_5)=\zeta_n^X=\zeta_n^{g_1[\frac{g_2+g_3}{n}][\frac{g_4+g_5}{n}]}=\alpha_5(g_1,g_2,g_3,g_4,g_5)=\tilde{\alpha}_5(h_1,h_2,h_3,h_4,h_5).
\eea
Therefore, $\tilde{\alpha}_5=\delta\tilde{\beta}_4$.

\subsection{Any odd $d\ge3$ and any $n\ge2$:
Find $\tilde{\beta}_{d-1}$
such that $\tilde{\alpha}_d=\delta\tilde{\beta}_{d-1}$}

Explicitly, for odd $d\ge3$, any $d$-cocycle $\alpha_d\in \H^d(\Z_n,\U(1))$ has the form \cite{1703.03266}
\bea
\alpha_d(g_1,g_2,\dots,g_d)=\zeta_n^{ g_1 [\frac{g_2+g_3}{n}] [\frac{g_4+g_5}{n}] \cdots [\frac{g_{d-1}+g_d}{n}]}
\eea
where $g_i\in\Z_n$ for $i=1,2,\dots,d$, $\zeta_n$ is an $n$-th root of unity, and $[\frac{p}{q}]$ denotes the integer part of $\frac{p}{q}$.

We can find an explicit $(d-1)$-cochain $\tilde{\beta}_{d-1}\in C^{d-1}(\Z_{n^2},\U(1))$ such that $\tilde{\alpha}_d=\delta\tilde{\beta}_{d-1}$ where $\tilde{\alpha}_d=f^*\alpha_d\in \H^d(\Z_{n^2},\U(1))$ and $f$ is the map in the extension $1\to\Z_n\to\Z_{n^2}\xrightarrow{f}\Z_n\to1$.
Namely,
\bea
\tilde{\alpha}_d(h_1,h_2,\dots,h_d)&=&\delta\tilde{\beta}_{d-1}(h_1,h_2,\dots,h_d)\cr
&=&\frac{\tilde{\beta}_{d-1}(h_2,h_3,\dots,h_d)\tilde{\beta}_{d-1}(h_1,h_2h_3,h_4,\dots,h_d)\cdots\tilde{\beta}_{d-1}(h_1,h_2,\dots,h_{d-2},h_{d-1}h_d)}{\tilde{\beta}_{d-1}(h_1h_2,h_3,h_4,\dots,h_d)\tilde{\beta}_{d-1}(h_1,h_2,h_3h_4,h_5,\dots,h_d)\cdots\tilde{\beta}_{d-1}(h_1,h_2,\dots,h_{d-1})}
\eea
where $h_i=(g_i,k_i)\in\Z_{n^2}$ for $i=1,2,\dots,d$ and 
\bea
(g_1,k_1)\cdot(g_2,k_2)=(g_1+g_2,k_1+k_2+[\frac{g_1+g_2}{n}]).
\eea
Here, $\phi(g_1,g_2)=[\frac{g_1+g_2}{n}]$ is the 2-cocycle in $\H^2(\Z_n,\Z_n)$ classifying the extension $1\to\Z_n\to\Z_{n^2}\to\Z_n\to1$.
Note that for $n=2$, $[\frac{g_1+g_2}{2}]=g_1g_2\mod 2$.

Explicitly, the $(d-1)$-cochain $\tilde{\beta}_{d-1}\in C^{d-1}(\Z_{n^2},\U(1))$ is given by
\bea
\tilde{\beta}_{d-1}(h_1,h_2,\dots,h_{d-1})=\zeta_n^{g_1k_2[\frac{g_3+g_4}{n}]\cdots[\frac{g_{d-2}+g_{d-1}}{n}]}.
\eea
If we write $\tilde{\beta}_{d-1}=\zeta_n^{\tilde{\gamma}_{d-1}}$ and $\tilde{\beta}_2=\zeta_n^{\tilde{\gamma}_2}$, then
\bea
\tilde{\gamma}_{d-1}=\tilde{\gamma}_2\smile \phi^{\frac{d-3}{2}}.
\eea
If we write $\alpha_d=\zeta_n^{\epsilon_d}$ and $\alpha_3=\zeta_n^{\epsilon_3}$, then
\bea
\epsilon_d=\epsilon_3\smile \phi^{\frac{d-3}{2}}.
\eea
We have shown that $\delta \tilde{\gamma}_2(h_1,h_2,h_3)=\epsilon_3(g_1,g_2,g_3)$ and we have $\delta\phi=0$.
Hence
\bea
\delta\tilde{\beta}_{d-1}(h_1,h_2,\dots,h_d)
&=&\zeta_n^{\delta\tilde{\gamma}_{d-1}}(h_1,h_2,\dots,h_d)\cr
&=&\zeta_n^{\delta(\tilde{\gamma}_2\smile \phi^{\frac{d-3}{2}})}(h_1,h_2,\dots,h_d)\cr
&=&\zeta_n^{\epsilon_3\smile \phi^{\frac{d-3}{2}}}(g_1,g_2,\dots,g_d)\cr
&=&\zeta_n^{\epsilon_d}(g_1,g_2,\dots,g_d)\cr
&=&\alpha_d(g_1,g_2,\dots,g_d)\cr
&=&\tilde{\alpha}_d(h_1,h_2,\dots,h_d).
\eea
Therefore, $\tilde{\alpha}_d=\delta\tilde{\beta}_{d-1}$.

\section{$d$d-bulk/$(d-1)$d-boundary coupled invertible topological field theory /symmetric anomalous gapped TQFT
by the symmetry extension $1\to\Z_n\to\Z_{n^2}\to\Z_n\to 1$ for any odd $d\ge3$ and any $n\ge2$}

\label{sec:dd-d-1d}

In Appendix \ref{sec:trivialized}, we prove that the group cohomology class 
($\H^d(\Z_n,\U(1))\cong \Z_n$) 
can be canceled by 
anomalous $G$-symmetric $K=\Z_n$-gauge $(d-1)$d TQFTs for odd $d\ge3$ and any $n\ge2$, via the 
appropriate symmetry-extension construction \cite{Wang2017locWWW1705.06728} of
\bea
1\to K \to G_{\rm Tot}\to G \to 1.
\eea
as 
\bea \label{eq:Zn-extension-general}
1\to\Z_n\to\Z_{n^2}\to\Z_n\to 1.
\eea

More generally and mathematically, 
in this work, for odd $d\ge3$ and any $n\ge2$,
we prove that any group cohomology cocycle 
\bea
\alpha_d \in \H^d(\Z_n,\U(1)) \cong \Z_n
\eea 
is trivialized by the group extension 
as \eq{eq:Zn-extension-general}'s $1\to\Z_n\to\Z_{n^2}\to\Z_n\to 1$ \cite{Wang2017locWWW1705.06728}.

In Appendix \ref{sec:cochain}, we find an explicit $(d-1)$-cochain $\beta_{d-1}$ 
that splits the $d$-cocycle $\alpha_d$ by that extension for odd $d\ge3$ and any $n\ge2$.
Namely, $\alpha_d=\delta \beta_{d-1}$ holds when pulling back
the quotient $\Z_n$ to the total $\Z_{n^2}$ group,
from the cocycle $\alpha_d$ in $\H^d(\Z_n,\U(1))$ to the coboundary 
\bea
\text{$\tilde{\alpha}_d=\delta \tilde{\beta}_{d-1}$ in
$\H^d(\Z_{n^2},\U(1))$.}
\eea

More explicitly, for the 
$\alpha_d$ given by
\bea
\alpha_d=
 \exp \big( \ii \frac{2 \pi}{n}  \int_{M^d} ( 
   {A_{\Z_n}} (\beta_{(n,n)} {A_{\Z_n}})^{\frac{d-1}{2}}
 ) \big),
\eea  
we have $\beta_{d-1}$ with 
$\alpha_d= \delta \beta_{d-1}$,
obtained in Appendix \ref{sec:cochain}, which suggests a construction of the $d$d iTFT on the bulk $d$-manifold $M^d$
and the $(d-1)$d noninvertible TQFT 
on the $(d-1)$d boundary $M^{d-1}= \prt M^d$ 
with dynamical 
1-cochain gauge field $a_{\Z_n} \in C^1(\B\Z_n,\Z_n)$
and $(d-3)$-cochain
(dual) gauge field $b_{\Z_n} \in C^{d-3}(\B\Z_n,\Z_n)$, such that the full $d$d/$(d-1)$d coupled path integral is given by
\bea \label{eq:dd-d-1d}
&&\exp \big( \ii \frac{2 \pi}{n}  \int_{M^d} ( 
   {A_{\Z_n}} (\beta_{(n,n)} {A_{\Z_n}})^{\frac{d-1}{2}}
 ) \big) \cdot 
 \cr
&& \cdot \sum_{
 \substack{
 a_{\Z_n} \in C^1(\B\Z_n,\Z_n) \\
 b_{\Z_n} \in C^{d-3}(\B\Z_n,\Z_n)
 }}\exp \big( \ii \frac{2 \pi}{n}  \int_{M^{d-1}=\prt M^d} ( 
 b_{\Z_n} \dd a_{\Z_n}
 - b_{\Z_n} \beta_{(n,n)} {A_{\Z_n}}
 - a_{\Z_n} {A_{\Z_n}}
 (\beta_{(n,n)} {A_{\Z_n}})^{\frac{d-3}{2}}
  ) \big) \cr
  &=&\exp \big( \ii \frac{2 \pi}{n}  \int_{M^d} ( 
   {A_{\Z_n}} (\beta_{(n,n)} {A_{\Z_n}})^{\frac{d-1}{2}}
 ) \big) \cdot
 \cr
&& \cdot\sum_{
 \substack{
 a_{\Z_n} \in C^1(\B\Z_n,\Z_n) \\
 b_{\Z_n} \in C^{d-3}(\B\Z_n,\Z_n)
 }}\exp \big( \ii \frac{2 \pi}{n}  \int_{M^{d-1}=\prt M^d}(  
  a_{\Z_n}(\dd  b_{\Z_n}- {A_{\Z_n}} (\beta_{(n,n)} {A_{\Z_n}})^{\frac{d-3}{2}})-b_{\Z_n}\beta_{(n,n)} {A_{\Z_n}}
  ) \big).
\eea
This $d$d/$(d-1)$d coupled path integral analogously matches the discrete cocycle forms or cochain forms (e.g., \cite{Wang1405.7689})
derived in Appendix \ref{sec:cochain}, as the $d$-cocycle
$$\alpha_d(g_1,g_2,\dots,g_d)=\zeta_n^{g_1 [\frac{g_2+g_3}{n}]\cdots [\frac{g_{d-1}+g_d}{n}]}$$
and the $(d-1)$-cochain
$$
\tilde{\beta}_{d-1}(h_1,h_2,\dots,h_{d-1})=\zeta_n^{g_1k_2[\frac{g_3+g_4}{n}]\cdots[\frac{g_{d-2}+g_{d-1}}{n}]},
$$
where
$\zeta_n$ is an $n$-th root of unity
such as $\zeta_n = \exp(\frac{2 \pi \ii}{n})$,
with variables 
$g \in \Z_n$ and $k \in \Z_n$.

The $d$d bulk partition function on a $d$d manifold with a $(d-1)$d boundary is not gauge-invariant,
but the full $d$d/$(d-1)$d coupled path integral \eq{eq:dd-d-1d} is gauge-invariant under
\bea\label{eq:gauge-transformation-general}
&& A_{\Z_n} \mapsto A_{\Z_n} + \dd \lambda_{0,\Z_n}, \cr
&& a_{\Z_n} \mapsto a_{\Z_n}  + \dd \mu_{0,\Z_n}, \cr
&& b_{\Z_n}  \mapsto b_{\Z_n} +\lambda_{0,\Z_n}(\beta_{(n,n)}A_{\Z_n})^{\frac{d-3}{2}}  + \dd \mu_{d-4,\Z_n}, \cr
&& A_{\Z_n} \in\H^1(\B\Z_n,\Z_n)= \Z_n,\cr
&& a_{\Z_n} \in C^1(\B\Z_n,\Z_n),\cr
&&  b_{\Z_n} \in C^{d-3}(\B\Z_n,\Z_n),\cr
&&\lambda_{0,\Z_n} \in C^0(\B\Z_n,\Z_n),\cr
&&\mu_{0,\Z_n} \in C^0(\B\Z_n,\Z_n),\cr
&&\mu_{d-4,\Z_n} \in C^{d-4}(\B\Z_n,\Z_n).
\eea
Here, $C^k(\B\Z_n,\Z_n)$ is the group of $\Z_n$-valued $k$-cochains of the classifying space $\B\Z_n$. The cohomology group $\H^k(\B\Z_n,\Z_n)$ is defined as the quotient group $Z^k(\B\Z_n,\Z_n)/B^k(\B\Z_n,\Z_n)$ where $Z^k(\B\Z_n,\Z_n)$ is the group of $\Z_n$-valued $k$-cocycles of the classifying space $\B\Z_n$ and $B^k(\B\Z_n,\Z_n)$ is the group of $\Z_n$-valued $k$-coboundaries of the classifying space $\B\Z_n$.

Below we check that \eqref{eq:dd-d-1d} is gauge-invariant under \eqref{eq:gauge-transformation-general}.
Because $ \beta_{(n,n)} \dd \lambda_{0,\Z_n}=0$, $ \beta_{(n,n)} {A_{\Z_n}}$ and $\dd  b_{\Z_n}- {A_{\Z_n}} (\beta_{(n,n)} {A_{\Z_n}})^{\frac{d-3}{2}}$ are gauge-invariant under the gauge transformation \eqref{eq:gauge-transformation-general}, hence \eqref{eq:dd-d-1d} transforms under \eqref{eq:gauge-transformation-general} as
 \bea
&& \exp \big( \ii \frac{2 \pi}{n}  \int_{M^d} ( 
   {A_{\Z_n}} (\beta_{(n,n)} {A_{\Z_n}})^{\frac{d-1}{2}}
 ) \big) \cdot 
 \cr
&& \cdot\sum_{
 \substack{
 a_{\Z_n} \in C^1(\B\Z_n,\Z_n) \\
 b_{\Z_n} \in C^{d-3}(\B\Z_n,\Z_n)
 }}\exp \big( \ii \frac{2 \pi}{n}  \int_{M^{d-1}=\prt M^d}(  
 a_{\Z_n}(\dd  b_{\Z_n}- {A_{\Z_n}} (\beta_{(n,n)} {A_{\Z_n}})^{\frac{d-3}{2}})-b_{\Z_n}\beta_{(n,n)} {A_{\Z_n}}
  ) \big)\cr
  &\mapsto&
  \exp \big( \ii \frac{2 \pi}{n}  \int_{M^d} ( 
   ({A_{\Z_n}}+\dd \lambda_{0,\Z_n}) (\beta_{(n,n)} {A_{\Z_n}})^{\frac{d-1}{2}} 
 ) \big) \cdot 
 \cr
&& \cdot\sum_{
 \substack{
 a_{\Z_n} \in C^1(\B\Z_n,\Z_n) \\
 b_{\Z_n} \in C^{d-3}(\B\Z_n,\Z_n)
 }}\exp \big( \ii \frac{2 \pi}{n}  \int_{M^{d-1}=\prt M^d}(  
 (a_{\Z_n}+\dd\mu_{0,\Z_n})(\dd  b_{\Z_n}- {A_{\Z_n}} (\beta_{(n,n)} {A_{\Z_n}})^{\frac{d-3}{2}})\cr
 &&-(b_{\Z_n}+\lambda_{0,\Z_n}(\beta_{(n,n)}A_{\Z_n})^{\frac{d-3}{2}}  + \dd \mu_{d-4,\Z_n})\beta_{(n,n)} {A_{\Z_n}}
  ) \big).
 \eea
Since by the Stokes theorem, we have
\bea
\int_{M^{d-1}=\prt M^d}(\dd\mu_{0,\Z_n})(\dd  b_{\Z_n}- {A_{\Z_n}} (\beta_{(n,n)} {A_{\Z_n}})^{\frac{d-3}{2}})=0,
\eea
\bea
\int_{M^{d-1}=\prt M^d}(\dd \mu_{d-4,\Z_n})\beta_{(n,n)} {A_{\Z_n}}=0,
\eea
and 
\bea
\int_{M^{d-1}=\prt M^d}\lambda_{0,\Z_n}(\beta_{(n,n)}A_{\Z_n})^{\frac{d-1}{2}}=\int_{M^d}(\dd\lambda_{0,\Z_n})(\beta_{(n,n)}A_{\Z_n})^{\frac{d-1}{2}},
\eea
\eqref{eq:dd-d-1d} is gauge-invariant under \eqref{eq:gauge-transformation-general}.

\section{3+1d Nonperturbative Global Anomaly in 
$\Spin \times {\Z_{n}}$ for integer $n$ with $2 \nmid n$ and $3 \nmid n$}

\label{sec:anomaly-nmid23}

In this appendix, we explore the
{3+1d nonperturbative global anomaly for a Weyl fermion in 
$\Spin \times {\Z_{n}}$ for integer $n$ with $2 \nmid n$ and $3 \nmid n$}, compared with the fact that
\bea
 && \Omega_5^{\rm Spin}(\B\Z_{n}  )
\cong \tilde{\Omega}_5^{\rm SO}(\B\Z_{n}  )
\cong 
\Z_n \oplus 
\Z_n, \quad 2\nmid n, \quad 3\nmid n.
\eea
Here $\tilde{\Omega}_5^{\rm SO}(\B G):=\Omega_5^{\rm SO}(\B G)/\Omega_5^{\rm SO}$ is the reduced bordism group, modding out the $\Omega_5^{\rm SO}=\Omega_5^{\rm SO}(pt)$.

The perturbative local anomaly of U(1) charge $q=1$ left-handed Weyl fermion 
of $\Spin \times \U(1)$ symmetry
in 3+1d or 4d
is captured by a 5d invertible field theory (iTFT) with the anomaly index $k=1$:
\begin{equation}
   \exp( \ii k  \int_{M^5} A \frac{c_1^2}{6}-A \frac{p_1}{24}).
\end{equation} 
Now we redefine the U(1) gauge field $A$ as a $\Z_n$ gauge field $ A_{\Z_n} \in\H^1(\B\Z_n,\Z_n)= \Z_n$
with the following replacement:
\bea
A &\mapsto& \frac{2 \pi}{n}   A_{\Z_n}.\cr 
c_1 = \frac{\dd A}{2 \pi} &\mapsto& 
\frac{\dd   A_{\Z_n}}{n} \equiv \beta_{(n,n) } A_{\Z_n}. 
\eea
The $\beta_{(n,m)}: \H^*(-,\Z_m) \mapsto 
\H^{*+1}(-,\Z_n) $ 
is the Bockstein 
 homomorphism associated with the extension
 $\Z_n \stackrel{\cdot m}{\to} \Z_{nm} \to \Z_m$. Thus we get the 5d topological invariant of the $\Spin \times \Z_n$ as:
\bea\label{eq:Spin-Zn}
   &&\exp \big( \ii 2 \pi k  \int_{M^5} 
   ( \frac{1}{6n} 
   {A_{\Z_n}} (\beta_{(n,n)} {A_{\Z_n}}) (\beta_{(n,n)} {A_{\Z_n}})
   -\frac{1}{ 24n} {A_{\Z_n}} p_1 ) \big).
\eea
Since the anomaly of 4d Weyl fermion with symmetry $\Spin\times\Z_n$ does not contain 2-torsion and 3-torsion for $2\nmid n$ and $3\nmid n$ \cite{Kapustin1406.7329, 2016arXiv160406527F, GarciaEtxebarriaMontero2018ajm1808.00009, 1808.02881, GuoJW1812.11959, 2506.19710}, we can regard $6=2\cdot3$ and $24=2^3\cdot3$ as invertible in $\Z_n$ for $2\nmid n$ and $3\nmid n$.

In fact, for $2\nmid n$ and $3\nmid n$, by the Chinese remainder theorem, there exists an integer $x_n$ such that $x_n=1\mod n$ and $24|x_n$. Then we can rewrite \eqref{eq:Spin-Zn} as 
\bea \label{eq:D5}
\exp \big( \ii \frac{2 \pi k }{n} \int_{M^5} 
   ( \frac{x_n}{6} 
   {A_{\Z_n}} (\beta_{(n,n)} {A_{\Z_n}}) (\beta_{(n,n)} {A_{\Z_n}})
   -\frac{x_n}{ 24} {A_{\Z_n}} p_1 ) \big).
\eea
Since $\gcd(n,x_n)=1$,
$\gcd(n,\frac{x_n}{6})=1$, and 
$\gcd(n,\frac{x_n}{24})=1$, 
the first term 
and 
the second term
individually in \eq{eq:D5}, as well as the combined two terms
in \eq{eq:D5},
all generate a $\Z_n$ class.
We expect 
\eq{eq:D5}
as a schematic way to write one of the two cobordism invariant generators of 
$\Omega_5^{\rm Spin}(\B\Z_{n}  )
\cong 
\Z_n \oplus 
\Z_n,$ with $2\nmid n$ and $3\nmid n.$

\section{3+1d Nonperturbative Global Anomaly in 
$\Spin \times \Z_{3^r} = \Spin \times_{\Z_2^\rF} {\Z_{2 \cdot 3^r}^\rF}$}

\label{sec:Z3r}

In this appendix, we explore the
{3+1d nonperturbative global anomaly for a Weyl fermion in 
$\Spin \times \Z_{3^r} = \Spin \times_{\Z_2^\rF} {\Z_{2 \cdot 3^r}^\rF}$}, compared with the fact that
\bea\label{eq:E1}
&&\Omega_5^{{\rm Spin} \times \Z_{3^r}}
\cong
\Omega_5^{{\rm Spin} \times_{\Z_2^\rF} {\Z_{2 \cdot  3^r 
}}} \cong
 \tilde{\Omega}_5^{\rm SO}(\B\Z_{3^r }  )
 \cr
&&=
\Z_{3^{r+1}}\oplus \Z_{3^{r-1}} 
.  
\eea
Here $\tilde{\Omega}_5^{\rm SO}(\B G):=\Omega_5^{\rm SO}(\B G)/\Omega_5^{\rm SO}$ is the reduced bordism group, modding out the $\Omega_5^{\rm SO}=\Omega_5^{\rm SO}(pt)$.

We also prove that the $k=3^r$ anomaly of the 4d Weyl fermion with $\Spin\times\Z_{3^r}$ symmetry can be trivialized by a $\Z_3$ extension.

The perturbative local anomaly of U(1) charge $q=1$ left-handed Weyl fermion 
of $\Spin \times \U(1)$ symmetry
in 3+1d or 4d
is captured by a 5d invertible field theory (iTFT) with the anomaly index $k=1$:
\begin{equation}
   \exp( \ii k  \int_{M^5} A \frac{c_1^2}{6}-A \frac{p_1}{24}).
\end{equation} 
Now we redefine the U(1) gauge field $A$ as a $\Z_{3^r}$ gauge field $ A_{\Z_{3^r}} \in\H^1(\B\Z_{3^r},\Z_{3^r})= \Z_{3^r}$
with the following replacement:
\bea
A &\mapsto& \frac{2 \pi}{3^r}   A_{\Z_{3^r}}.\cr 
c_1 = \frac{\dd A}{2 \pi} &\mapsto& 
\frac{\dd   A_{\Z_{3^r}}}{3^r} \equiv \beta_{(3^r,3^r) } A_{\Z_{3^r}}. 
\eea
The $\beta_{(n,m)}: \H^*(-,\Z_m) \mapsto 
\H^{*+1}(-,\Z_n) $ 
is the Bockstein 
 homomorphism associated with the extension
 $\Z_n \stackrel{\cdot m}{\to} \Z_{nm} \to \Z_m$. Thus we get the 5d topological invariant of the $\Spin \times \Z_{3^r}$ as:
\bea\label{eq:Spin-Z3r}
   &&\exp \big( \ii 2 \pi k  \int_{M^5} 
   ( \frac{1}{2\cdot 3^{r+1}} 
   {A_{\Z_{3^r}}} (\beta_{(3^r,3^r)} {A_{\Z_{3^r}}}) (\beta_{(3^r,3^r)} {A_{\Z_{3^r}}})
   -\frac{1}{ 8\cdot 3^{r+1}} {A_{\Z_{3^r}}} p_1 ) \big).
\eea
Since the anomaly of 4d Weyl fermion with symmetry $\Spin\times\Z_{3^r}$ contains only 3-torsion \cite{Kapustin1406.7329, 2016arXiv160406527F, GarciaEtxebarriaMontero2018ajm1808.00009, 1808.02881, GuoJW1812.11959, 2506.19710}, we can regard $2$ and $8$ as invertible in $\Z_{3^{r+1}}$.

In fact, by the Chinese remainder theorem, there exists an integer $y_r$ such that $y_r=1\mod 3^{r+1}$ and $8|y_r$. So $\gcd(3^{r+1},y_r)=1$, $\gcd(3^{r+1},\frac{y_r}{2})=1$, and $\gcd(3^{r+1},\frac{y_r}{8})=1$. Then we can rewrite \eqref{eq:Spin-Z3r} as 
\bea\label{eq:E5}
\exp \big( \ii \frac{2 \pi k }{3^{r+1}} \int_{M^5} 
   ( \frac{y_r}{2} 
   {A_{\Z_{3^r}}} (\beta_{(3^r,3^r)} {A_{\Z_{3^r}}}) (\beta_{(3^r,3^r)} {A_{\Z_{3^r}}})
   -\frac{y_r}{ 8} {A_{\Z_{3^r}}} p_1 ) \big).
\eea

Since $A_{\Z_{3^r}}=A_{\Z_3}\mod3$ (see footnote \ref{footnote-mod3}) and $A_{\Z_3}p_1=0\mod3$ \cite{tomonaga1964mod,tomonaga1965pontryagin} (see Appendix \ref{sec:proof} for the proof), we have $A_{\Z_{3^r}}p_1=0\mod3$ and we can rewrite \eq{eq:E5} as
\bea\label{eq:E6}
\exp \big( \ii \frac{2 \pi k }{3^{r+1}} \int_{M^5} 
   ( \frac{y_r}{2} 
   {A_{\Z_{3^r}}} (\beta_{(3^r,3^r)} {A_{\Z_{3^r}}}) (\beta_{(3^r,3^r)} {A_{\Z_{3^r}}})
   -3\cdot\frac{y_r}{ 8} \frac{{A_{\Z_{3^r}}} p_1}{3} ) \big).
\eea
The $k=3^r$ anomaly of the 4d Weyl fermion with $\Spin\times\Z_{3^r}$ symmetry is 
\bea\label{eq:E7}
\exp(\ii \frac{2\pi}{3}\int_{M^5}\frac{y_r}{2}{A_{\Z_{3^r}}} (\beta_{(3^r,3^r)} {A_{\Z_{3^r}}}) (\beta_{(3^r,3^r)} {A_{\Z_{3^r}}})).
\eea
Since ${A_{\Z_{3^r}}} (\beta_{(3^r,3^r)} {A_{\Z_{3^r}}}) (\beta_{(3^r,3^r)} {A_{\Z_{3^r}}})={A_{\Z_{3}}} (\beta_{(3,3)} {A_{\Z_{3}}}) (\beta_{(3,3)} {A_{\Z_{3}}})\mod3$\footnote{
Physically, if $l\mid n$, when the anomaly is evaluated modulo $l$, only the
$\mathbb Z_l$ reduction of the $\mathbb Z_{n}$ background gauge field is
relevant. Let $A_{\mathbb Z_{n}}\in \H^1(\B\mathbb Z_{n},\Z_{n})$ be the
background field for a $\mathbb Z_{n}$ symmetry. Under the natural quotient
$\mathbb Z_{n}\to\mathbb Z_l$, it reduces to the $\mathbb Z_l$ background
field $A_{\mathbb Z_l}$. The associated Bockstein homomorphism is compatible with
this reduction, i.e. the diagram
\bea
\xymatrix{
\H^*(-,\Z_{n}) \ar[d]^{\mathrm{mod}\,l}
\ar[r]^{\beta_{(n,n)}} &
\H^{*+1}(-,\Z_{n}) \ar[d]^{\mathrm{mod}\,l} \\
\H^*(-,\Z_l) \ar[r]^{\beta_{(l,l)}} &
\H^{*+1}(-,\Z_l)
}
\eea
commutes. Consequently, after reduction modulo $l$, every occurrence of
$A_{\mathbb Z_{n}}$ and $\beta_{(n,n)}$ may be replaced by
$A_{\mathbb Z_l}$ and $\beta_{(l,l)}$, respectively. Therefore
\bea
A_{\mathbb Z_{n}}
\,(\beta_{(n,n)}A_{\mathbb Z_{n}})^\ell
\equiv
A_{\mathbb Z_l}
\,(\beta_{(l,l)}A_{\mathbb Z_l})^\ell
\pmod l,
\eea
for arbitrary integer power $\ell$.
In particular, this applies to $n=3^r$, $l=3$ and
$\ell=2$ in our case.
\label{footnote-mod3}}, this term generates a $\Z_3$ class. Therefore, the combined two terms in \eq{eq:E6} generate a $\Z_{3^{r+1}}$ class.
We expect 
\eq{eq:E6}
as a schematic way to write the first one of the two cobordism invariant generators of 
$\Omega_5^{\rm Spin}(\B\Z_{3^r}  )
\cong 
\Z_{3^{r+1}} \oplus 
\Z_{3^{r-1}}$.

\subsection{Theorem}

By the results in Appendix \ref{sec:trivialized}, the $k=3^r$ anomaly \eq{eq:E7} 
of the 4d Weyl fermion  with $\Spin\times\Z_{3^r}$ symmetry, namely 
$k=3^r \in 
\Z_{3^{r+1}} 
\subset 
 \Omega_5^{\rm Spin}(\B\Z_{3^r }  )
\cong
\Z_{3^{r+1}}\oplus \Z_{3^{r-1}}$,
can be trivialized by a $\Z_3$ extension. The $\Z_3$ extension is minimal because \eq{eq:E7} is a nontrivial class. We have proved the following theorem:

\begin{theorem}
    The anomaly of $3^r$ copies of Weyl fermions with charge $q=1$ in $\TP_5(\Spin\times\Z_{3^r})$ becomes trivialized by the $\Z_3$ extension in $\TP_5(\Spin\times\Z_{3^{r+1}})$, for any positive integer $r \geq 1$:
    \bea
    1\to K=\Z_3\to G_{\rm{Tot}}=\Spin\times\Z_{3^{r+1}}\to G=\Spin\times\Z_{3^r}\to1,
    \eea
    and the $K=\Z_3$ extension is minimal.
\end{theorem}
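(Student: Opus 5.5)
The plan is to use the explicit 5d invertible-field-theory expression \eqref{eq:E6} for the anomaly and compute its pullback along the quotient map $f:\Spin\times\Z_{3^{r+1}}\to\Spin\times\Z_{3^r}$. First I set $k=3^r$ in \eqref{eq:E6}. The Pontryagin term then carries the phase
\[
\exp\Big(-\ii\frac{2\pi\, 3^r}{3^{r+1}}\cdot 3\cdot\frac{y_r}{8}\int_{M^5}\frac{A_{\Z_{3^r}}p_1}{3}\Big).
\]
This phase is trivial, because $A_{\Z_{3^r}}p_1/3$ is an integral (mod-$3$) class; that follows from $A_{\Z_3}p_1=0\mod 3$ (Appendix~\ref{sec:proof}) together with footnote~\ref{footnote-mod3}. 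So the $k=3^r$ anomaly reduces exactly to \eqref{eq:E7}. By footnote~\ref{footnote-mod3}, \eqref{eq:E7} equals $\exp\big(\ii\frac{2\pi}{3}\frac{y_r}{2}\int A_{\Z_3}(\beta_{(3,3)}A_{\Z_3})^2\big)$, where $A_{\Z_3}$ is the mod-$3$ reduction of $A_{\Z_{3^r}}$. In other words, the anomaly is pulled back from the group-cohomology class of $\H^5(\Z_3,\U(1))$ along $\Z_{3^r}\to\Z_3$.

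Second, I pull this back along $f$. The composite $\Z_{3^{r+1}}\to\Z_{3^r}\to\Z_3$ is the mod-$3$ reduction, and it factors as $\Z_{3^{r+1}}\to\Z_9\to\Z_3$. By Appendix~\ref{sec:trivialized}, and explicitly by the cochain $\tilde\beta_4$ of Appendix~\ref{sec:cochain} with $n=3$, the class $\alpha_5$ is already a coboundary after pulling back to $\Z_9$, so its further pullback to $\Z_{3^{r+1}}$ is trivial as well. There is also a more direct argument I would record as a check. The pulled-back $A_{\Z_3}$ is the mod-$3$ reduction of $A_{\Z_{3^{r+1}}}$, so it lifts to a $\Z_9$ class, namely $A_{\Z_{3^{r+1}}}\bmod 9$. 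Hence $\beta_{(3,3)}f^*A_{\Z_3}=0$ in cohomology, and the integrand $f^*\big(A_{\Z_3}(\beta_{(3,3)}A_{\Z_3})^2\big)$ vanishes. The case $r=1$ is exactly the $1\to\Z_3\to\Z_9\to\Z_3\to1$ extension of \Sec{Sec:Symmetry-Extension}.

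Third, minimality. The class \eqref{eq:E7} is $3^r\in\Z_{3^{r+1}}\subset\Omega_5^{\rm Spin}(\B\Z_{3^r})$, which is nonzero because $r<r+1$. So the trivial extension $K=1$ does not work. Every cyclic $K$ of order smaller than $3$ that still trivializes the anomaly would have to be nontrivial of order $2$. A $\Z_2$ extension only multiplies the class by a unit on the $3$-primary part, since $2$ is invertible mod $3^{r+1}$, so it cannot kill it. Hence $K=\Z_3$ is minimal.

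The main obstacle I anticipate is justifying that the schematic expression \eqref{eq:E6} really computes the bordism invariant in $\TP_5(\Spin\times\Z_{3^r})$, and that pulling back that expression computes $r^*\nu$. In particular, the step "the $p_1$ term is an integer multiple of $2\pi$" relies on $A_{\Z_{3^r}}p_1/3$ being a well-defined integral class rather than a formal quotient. I would handle this first by invoking the mod-$3$ Steenrod argument of Appendix~\ref{sec:proof}, which gives a canonical class $\frac{A_{\Z_3}p_1}{3}$ through the Wu-type formula $P^1_3=p_1\smile$, and then track that class through the pullback by naturality.
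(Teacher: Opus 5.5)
Your route is the paper's: set $k=3^r$ so the $p_1$ term drops out, identify \eqref{eq:E7} with the pullback of the $\Z_3$ group-cohomology class via footnote~\ref{footnote-mod3}, then apply Appendix~\ref{sec:trivialized}. \textbf{That identification step fails for $r\ge2$}, and your own ``direct check'' shows it.

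\textbf{Where it fails.} Write $\rho_l$ for reduction mod $l$. The footnote asserts $\beta_{(l,l)}\circ\rho_l=\rho_l\circ\beta_{(n,n)}$ for $l\mid n$. Comparing integral cochain lifts $\tilde x$ gives instead
$\beta_{(l,l)}\rho_l x=\frac{1}{l}\delta\tilde x=\frac{n}{l}\,\rho_l\beta_{(n,n)}x \bmod l$.
For $n=3^r$ and $l=3$ this reads $\beta_{(3,3)}A_{\Z_3}=3^{r-1}\rho_3\beta_{(3^r,3^r)}A_{\Z_{3^r}}$, which is $0$ when $r\ge2$.
\begin{itemize}
\item This is exactly your observation that $A_{\Z_3}$ lifts to $\Z_9$. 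But for $r\ge2$ that lift already exists on $\B\Z_{3^r}$ (take $A_{\Z_{3^r}}\bmod 9$), before any extension.
\item So your rewritten integrand $A_{\Z_3}(\beta_{(3,3)}A_{\Z_3})^2$ vanishes identically for $\Spin\times\Z_{3^r}$. That contradicts your minimality step.
\item Yet \eqref{eq:E7} itself is nonzero: on $S^5/\Z_{3^r}$ one has $\int A_{\Z_{3^r}}(\beta_{(3^r,3^r)}A_{\Z_{3^r}})^2=1\bmod 3^r$.
\item Equivalently, inflation $\H^5(\Z_3,\U(1))\to\H^5(\Z_{3^r},\U(1))$ is zero for $r\ge2$. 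The generator $c$ of $\H^2(\Z_3,\Z)$ maps to $3^{r-1}c'$, so $c^3\mapsto 3^{3r-3}c'^3=0$ in $\H^6(\Z_{3^r},\Z)\cong\Z_{3^r}$.
\end{itemize}
Hence \eqref{eq:E7} is \emph{not} pulled back from $\Z_3$, and the factorization through $\Z_9\to\Z_3$ says nothing about it. As written, your argument (and the paper's, which rests on the same footnote) is valid only for $r=1$.

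\textbf{The repair: keep the $\Z_{3^r}$ Bockstein.} Let $\omega=\exp\big(\frac{2\pi\ii}{3^r}\int A_{\Z_{3^r}}(\beta_{(3^r,3^r)}A_{\Z_{3^r}})^2\big)$, the generator of $\H^5(\Z_{3^r},\U(1))\cong\Z_{3^r}$. Written additively, \eqref{eq:E7} is $\frac{y_r}{2}\,3^{r-1}\omega$, the element of order $3$.
\begin{itemize}
\item Along $f$ one has $f^*A_{\Z_{3^r}}=\rho_{3^r}A_{\Z_{3^{r+1}}}$.
\item The same lift comparison gives $f^*\beta_{(3^r,3^r)}A_{\Z_{3^r}}=3\,\rho_{3^r}\beta_{(3^{r+1},3^{r+1})}A_{\Z_{3^{r+1}}}$, which vanishes mod $3$.
\item Therefore the pullback of \eqref{eq:E7} along $f$ is trivial. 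Equivalently, $f^*\omega=27\,\omega'$ with $\omega'$ the analogous generator for $\Z_{3^{r+1}}$, and $3^{r-1}\cdot27=3^{r+2}\equiv0\bmod 3^{r+1}$.
\end{itemize}
This is the correct form of your direct check.

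\textbf{What is fine.} Your treatment of the $p_1$ term only needs $\rho_3$ of $A$ itself, not of its Bockstein, together with $P^1_3=0$ in degree $1$. Your minimality argument also stands.
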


\section{3+1d Nonperturbative Global Anomaly in 
$ \Spin \times_{\Z_2^\rF} {\Z_{2 \cdot 2^p}^\rF}$}

\label{sec:Z2p}

In this appendix, we explore the
{3+1d nonperturbative global anomaly for a Weyl fermion in 
$ \Spin \times_{\Z_2^\rF} {\Z_{2 \cdot 2^p}^\rF}$}, compared with the fact that
\bea
\Omega_5^{{\rm Spin} \times_{\Z_2^\rF} {\Z_{2 \cdot  2^p 
}}} 
=
\Z_{2^{p+3}}\oplus \Z_{2^{p-1}} 
.  
\eea

The perturbative local anomaly of U(1) charge $q=1$ left-handed Weyl fermion 
of $\Spin^c=\Spin \times_{\Z_2^{\rF}} \U(1)$ symmetry
in 3+1d or 4d
is captured by a 5d invertible field theory (iTFT) with the anomaly index $k=1$:
\begin{equation}
   \exp( \ii k  \int_{M^5} A' \frac{(2c_1)^2}{48}-A' \frac{p_1}{48}).
\end{equation} 
Here, $c_1'=2c_1=\frac{\dd A'}{2\pi}$ is the first Chern class of the $\U(1)/\Z_2^{\rF}$ bundle.
Now we redefine the U(1) gauge field $A'$ as a $\Z_{2^p}$ gauge field $ A_{\Z_{2^p}} \in\H^1(\B\Z_{2^p},\Z_{2^p})= \Z_{2^p}$
with the following replacement:
\bea
A' &\mapsto& \frac{2 \pi}{2^p}   A_{\Z_{2^p}}.\cr 
c_1' = \frac{\dd A'}{2 \pi} &\mapsto& 
\frac{\dd   A_{\Z_{2^p}}}{2^p} \equiv \beta_{(2^p,2^p) } A_{\Z_{2^p}}. 
\eea
The $\beta_{(n,m)}: \H^*(-,\Z_m) \mapsto 
\H^{*+1}(-,\Z_n) $ 
is the Bockstein 
 homomorphism associated with the extension
 $\Z_n \stackrel{\cdot m}{\to} \Z_{nm} \to \Z_m$. Thus we get the 5d topological invariant of the $\Spin \times_{\Z_2^{\rF}} \Z_{2\cdot 2^p}$ as:
\bea\label{eq:Spin-Z2p}
   &&\exp \big( \ii 2 \pi k  \int_{M^5} 
   ( \frac{1}{3\cdot 2^{p+4}} 
   {A_{\Z_{2^p}}} (\beta_{(2^p,2^p)} {A_{\Z_{2^p}}}) (\beta_{(2^p,2^p)} {A_{\Z_{2^p}}})
   -\frac{1}{ 3\cdot 2^{p+4}} {A_{\Z_{2^p}}} p_1 ) \big).
\eea
Since the anomaly of 4d Weyl fermion with symmetry $\Spin\times_{\Z_2^{\rF}}\Z_{2\cdot 2^p}$ contains only 2-torsion \cite{Kapustin1406.7329, 2016arXiv160406527F, GarciaEtxebarriaMontero2018ajm1808.00009, 1808.02881, GuoJW1812.11959, 2506.19710}, we can regard $3$ as invertible in $\Z_{2^{p+3}}$.

The bundle constraint for the $\Spin \times_{\Z_2^{\rF}} \Z_{2\cdot 2^p}$ structure is
\bea
w_2=\beta_{(2^p,2^p)} {A_{\Z_{2^p}}}\mod2
\eea
where $w_2$ is the second Stiefel-Whitney class of the tangent bundle $TM$. 
Therefore, 
\bea
{A_{\Z_{2^p}}} (\beta_{(2^p,2^p)} {A_{\Z_{2^p}}}) (\beta_{(2^p,2^p)} {A_{\Z_{2^p}}}={A_{\Z_{2^p}}} w_2^2={A_{\Z_{2^p}}} p_1\mod2.
\eea
So we can rewrite \eqref{eq:Spin-Z2p} as
\bea\label{eq:Spin-Z2p-rewrite}
   &&\exp \big( \ii 2 \pi k \frac{1}{3\cdot 2^{p+3}} \int_{M^5} 
   \frac{1}{2}(  
   {A_{\Z_{2^p}}} (\beta_{(2^p,2^p)} {A_{\Z_{2^p}}}) (\beta_{(2^p,2^p)} {A_{\Z_{2^p}}})
   - {A_{\Z_{2^p}}} p_1 ) \big).
\eea
We expect 
\eq{eq:Spin-Z2p-rewrite}
as a schematic way to write the first one of the two cobordism invariant generators of 
$\Omega_5^{{\rm Spin} \times_{\Z_2^\rF} {\Z_{2 \cdot  2^p 
}}} 
=
\Z_{2^{p+3}}\oplus \Z_{2^{p-1}} $.

\section{Proof of $A_{\Z_3}p_1=0\mod3$}

\label{sec:proof}

In this appendix, we prove a general fact which implies $A_{\Z_3}p_1=0\mod3$ \cite{tomonaga1964mod,tomonaga1965pontryagin} and explain why 3 is special. Since $\H^1(\Z_n,\U(1))\cong \H^1(\Z_n,\Z_n)$, we can regard $A_{\Z_n}$ as the generator of $\H^1(\Z_n,\Z_n)$. The cup product $A_{\Z_n}p_1$ is a mod $n$ cohomology class on a manifold $M$ when $A_{\Z_n}$ is pulled back to $M$.

\subsection{A general fact about the mod $q$ Steenrod reduced power $P_q^r$
}\label{sec:proof-general}

Let $P_q^r:\H^i(-,\Z_q)\to \H^{i+2(q-1)r}(-,\Z_q)$ be the mod $q$ Steenrod reduced power where $q$ is an odd prime.

On an oriented $n$-manifold $M^n$, by the Poincar\'e duality, there exists $s_q^r\in \H^{2(q-1)r}(M^n,\Z_q)$ such that 
\bea
P_q^r(x)=s_q^r\smile x,\quad \forall x\in \H^{n-2(q-1)r}(M^n,\Z_q).
\eea
Let $P_q:=\sum_{r=0}^{\infty}P_q^r$ be the mod $q$ total Steenrod reduced power, and $s_q:=\sum_{r=0}^{\infty}s_q^r$.
We prove the following theorem \cite[Theorem 4.3]{hirzebruch1953steenrod} as its proof is hard to find in the literature.

\begin{theorem}\label{theorem-p1}
On an oriented $n$-manifold $M^n$, we have
\bea
P_q(s_q)=\sum_{j=0}^{\infty}b_{q,j}\mod q
\eea
where $\sum_{j=0}^{\infty}b_{q,j}=\prod_i(1+x_i^{q-1})$ and 
$\pm x_i$ are the Chern roots of the complexified tangent bundle $TM\otimes_{\R}\C$. In particular, for $q=3$, 
$b_{3,j}=p_j$ is the $j$-th Pontryagin class of $M^n$.
Equivalently, we have
\bea
\sum_{i+r=j}P_q^i s_q^r=b_{q,j}\mod q.
\eea
\end{theorem}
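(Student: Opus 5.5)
The plan is to adapt the classical proof of Wu's formula $w=\Sq(v)$, as in Milnor--Stasheff, to odd primes $q$, replacing $\Sq$ by the total reduced power $P_q$. I will take $M^n$ closed, since Poincar\'e duality is already used in defining $s_q$. The argument has two ingredients: a computation of $P_q$ on a Thom class, and a diagonal-class argument that identifies this with $P_q(s_q)$.

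\textbf{Step 1 (reduced power of a Thom class).} For an oriented real vector bundle $E$ with mod $q$ Thom class $U_E$, I would show
\[
P_q(U_E)=U_E\smile \prod_i\bigl(1+x_i^{q-1}\bigr),
\]
where $\pm x_i$ are the Chern roots of $E\otimes_\R\C$ and the product runs over one root from each pair. First take a complex line bundle $L$ with $x=c_1(L)$. The Thom class pulls back to $x$ under the zero section, and the standard formula $P_q(x)=x+x^q$ for degree-2 classes gives $P_q(U_L)=U_L(1+x^{q-1})$. For a general bundle, the Cartan formula $P_q(a\smile b)=P_q(a)\smile P_q(b)$, multiplicativity of Thom classes, and the splitting principle reduce to this case. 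The same reasoning covers an oriented real bundle, since the factor $1+x_i^{q-1}$ depends only on $x_i^2$ (because $q-1$ is even) and so is independent of the sign choice. Hence $\prod_i(1+x_i^{q-1})$ is a polynomial in the Pontryagin classes. For $q=3$ it equals $\prod_i(1+x_i^2)=\sum_j p_j$, which gives $b_{3,j}=p_j$.

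\textbf{Step 2 (diagonal class).} Let $u''\in \H^n(M\times M,\Z_q)$ be the Poincar\'e dual of the diagonal. Choose a basis $\{b_i\}$ of $\H^*(M,\Z_q)$ and its dual basis $\{b_i^\#\}$, so that $u''=\sum_i \pm\, b_i\times b_i^\#$. Recall two properties of $u''$:
\begin{itemize}
\item $(x\times 1)u''=(1\times x)u''$ for every $x$;
\item $\langle (1\times y)u'',[M\times M]\rangle$ recovers $y$ after slant product with $[M]$.
\end{itemize}
Moreover, a tubular neighborhood of the diagonal is the tangent bundle $TM$, and under this identification $u''$ corresponds to the Thom class $U_{TM}$.

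I would then compute $P_q(u'')$ in two ways.
\begin{enumerate}
\item Using Step 1 and naturality of $P_q$ under excision and the Thom isomorphism, $P_q(u'')=u''\smile\bigl(1\times\prod_i(1+x_i^{q-1})\bigr)$.
\item Using the Cartan formula on $\sum b_i\times b_i^\#$, apply the slant product with $[M]$. The defining relation $\langle P_q(x),[M]\rangle=\langle s_q\smile x,[M]\rangle$ converts reduced powers on the first factor into multiplication by $s_q$. Together with the symmetry $(x\times1)u''=(1\times x)u''$, this expresses the result as $u''\smile(1\times P_q(s_q))$, in the manner of Milnor--Stasheff Theorem~11.14.
\end{enumerate}
Comparing the two expressions and restricting to the diagonal, or taking slant products, gives $P_q(s_q)=\sum_j b_{q,j}$. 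Sorting by degree yields the equivalent form $\sum_{i+r=j}P_q^i s_q^r=b_{q,j}$.

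\textbf{Main obstacle.} I expect the hard part to be the second computation in Step 2. For odd $q$ there are sign conventions in the dual-basis expansion of $u''$ and in the graded commutativity of cross products. There is also a possible appearance of the conjugation $\chi(P_q)$ in place of $P_q$ when one moves $P_q$ from one factor to the other, coming from the Adem-relation anti-automorphism.

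To handle this, I would first verify carefully that the defining relation for $s_q$ passes through the slant product without introducing $\chi$. The reason to expect this is that each $P_q^r$ has even degree $2(q-1)r$, so sign issues should cancel. As a consistency check, I would compare against low-dimensional examples such as $M=\mathbb{CP}^2$ with $q=3$. There $s_3^1$ must equal $p_1=3h^2\equiv 0$, and this is consistent with $P^1_3(h^0)=0$ and with $P_3^1$ vanishing on $\H^0$.
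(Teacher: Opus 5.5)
Your plan is correct and is essentially the paper's proof. Both follow Milnor--Stasheff's argument for $w=\Sq(v)$: expand the diagonal class in a dual basis, use the Cartan formula and the slant product with $[M]$ to get $P_q(u'')/[M]=P_q(s_q)$, and compare with $\bigl((\sum_j b_{q,j})\times 1\bigr)\smile u''$, whose slant product is $\sum_j b_{q,j}$.

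Two small points. Your Step 1, which computes $P_q$ directly on the Thom class by the splitting principle, is the sound justification for the step that the paper instead infers from $P_q(e)$ on the Euler class. Of your two proposed endings, only the slant product works: restricting to the diagonal yields only $e\smile P_q(s_q)=e\smile\sum_j b_{q,j}$, which says nothing when $e=0$ (e.g.\ $n$ odd).
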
 

We mimic the proof for $\Sq(v)=w$ given in \cite{milnor1974characteristic} to prove the above theorem. Here, $\Sq$ is the total Steenrod square, $v$ is the total mod 2 Wu class, and $w$ is the total Stiefel-Whitney class.
\begin{proof}
    
Let $b_i\in \H^*(M)$ be a basis, and $b_i^{\sharp}\in \H^*(M)$ the dual basis such that $\langle b_i\smile b_j^{\sharp},[M]\rangle =\delta_{ij}$ where $[M]$ is the fundamental class of $M$ and $\langle,\rangle$ is the pairing between cohomology and homology classes.

Then for all $x\in \H^*(M)$, $x=\sum b_i\langle x\smile b_i^{\sharp},[M]\rangle$.
Apply this to $x=s_q$, then 
\bea
s_q=\sum b_i\langle s_q\smile b_i^{\sharp},[M]\rangle=\sum b_i\langle P_q(b_i^{\sharp}),[M]\rangle.
\eea
Therefore,
\bea
P_q(s_q)=\sum P_q(b_i)\langle P_q(b_i^{\sharp}),[M]\rangle.
\eea

Since each Chern root $x_i$ is a degree-2 cohomology class, \bea
P_q(x_i)=x_i+x_i^q=(1+x_i^{q-1})x_i\mod q.
\eea
So, by Cartan's formula, 
\bea
P_q(\prod_i x_i)=\prod_i (1+x_i^{q-1})\prod_i x_i\mod q.
\eea
Therefore, 
\bea
P_q(e)=(\sum_{j=0}^{\infty}b_{q,j})\smile e\mod q
\eea
where $e$ is the Euler class of $M$.
Equivalently,
\bea
P_q^j(e)=b_{q,j}\smile e\mod q.
\eea

Note that $e=\Delta^*(U)$ where $\Delta:M\to M\times M$ is the diagonal map and the diagonal cohomology class
\bea
U=\sum (-1)^{\dim b_i} b_i\times b_i^{\sharp}
\eea
such that 
\bea
U/[M]=\sum (-1)^{\dim b_i} b_i\langle b_i^{\sharp},[M]\rangle=1.
\eea
This is obtained by applying $x=\sum b_i\langle x\smile b_i^{\sharp},[M]\rangle$ to $x=1$ and noting that the only nonvanishing $\langle b_i^{\sharp},[M]\rangle$ occurs when $b_i$ has degree 0, so the sign $(-1)^{\dim b_i}$ disappears.
Here, the slant product is defined as $(a\times b)/[M]:=a\langle b,[M]\rangle$.

By Cartan's formula\footnote{Cartan's formula usually applies to the cup product. The cross product $a\times b$ is defined as $\pi_1^*a\smile \pi_2^*b$ where $\pi_i$ is the projection from $M\times M$ onto its $i$-th factor. Then 
\bea
P_q(a\times b)=P_q(\pi_1^*a\smile \pi_2^*b)=P_q(\pi_1^*a)\smile P_q(\pi_2^*b)=\pi_1^*P_q(a)\smile \pi_2^*P_q(b)=P_q(a)\times P_q(b).
\eea
}, we have 
\bea
P_q(U)=\sum (-1)^{\dim b_i} P_q(b_i)\times P_q(b_i^{\sharp})
\eea
and 
\bea
P_q(U)/[M]=\sum (-1)^{\dim b_i} P_q(b_i)\times P_q(b_i^{\sharp})/[M]=\sum (-1)^{\dim b_i} P_q(b_i)\langle P_q(b_i^{\sharp}),[M]\rangle.
\eea
Since the Steenrod reduced power $P_q$ increases the degree by an even integer, the only nonvanishing $\langle P_q(b_i^{\sharp}),[M]\rangle$ occurs when $b_i^{\sharp}$ has even codimension, \ie $b_i$ has even degree. So the sign $(-1)^{\dim b_i}$ disappears and
\bea
P_q(U)/[M]=\sum P_q(b_i)\langle P_q(b_i^{\sharp}),[M]\rangle.
\eea

On the other hand, since $P_q^j(e)=b_{q,j}\smile e\mod q$ and $e=\Delta^*(U)$, by \cite[Theorem 11.3 and Lemma 11.5]{milnor1974characteristic}, we have
\bea
P_q^j(U)=(b_{q,j}\times 1)\smile U\mod q,
\eea
hence
\bea
P_q(U)=((\sum_{j=0}^{\infty}b_{q,j})\times 1)\smile U\mod q.
\eea
Therefore, 
\bea
P_q(U)/[M]=(((\sum_{j=0}^{\infty}b_{q,j})\times 1)\smile U)/[M]=(\sum_{j=0}^{\infty}b_{q,j})\smile(U/[M])=(\sum_{j=0}^{\infty}b_{q,j})\smile 1=\sum_{j=0}^{\infty}b_{q,j}\mod q.
\eea
So we finally prove that
\bea
P_q(s_q)=\sum P_q(b_i)\langle P_q(b_i^{\sharp}),[M]\rangle=\sum_{j=0}^{\infty}b_{q,j}\mod q.
\eea
\end{proof}

\subsection{$A_{\Z_n}p_1=0\mod n$ on oriented manifolds if and only if $n=3$}\label{sec:proof-3}

\begin{proposition}
Let $A_{\mathbb Z_n}\in \H^1(\B\mathbb Z_n,\mathbb Z_n)$ be the canonical generator.
Then
\bea
A_{\mathbb Z_n}\,p_1=0 \pmod n
\eea
as a universal cohomology operation on oriented manifolds if and only if
$n\mid 3$.

In particular, among all nontrivial cyclic groups $\mathbb Z_n$, the only case
for which
\bea
A_{\mathbb Z_n}\,p_1=0 \pmod n
\eea
holds identically on all oriented manifolds is $n=3$.
\end{proposition}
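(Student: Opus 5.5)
The proof splits into an ``if'' direction, handled by the Steenrod reduced power, and an ``only if'' direction, handled by a test manifold.

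\emph{If direction.} For $n=1$ there is nothing to prove, so take $n=3$. Apply Theorem \ref{theorem-p1} with $q=3$. In degree $4$ it gives
\[
P_3^1 s_3^0+P_3^0 s_3^1=p_1 \pmod 3 .
\]
Since $s_3^0\in \H^0$ and $P_3^1$ vanishes on classes of degree $<2$, this reduces to $s_3^1=p_1 \bmod 3$. By the defining property of the Wu-type class $s_3^1$, we then have $P_3^1(x)=p_1\smile x$ for every $x\in \H^{\dim M-4}(M,\Z_3)$.

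The identity has to be applied to a class of complementary degree, so the argument runs as follows. On a closed oriented $5$-manifold, take $x=A_{\Z_3}$ pulled back along the classifying map. Then
\[
\langle A_{\Z_3}p_1,[M]\rangle=\langle P_3^1(A_{\Z_3}),[M]\rangle=0,
\]
because $\deg A_{\Z_3}=1<2$.

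To get the identity in all degrees, the plan is the following.
\begin{enumerate}
\item Note that $A_{\Z_3}p_1$ on an arbitrary oriented manifold is the pullback of the universal class $A\times p_1\in \H^5(\B\Z_3\times \B\SO,\Z_3)$.
\item Show that this universal class is already zero. It is enough that its pairing with every oriented bordism class $\Omega_5^{\SO}(\B\Z_3)$, and more generally with every mod-$3$ homology class, vanishes.
\item If a cleaner route is wanted, use $P_3^1(x)=p_1 x$ for top-complementary $x$ together with naturality; equivalently, invoke \cite{tomonaga1964mod,tomonaga1965pontryagin}.
\end{enumerate}
This universal upgrade is the \textbf{main obstacle}. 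What the paper uses in practice is the statement on $5$-manifolds, namely the vanishing of the characteristic number, and that case follows directly from the argument above.

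\emph{Only-if direction.} For $n\nmid 3$, evaluate on $M=S^1\times\mathbb{CP}^2$ with $A_{\Z_n}$ pulled back from the generator of $\H^1(S^1,\Z_n)$. Since $\langle p_1(\mathbb{CP}^2),[\mathbb{CP}^2]\rangle=3\sigma(\mathbb{CP}^2)=3$, we get
\[
\langle A_{\Z_n}p_1,[M]\rangle=3 \pmod n ,
\]
which is nonzero unless $n\mid 3$. Among nontrivial cyclic groups, this leaves only $n=3$.
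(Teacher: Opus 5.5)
Your only-if direction (evaluating on $S^1\times\mathbb{CP}^2$, where $\langle p_1,[\mathbb{CP}^2]\rangle=3$) matches the paper. Your $5$-dimensional if-argument is also correct and uses the paper's ingredients: $s_3^1\equiv p_1$ from the degree-$4$ part of Theorem~\ref{theorem-p1}, then $\langle A_{\Z_3}p_1,[M]\rangle=\langle P_3^1(A_{\Z_3}),[M]\rangle=0$.

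The genuine gap is the ``universal upgrade'' you call the main obstacle. It cannot be carried out, because the universal statement is false.
\begin{itemize}
\item Your step~2 fails outright. By the K\"unneth theorem, $A\times p_1$ is a nonzero class in $\H^5(\B\Z_3\times\B\SO,\Z_3)$, since $p_1$ is a polynomial generator of $\H^*(\B\SO,\Z_3)$. Checking pairings against $\Omega_5^{\SO}(\B\Z_3)$ only recovers the $5$d characteristic number; it does not show the class vanishes.
\item Concretely, take the closed oriented (indeed spin) $7$-manifold $S^1\times\mathbb{CP}^3$. There $p_1=4h^2\equiv h^2 \bmod 3$, so $A_{\Z_3}p_1=A\times h^2\neq 0$ in $\H^5(S^1\times\mathbb{CP}^3,\Z_3)$, while $P_3^1(A_{\Z_3})=0$.
\item Your step~3 cannot rescue this. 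The relation $P_3^1(x)=s_3^1\smile x$ comes from Poincar\'e duality and holds only for $x$ of degree $\dim M-4$. It is not natural under maps.
\end{itemize}

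What your argument actually proves is a restricted statement, and it holds for every closed oriented manifold of dimension at most $6$:
\begin{itemize}
\item In dimension $5$, applying your argument to all $x\in\H^1(M,\Z_3)$ gives $s_3^1=0$, hence even $p_1\equiv 0 \bmod 3$.
\item In dimension $6$, pair $A_{\Z_3}p_1$ with any $y\in\H^1$. The Cartan formula gives $P_3^1(A_{\Z_3}y)=0$, so Poincar\'e duality forces $A_{\Z_3}p_1=0$.
\end{itemize}
In particular the $5$d characteristic number vanishes, and that is all the anomaly computations use. You should state the if-part in this restricted form rather than leave the upgrade as an open step.

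The paper's own proof does not establish the universal version either. It invokes $x\smile s_3^1=P_3^1(x)$ ``for every cohomology class $x$'', which is valid for $x=A_{\Z_3}$ only when $\dim M=5$. The $S^1\times\mathbb{CP}^3$ example shows the relation fails otherwise. With the dimension restricted, your proof is complete and more careful on this point than the paper's.
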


\begin{proof}
We first show that $n=3$ satisfies the identity.

By Theorem \ref{theorem-p1}, on every oriented manifold one has
\bea
p_1=s_3^1 \pmod 3,
\eea
where $s_3^1$ is the first mod-$3$ Wu class.
Hence
\bea
A_{\mathbb Z_3}\,p_1
=
A_{\mathbb Z_3}\,s_3^1.
\eea
Using the defining property of Wu classes,
\bea
x\,s_3^1=P_3^1(x)
\eea
for every cohomology class $x$.
Taking $x=A_{\mathbb Z_3}$ gives
\bea
A_{\mathbb Z_3}\,p_1
=
P_3^1(A_{\mathbb Z_3}).
\eea
Since $\deg(A_{\mathbb Z_3})=1<2=\deg(P_3^1)$,
the unstable condition for Steenrod reduced powers implies
\bea
P_3^1(A_{\mathbb Z_3})=0.
\eea
Therefore
\bea
A_{\mathbb Z_3}\,p_1=0 \pmod 3.
\eea

It remains to show that no other nontrivial value of $n$ has this property.

Consider the oriented manifold
\bea
M=S^1\times \mathbb{CP}^2.
\eea
Let
\bea
A\in \H^1(S^1,\mathbb Z_n)
\eea
be the generator and denote its pullback to $M$ again by $A$.
Let $h\in \H^2(\mathbb{CP}^2,\mathbb Z)$ be the hyperplane class.
Since the total Chern class is
\bea
c(T\mathbb{CP}^2)=(1+h)^3,
\eea
we have
\bea
c_1=3h,
\qquad
c_2=3h^2.
\eea
Consequently,
\bea
p_1(T\mathbb{CP}^2)
=
c_1^2-2c_2
=
9h^2-6h^2
=
3h^2.
\eea
Hence
\bea
p_1(TM)=3\,\mathrm{pr}_2^*(h^2).
\eea
Here $\mathrm{pr}_2$ is the projection onto the second factor.

Evaluating $A\,p_1$ on the fundamental class of $M$ yields
\bea
\bigl\langle A\,p_1,[M]\bigr\rangle
&=&
\bigl\langle A,[S^1]\bigr\rangle
\bigl\langle p_1,[\mathbb{CP}^2]\bigr\rangle \\
&=&
1\cdot 3 \\
&=&
3
\pmod n.
\eea
Therefore $A_{\mathbb Z_n}p_1$ cannot vanish identically unless
\bea
3\equiv 0 \pmod n,
\eea
that is,
\bea
n\mid 3.
\eea

The only nontrivial positive integer satisfying this condition is $n=3$.
Therefore
\bea
A_{\mathbb Z_n}p_1=0 \pmod n
\eea
holds universally for oriented manifolds if and only if $n=3$.
\end{proof}

\subsection{$A_{\Z_2}p_1=0\mod 2$ on spin manifolds}

\begin{remark}
The proposition is specific to oriented manifolds $M$ (namely $w_1(TM)=0$).

If one restricts to spin manifolds $M$ (namely $w_1(TM)=w_2(TM)=0$, which are also oriented manifolds), the conclusion changes slightly.

Indeed, for every spin manifold $M$,
\bea
p_1(TM)=w_2(TM)^2=0\mod2.
\eea
Hence
\bea
A_{\mathbb Z_2}\,p_1(TM)
=
0
\pmod 2.
\eea
Thus
\bea
A_{\mathbb Z_2}\,p_1=0 \pmod 2
\eea
holds identically on all spin manifolds.

Combining this observation with the proposition, we find that
\bea
A_{\mathbb Z_n}\,p_1=0 \pmod n
\eea
holds universally on all spin manifolds for
$n=2$ and $n=3$.
For oriented manifolds, however, the only nontrivial case is
$n=3$.

For odd prime $q>3$, $\deg(s_q^1)=2(q-1)>4=\deg(p_1)$, so there is no similar result for odd prime $q>3$.
\end{remark}


\bibliography{BSM-TQDM-2512.bib}
\bibliographystyle{JHEP-YM}
\end{document}